\let\csname equation*\endcsname\relax
\let\csname endequation*\endcsname\relax
\newtheorem{theorem}{Theorem}[section]
\newtheorem{lemma}[theorem]{Lemma}
\newtheorem{proposition}[theorem]{Proposition}
\newtheorem{claim}{Claim}
\theoremstyle{definition}
\newtheorem{definition}[theorem]{Definition}
\newcommand{\Gin}{G_\text{in}}
\newcommand{\Gout}{G_\text{out}}
\newcommand{\Vin}{V_\text{in}}
\newcommand{\Vout}{V_\text{out}}
\newcommand{\Ein}{E_\text{in}}
\newcommand{\Eout}{E_\text{out}}
\newcommand{\EoutTilde}{\tilde{E}_\text{out}}
\newcommand{\nin}{n_\text{in}}
\newcommand{\nout}{n_\text{out}}
\newcommand{\noutTilde}{\bar{n}_\text{out}}
\newcommand{\Ain}{\mathbf{A}_\text{in}}
\newcommand{\Aout}{\mathbf{A}_\text{out}}
\newcommand{\AoutTilde}{\tilde{\mathbf{A}}_\text{out}}
\newcommand{\ssin}{s_\text{in}}
\newcommand{\ssout}{s_\text{out}}
\newcommand{\xin}{\mathbf{x}_\text{in}}
\newcommand{\xout}{\mathbf{x}_\text{out}}
\newcommand{\xoutTilde}{\tilde{\mathbf{x}}_\text{out}}
\newcommand{\ccin}{c_\text{in}}
\newcommand{\ccout}{c_\text{out}}
\newcommand{\Sk}{\mathrm{Sk}}
\begin{document}

\title[Integrability of Graph Embeddings]{Integrability of Generalised Skew-Symmetric Replicator Equations via Graph Embeddings}

\author{Matthew Visomirski$^1$ and Christopher Griffin$^2$}
\address{
	$^1$Department of Physics,
	The Pennsylvania State University,
    University Park, PA 16802
    }
\address{
	$^2$Applied Research Laboratory,
	The Pennsylvania State University,
    University Park, PA 16802
    }

\eads{\mailto{mav5583@psu.edu}, \mailto{griffinch@psu.edu}}
\date{\today~-~Preprint}

\begin{abstract} It is known that there is a one-to-one mapping between oriented directed graphs and zero-sum replicator dynamics (Lotka-Volterra equations) and that furthermore these dynamics are Hamiltonian in an appropriately defined nonlinear Poisson bracket.  In this paper, we investigate the problem of determining whether these dynamics are Liouville-Arnold integrable, building on prior work graph in graph decloning  by Evripidou  et al. [J. Phys. A., 55:325201, 2022] and graph embedding by Paik and Griffin [Phys. Rev. E. 107(5): L052202, 2024]. Using the embedding procedure from Paik and Griffin, we show (with certain caveats) that when a graph producing integrable dynamics is embedded in another graph producing integrable dynamics, the resulting graph structure also produces integrable dynamics. We also construct a new family of graph structures that produces integrable dynamics that does not arise either from embeddings or decloning. We use these results, along with numerical methods, to classify the dynamics generated by almost all oriented directed graphs on six vertices, with three hold-out graphs that generate integrable dynamics and are not part of a natural taxonomy arising from known families and graph operations. These hold-out graphs suggest more structure is available to be found. Moreover, the work suggests that oriented directed graphs leading to integrable dynamics may be classifiable in an analogous way to the classification of finite simple groups, creating the possibility that there is a deep connection between integrable dynamics and combinatorial structures in graphs. 
\end{abstract}

\vspace{2pc}
\noindent{\it Keywords}: integrable system, replicator equation, Lotka-Volterra equation, graphs

\submitto{\jpa}
\maketitle

\section{Introduction}

Conserved quantities are fundamental in both classical and modern physics with the correspondence between continuous symmetries and conserved quantities given by Noether's theorem \cite{A13}, providing a theoretical foundation for many of our theories \cite{W64,B06}. Dynamical systems exhibiting a sufficient number of conserved quantities (and generated by an appropriate Poisson bracket) are Liouville-Arnold integrable, admitting solutions that trace out foliating tori in phase space \cite{L-GPV12}. Integrable systems are rare, in the sense that a randomly chosen system of differential equations is unlikely to be Hamiltonian, let alone integrable. In light of Smale's comments and analysis \cite{S76}, it is surprising to find large families of integrable dynamics appearing in evolutionary dynamics, yet it is known that there are classes of Lotka-Volterra dynamics that are integrable in the Liouville-Arnold sense (see \cite{I87,I08,BIY08,EKV21,EKV22,PG23} for examples).

Let $\mathbf{A} \in \mathbb{R}^{n \times n}$ be a payoff (interaction) matrix. The replicator dynamics \cite{HS98,HS03} are the system of differential equations with form,
\begin{equation*}
    \dot{x}_i = x_i\left(\mathbf{e}_i^T\mathbf{A}\mathbf{x} - \mathbf{x}^T\mathbf{A}\mathbf{x}\right),
\end{equation*}
where $\mathbf{x} = \langle{x_1,\dots,x_n}\rangle$ is a vector of species proportions and $\mathbf{e}_i$ is the $i^\text{th}$ standard basis vector. As such $\mathbf{x} \in \Delta_{n-1}$, where,
\begin{equation*}
    \Delta_{n-1} = \left\{\mathbf{x} \in \mathbb{R}^n : \sum_i x_i = 1, x_i \geq 0\right\},
\end{equation*}
is the $n-1$ dimensional unit simplex embedded in $\mathbb{R}^n$. In a game-theoretic interpretation, $\mathbf{A}$ is a payoff (reward) matrix resulting from the interaction of two species; i.e., playing a game. See Hofbauer and Sigmund or \cite{HS98, HS03}, Weibull \cite{W97} or Tanimoto \cite{T15,T19} for a complete introduction to evolutionary games, and the replicator dynamics.

It is well known that the replicator dynamics are diffeomorphic to the generalised Lotka-Volterra equations \cite{HS98,HS03}. In the special case that $\mathbf{A}$ is skew-symmetric, the replicator equations simplify to,
\begin{equation}
    \dot{x}_i = x_i\left(\mathbf{e}_i^T\mathbf{A}\mathbf{x}\right),
    \label{eqn:Replicator}
\end{equation}
and in this case, we have an instance of the Lotka-Volterra equations with no need for a specialised diffeomorphism \cite{EKV22}. This class of evolutionary games models two-player, zero-sum games like rock-paper-scissors and was originally studied by Akin and Losert \cite{AL84}. Interestingly, this class of model is used extensively in theoretical ecology \cite{AL11,GBMA17,MCLM23} when the entries of $\mathbf{A}$ are restricted to $0, \pm 1$. Surprisingly, these models also find use in theoretical physics, with these dynamics occurring in the analysis of Schr\"{o}dinger operator \cite{VS93} in the work by Veselov and Shabat and in the discrete Korteweg-De Vries (KdV) equation (the Volterra lattice) as analysed by Moser \cite{M74} and Kac and Moerbeke \cite{KM75}, with generalisations by Bogoyavlenskij \cite{B88}.

Any skew-symmetric matrix $\mathbf{A}$ can be represented as an oriented directed graph $G_\mathbf{A} = (V,E)$, where $V = \{1,\dots,n\}$ and there is a directed edge $(i,j) \in E \subset V \times V$ precisely when $\mathbf{A}_{ij} < 0$ with corresponding weight $A_{ji}$. From a high-level ecological perspective, this interpretation is most sensible when we restrict the entries of $\mathbf{A}$ to $0, \pm 1$ and in this case, we have $A_{ij} = -1$ precisely when species $j$ preys on species $i$ as represented by the directed edge $(i,j)$ in $G_\mathbf{A}$. As a consequence of the mappings between the graph $G_\mathbf{A}$, the skew-symmetric matrix $\mathbf{A}$ and the replicator (Lotka-Volterra) dynamics given in \cref{eqn:Replicator}, we say that a graph $G_\mathbf{A}$ is (Liouville-Arnold) integrable if and only if the corresponding (Liouville-Arnold) dynamical system is integrable.

In this paper, we restrict our attention to the (partial) integrability of the replicator (Lotka-Volterra) dynamics arising from the case when the matrix $\mathbf{A}$ is both skew-symmetric and has entries $0$ or  $\pm 1$. In this work, we build on and use the prior work of Itoh \cite{I87,I08}, Bogoyavlenskij , Itoh and Yukawa \cite{BIY08} and Evripidou , Kassotakis and Vanhaecke \cite{EKV21,EKV22} and Griffin and Paik \cite{PG23} who along with others, have done extensive work on this class of systems \cite{CDPV15,DEKV17}. The main results of this paper, which we make precise in the sequel, are as follows.
\begin{enumerate}
\item  We characterise a new infinite family of integrable graphs that we call the \textit{skip vertex graphs}. This extends the work of Bogoyavlenskij  \cite{B88,BIY08} and hints at a potentially more general theory of integrability for these systems.

\item We show under certain conditions, if an inner graph $G_\text{in}$ is embedded in an outer graph $G_\text{out}$ and both graphs are integrable, then the resulting graph is integrable. We generalise this result to multiple simultaneous embeddings, defining the concept of an embedding graph.

\item Using these results, previous results on graph morphisms developed by  Evripidou  et al. \cite{EKV22}, as well as numerical methods, we organise the behaviour of all replicator (Lotka-Volterra) dynamics generated by directed graphs up to six vertices. Interestingly, there are still three ``hold-out'' graphs that produce integrable dynamics, but are not members of the integrable families we discuss, suggesting a more complete characterisation is needed.  
\end{enumerate}
This paper represents another step toward the ultimate goal of characterising the (zero-sum) replicator dynamics generated from directed graphs as a function of the graph structures themselves, largely begun with the early work on the Volterra lattice \cite{M74,KM75}. Our ultimate goal is to characterise the combinatorial \cite{I87,I08} structures that lead to integrable (or partially integrable) dynamics, and thus to provide a new and potentially deep connection between combinatorics and dynamics. 

The remainder of this paper is organised as follows: In \cref{sec:Prelim}, we provide preliminary definitions and results from prior work (e.g., \cite{I87,B88,EKV22,PG23}) that will be used throughout the rest of the paper. A new class of integrable graphs is presented in and the new class of integrable graphs in \cref{sec:SkipGraphs}. We present our main results on graph embeddings in \cref{sec:Embeddings} and discuss multiple simultaneous embeddings in \cref{sec:MultipleEmbeddings}. Using these results and numerical analysis, we provide the taxonomy of the dynamics arising from graphs containing up to six vertices in \cref{sec:Taxonomy}. Conclusions and future directions of research are presented in \cref{sec:Conclusion}. Mathematica notebooks containing code that assisted us in developing these results are provided as supplemental information (SI).

\section{Preliminary Definitions and Results} \label{sec:Prelim}
We provide mathematical preliminaries, most of which can be found in Evripidou  et al. \cite{EKV22} and Paik and Griffin \cite{PG23}. Material on Poisson structures is provided in detail in Laurent-Gengoux, Pichereau and Vanhaecke's book on the subject \cite{L-GPV12}. For brevity, we omit most details and refer the interested reader to the appropriate references. 

\subsection{Graph Operations}

We formalise the one-to-one relationship between directed graph structures and skew-symmetric matrices with entries restricted to $0,\pm1$. For this, recall that a directed graph $G = (V,E)$ is oriented the edge $(i,j)$ either does not appear in $E$ or if it does appear, then the edge $(j,i)$ does not appear in $E$. The following definition is adapted from Evripidou  et al. \cite{EKV22}. 

\begin{definition}[Skew Symmetric Graph] Let $\mathbf{A} \in \mathbb{R}^{n\times n}$ be a skew-symmetric matrix with entries consisting of only $0$ and $\pm 1$. The oriented directed graph $\mathbf{G}_\mathbf{A} = (V,E)$ has vertex set $V = \{1,\dots,n\}$ and edge set $E \subset V \times V$ so that,
\begin{equation*}
    (i, j) \in E \iff A_{ij} = -1.
\end{equation*}
\end{definition}
The following relationship is easy to see. 
\begin{proposition} There is a one-to-one correspondence between oriented directed graphs and skew-symmetric matrices with entries $0$,$\pm 1$. \hfill\qed
\end{proposition}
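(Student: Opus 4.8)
The plan is to establish the claimed bijection by exhibiting two explicit maps and showing they are mutual inverses. Given an oriented directed graph $G = (V, E)$ on $V = \{1, \dots, n\}$, define a matrix $\mathbf{A}(G) \in \mathbb{R}^{n \times n}$ entrywise by setting $A_{ij} = -1$ when $(i,j) \in E$, setting $A_{ij} = +1$ when $(j,i) \in E$, and setting $A_{ij} = 0$ otherwise. Conversely, the Skew Symmetric Graph construction of the preceding definition sends a skew-symmetric $0,\pm1$ matrix $\mathbf{A}$ to the graph $\mathbf{G}_\mathbf{A}$ with edge set $\{(i,j) : A_{ij} = -1\}$.

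First I would check that $\mathbf{A}(G)$ is well-defined and lands in the target class. Because $G$ is oriented, for any pair $i \ne j$ at most one of $(i,j)$, $(j,i)$ lies in $E$, so the three cases defining $A_{ij}$ are mutually exclusive and exhaustive; hence $A_{ij}$ is unambiguously a single value in $\{0, \pm 1\}$. Skew-symmetry $A_{ij} = -A_{ji}$ is immediate from the symmetry of the case definition under swapping $i$ and $j$ (the $(i,j) \in E$ case for the $(i,j)$ entry is exactly the $(j,i) \in E$ case for the $(j,i)$ entry), and in particular $A_{ii} = 0$ since a loop $(i,i)$ would violate orientedness. Second, I would verify the two round trips. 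Starting from a skew-symmetric $0,\pm1$ matrix $\mathbf{A}$, form $\mathbf{G}_\mathbf{A}$, then form $\mathbf{A}(\mathbf{G}_\mathbf{A})$: its $(i,j)$ entry is $-1$ iff $(i,j) \in E(\mathbf{G}_\mathbf{A})$ iff $A_{ij} = -1$, is $+1$ iff $(j,i) \in E(\mathbf{G}_\mathbf{A})$ iff $A_{ji} = -1$ iff $A_{ij} = +1$ (using skew-symmetry), and is $0$ otherwise iff $A_{ij} = 0$; so $\mathbf{A}(\mathbf{G}_\mathbf{A}) = \mathbf{A}$. Starting from an oriented graph $G$, form $\mathbf{A}(G)$, then form $\mathbf{G}_{\mathbf{A}(G)}$: its edge set is $\{(i,j) : \mathbf{A}(G)_{ij} = -1\} = \{(i,j) : (i,j) \in E\} = E$, so $\mathbf{G}_{\mathbf{A}(G)} = G$. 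Hence the two maps are mutually inverse bijections.

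I do not anticipate a genuine obstacle here; the statement is bookkeeping, which is presumably why the excerpt marks it ``easy to see.'' The one point requiring a little care—and the only place the hypotheses are actually used—is the interplay between the orientedness of $G$ (no $2$-cycles, no loops) and the skew-symmetry of $\mathbf{A}$ (no nonzero diagonal, $A_{ij}$ determined by $A_{ji}$): orientedness is exactly what makes $\mathbf{A}(G)$ well-defined as a single-valued skew-symmetric matrix, and skew-symmetry is exactly what guarantees the $+1$ entries of $\mathbf{A}$ are redundant data already encoded by the edge set, so that recording only the $-1$ positions (i.e., the edges) loses nothing. Making that observation explicit is the substance of the argument; everything else is a direct unwinding of definitions.
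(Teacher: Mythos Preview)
Your proposal is correct and is precisely the straightforward verification the paper has in mind; the paper itself gives no argument beyond ``easy to see'' and the \qed, so your explicit unwinding of the two mutually inverse maps is exactly the intended (and only reasonable) approach.
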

Evripidou  et al. \cite{EKV22} refer to this class of graphs as  skew-symmetric graphs. For the remainder of this paper, all graphs will be skew-symmetric and will just be called graphs. 

The ``embedding'' terminology used in the next definition is borrowed from the mathematical neuroscience community \cite{PMMC22,PALC+22}. It is somewhat unfortunate, since graph embeddings are a well-known part of topological graph theory \cite{GY18}. To reduce some confusion, we modify the terminology used in \cite{PMMC22,PALC+22} and \cite{PG23}.

\begin{definition}[Graph-to-Vertex Embedding] Let $\Gin = (\Vin,\Ein)$ and $G_\text{out}=(\Vout,\Eout)$ two graphs with disjoint vertex sets and suppose $v$ is a vertex in $\Gout$. Let $\EoutTilde \subset \Eout$ be the set of edges in $\Eout$ not containing $v$. The graph-to-vertex embedding graph $J = G_\text{in}\hookrightarrow_v G_\text{out}$ has vertex set $V = \Vin \sqcup (\Vout\setminus\{v\})$ and edge set $E = \Ein \sqcup \EoutTilde \sqcup E_\text{embed}$, where,
\begin{multline*}
    (i,j) \in E_\text{embed} \iff \\\left(i \in V_\text{in} \wedge j \in V_\text{out} \wedge (v,j) \in E_\text{out}\right) \vee \left(i \in V_\text{out} \wedge j \in V_\text{in} \wedge (i,v) \in E_\text{out}\right).
\end{multline*}
\label{def:Embed}
\end{definition}
Throughout the remainder of this paper, we will refer to this process simply as embedding, as there is now no chance of confusing it with topological graph embedding.

The process of (graph-to-vertex) embedding is illustrated in \cref{fig:GraphEmbedding} and was suggested by Paik (in \cite{PG23}) and inspired by Curto et al.'s cyclic union \cite{PMMC22,PALC+22}. Notice we are embedding into vertex $1$ of $\Gout$. In the resulting graph, we decorate the vertices remaining from $\Gout$ with tildes.
\begin{figure}[htbp]
\centering
\includegraphics[width=0.65\textwidth]{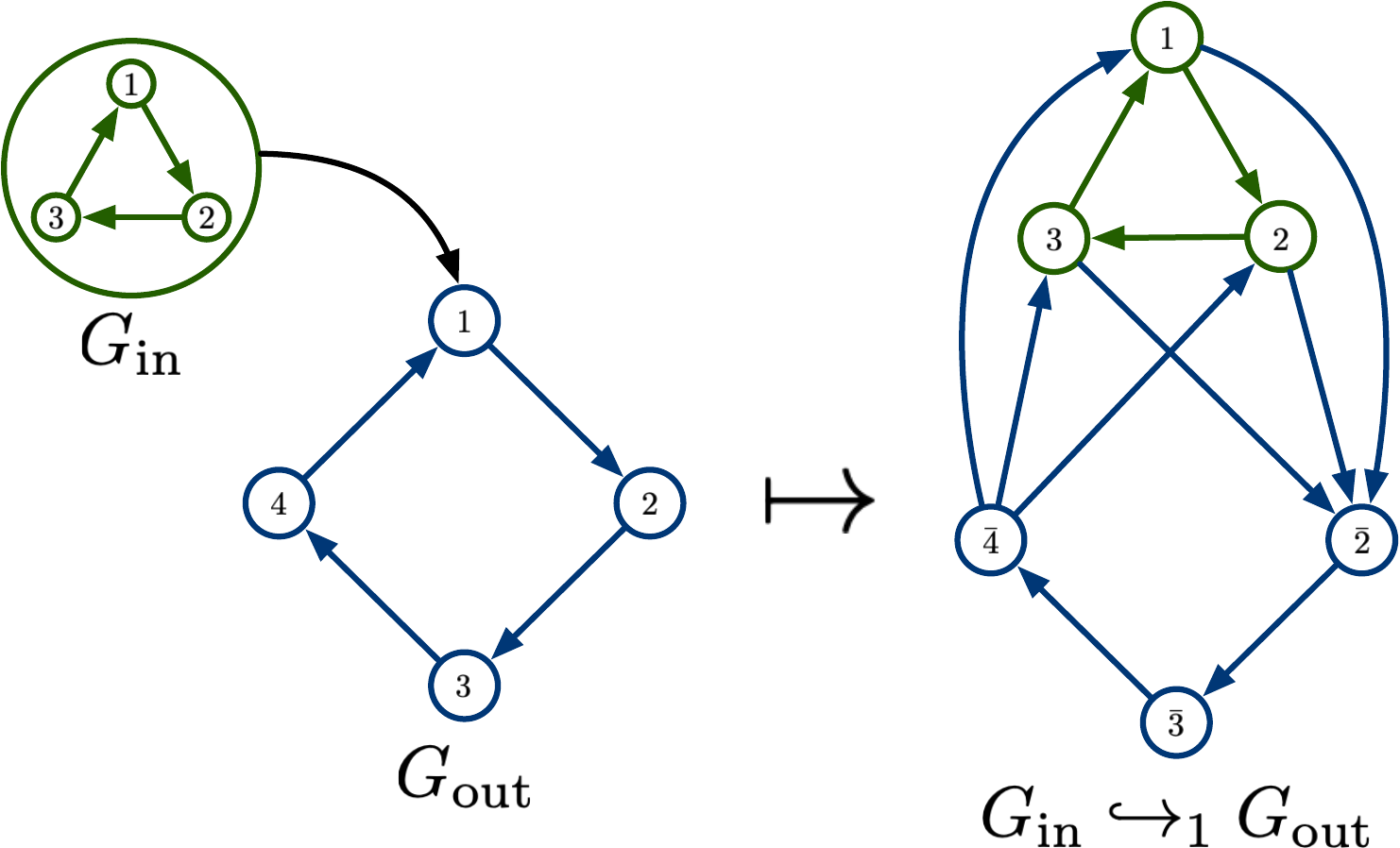}
\caption{The directed three-cycle is embedded into a directed four-cycle. The edges in the three cycle are preserved. The edges in the four-cycle to and from the replaced vertex are replicated to each vertex in the embedded three-cycle. The remaining vertices from the four-cycle are adorned with bars.}
\label{fig:GraphEmbedding}
\end{figure}
Let $\Ain \in \mathbb{R}^{\nin \times \nin}$ be the skew-symmetric matrix corresponding to $\Gin$ and let $\Aout \in \mathbb{R}^{\nout \times \nout}$ be the skew-symmetric matrix corresponding to $\Gout$. Let,
\begin{equation}
    \Aout = \begin{bmatrix} 0 & \mathbf{r}^T\\-\mathbf{r} & \AoutTilde\end{bmatrix},
    \label{eqn:Aout}
\end{equation}
where $\mathbf{r}^T$ is an appropriately sized row vector. Suppose we embed $\Gin$ into the vertex corresponding to the first row (column) of $\Aout$. That is, let $v = 1$ in \cref{def:Embed}. Then the skew-symmetric matrix for the graph embedding is given by,
\begin{equation}
    \mathbf{A} = \begin{bmatrix} \Ain & \mathbf{R}^T\\-\mathbf{R} & \AoutTilde
    \end{bmatrix},
\label{eqn:Aembed}
\end{equation}
with,
\begin{equation}
    \mathbf{R} = \begin{bmatrix} \mathbf{r} & \mathbf{r} & \cdots & \mathbf{r}\end{bmatrix},
\label{eqn:R}
\end{equation}
having $\nin$ columns all equal to $\mathbf{r}$. We will use this construction in the proof of \cref{thm:Embed}.

The following operation is defined by Evripidou  et al. \cite{EKV22} and will be used in our classification of dynamics arising from graphs with at most six vertices.   
\begin{definition}[Cloned Vertices] Suppose $G_\mathbf{A} = (V,E)$ is a graph corresponding to the matrix $\mathbf{A}$. A vertex $j \in V$ is a clone of $i \in V$ if $j$ and $i$ share the same neighbourhood in $G$. Equivalently, if row $i$ of $A$ and row $j$ of $A$ are identical, then $i$ and $j$ are clones.
\end{definition}

\begin{definition}[Irreducible Graph] A graph $G_\mathbf{A}$ is irreducible if $G_\mathbf{A}$ has no cloned vertices or equivalently if each row of $\mathbf{A}$ is unique.
\end{definition}

If a graph $G_\mathbf{A} = (V,E)$ has cloned vertices, then it is straightforward to see we can partition $V$ into subsets $V_1,\dots,V_n$ for some $n \geq 1$, where if $i,j \in V_k$, then $i$ and $j$ are clones. Since the vertices in $V_k$ share neighbourhoods, we can map $G_\mathbf{A}$ to a new irreducible graph $\tilde{G}_{\tilde{A}} = (\tilde{V},\tilde{E})$ with $\tilde{V} = \{1,\dots,n\}$ and $(i,j) \in \tilde{E}$ if and only if there is some $v \in V_i$ and $w \in V_j$ so that $(v,w) \in E$. It is straightforward to see that the new skew-symmetric matrix $\tilde{\mathbf{A}}$ is simply the matrix $\mathbf{A}$ with duplicate rows and their corresponding columns removed.

\begin{definition}[Decloning] Let $G_\mathbf{A}$ be a graph. The mapping $\eta : G_\mathbf{A} \mapsto \tilde{G}_A$ is the decloning operation.
\end{definition}

\cref{fig:Decloning} shows a 5 vertex graph on the left in which vertices $3$, $4$ and $5$ are clones. The decloned form of the graph is shown on the right.
\begin{figure}[htbp]
\centering
\includegraphics[width=0.65\textwidth]{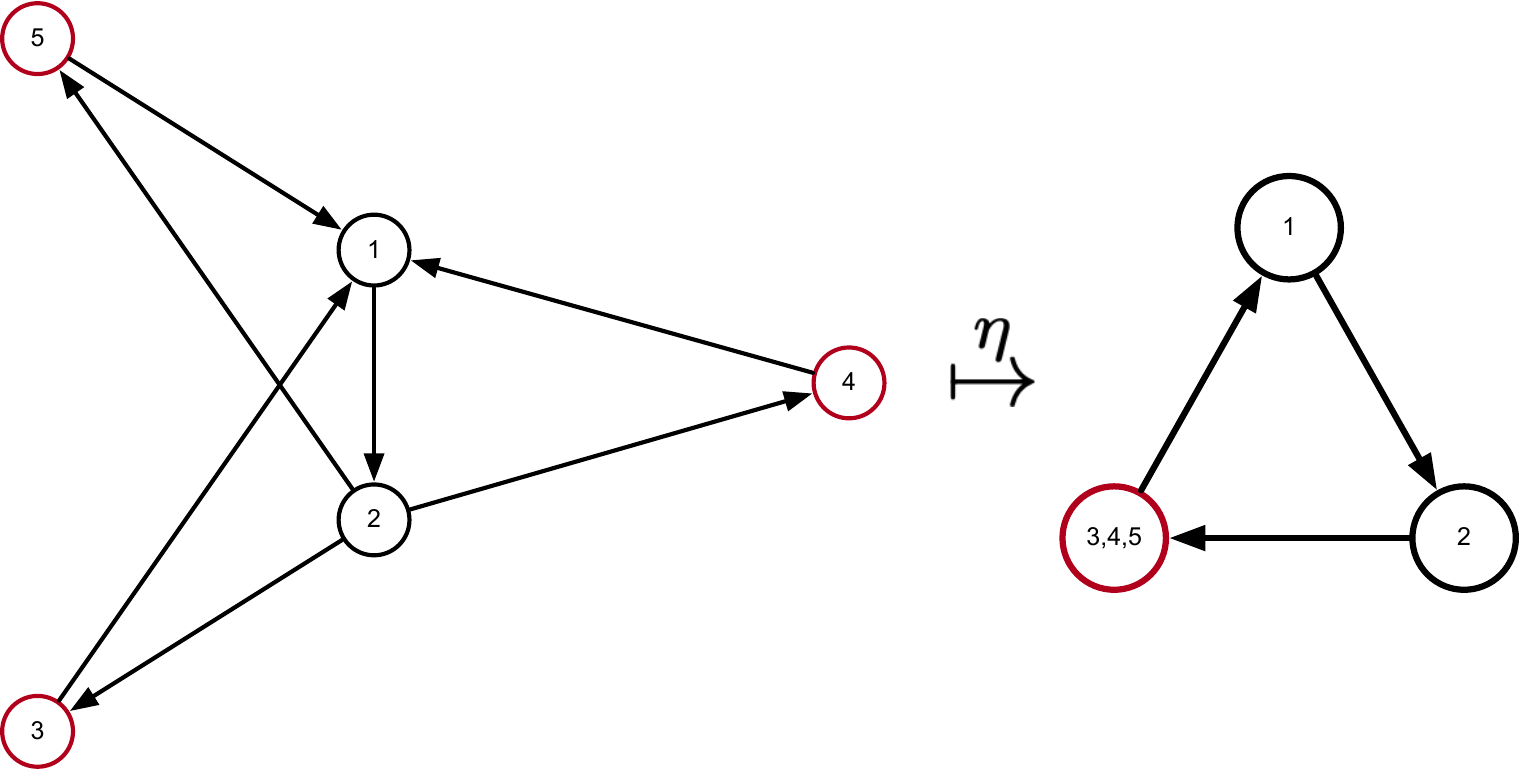}
\caption{An example of decloning a graph.}
\label{fig:Decloning}
\end{figure}
The decloning construction will be used as part of our classification scheme for the dynamics generated by graphs with fewer than 7 vertices.

We will frequently refer to a special class of graphs, sometimes referred to as Bogoyavlenskij -Itoh graphs, or sometimes just the Bogoyavlenskij  graphs \cite{EKV22}. 
\begin{definition}[Bogoyavlenskij  Graphs] The \textit{Bogoyavlenskij  graph} $B(n,k)$ where $k < \frac{n}{2}$ has vertex set $V = \{1,\dots,n\}$ and edge $E$ with edge $(i,j) \in E$ if $j = 1 + (i \oplus_n l)$ for $0 \leq l < k$, where $\oplus_n$ denotes addition modulo $n$.
\end{definition}
Put more simply, if we arrange the numbers in $\{1,\dots,n\}$ in a circle, then the graph $B(n,k)$ has edges from vertex $i$ to the next $k$ vertices working around the circle. This is illustrated in \cref{fig:B72}.
\begin{figure}[htbp]
\centering
\includegraphics[width=0.35\textwidth]{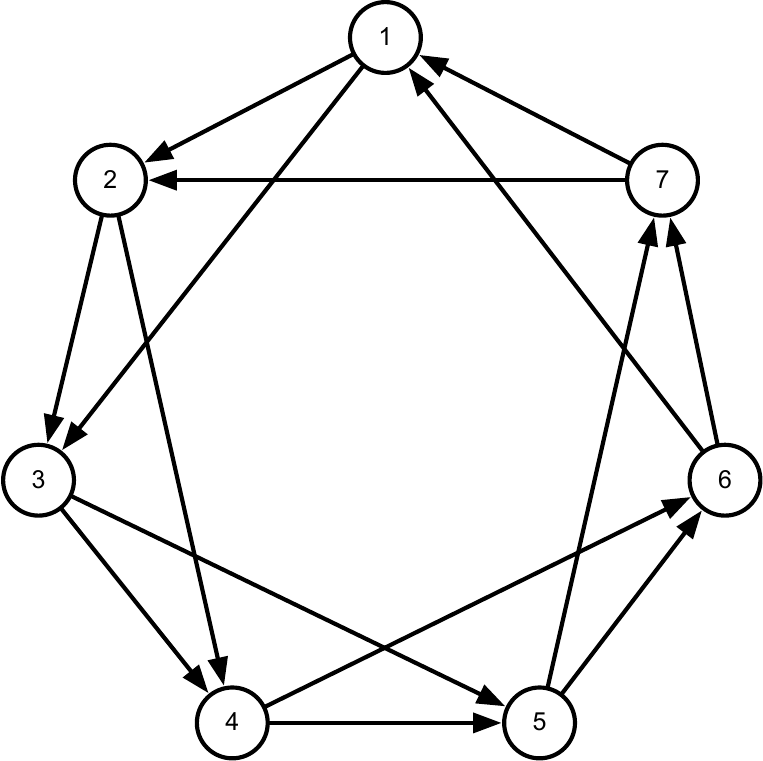}
\caption{The Bogoyavlenskij  graph $B(7,2)$ illustrates the definition of these graphs using geometric arrangement.}
\label{fig:B72}
\end{figure}

\subsection{Integrability of Replicator Systems}
Let $\mathbf{A} \in \mathbb{R}^{n \times n}$ be a skew-symmetric matrix and let $F,G:\mathbb{R}^n \to \mathbb{R}$ be differentiable functions. Define the bracket,
\begin{equation}
\{F,G\}_\mathbf{A} = \sum_{i < j} A_{ij}x_i x_j \left(\frac{\partial F}{\partial x_i} \frac{\partial G}{\partial x_j} - \frac{\partial G}{\partial x_i}\frac{\partial F}{\partial x_j}\right).
\label{eqn:Bracket}
\end{equation}
Direct computation shows that \cref{eqn:Replicator} is a Hamiltonian system with Hamiltonian $H_1(\mathbf{x}) = x_1 + \cdots + x_n$. That is, \cref{eqn:Replicator} can be rewritten as,
\begin{equation*}
    \dot{x}_i = \{x_i, H_1\}_\mathbf{A}.
\end{equation*}
and $H_1$ is a conserved quantity, i.e., $\dot{H}_1 = 0$, which is obvious for the replicator as we automatically have $H_1 = 1$, assuming we begin with an initial condition on $\Delta_{n-1}$. It is known that certain instances of the replicator (Lotka-Volterra) dynamics admit additional conserved quantities \cite{I87,I08,PG23}, which we will discuss in the sequel. As an interesting historical note, the nonlinear (quadratic) bracket defined in \cref{eqn:Bracket} is not well known in the broader literature and has been used and/or ``rediscovered'' on several occasions including (for example) by Itoh \cite{I87}, Griffin \cite{G21}, Ovsienko, Schwartz and Tabachnikov \cite{OST13}, even though it does appear in textbooks by Marsden and Ratiu \cite{MR13} and Laurent-Gengoux, Pichereau and Vanhaecke \cite{L-GPV12}.

The Poisson structure corresponding to this bracket,
\begin{equation*}
    \pi_\mathbf{A} = \sum_{i < j} A_{ij}x_i x_j\frac{\partial}{\partial x_i} \wedge \frac{\partial}{\partial x_j},
\end{equation*}
has been studied extensively by Evripidou  et al. \cite{EKV22} and is discussed in \cite{L-GPV12}. In particular, the rank of $\pi_\mathbf{A}$ precisely corresponds to the rank of $\mathbf{A}$. The following definition follows immediately from this fact and Def. 12.9 of \cite{L-GPV12}.

\begin{definition} Let $\mathbf{A} \in \mathbb{R}^{n \times n}$ be a skew-symmetric matrix with rank denoted $\mathrm{rank}(\mathbf{A})$. The replicator (Lotka-Volterra) dynamics, \cref{eqn:Replicator}, generated from $\mathbf{A}$ are \textit{integrable} if there are $s$ conserved quantities $H_1,\dots,H_s$ so that:
\begin{enumerate}
    \item $H_1,\dots,H_s$ are (algebraically) independent.
    \item $H_1,\dots,H_s$ are in involution, or commute under the bracket. That is, $\{H_i,H_j\}_\mathbf{A} = 0$ for all $i,j$. 
    \item The following relation among matrix rank (Poisson manifold rank), embedding dimension and number of conserved quantities holds,
    \begin{equation*}
        s + \tfrac{1}{2}\mathrm{Rank}(\mathbf{A}) = n.
    \end{equation*}
\end{enumerate}
\label{def:LAIntegrability}
\end{definition}
As noted in \cite{EKV22}, the Casimirs of the Poisson algebra corresponding to $\pi_\mathbf{A}$ can be read from the eigenvectors of $\mathbf{A}$ corresponding to the zero eigenvalue. To see this, note that if $\bm{\alpha} = \langle{\alpha_1,\dots,\alpha_n}\rangle$ is an eigenvector of $\mathbf{A}$ with eigenvalue $0$, then,
\begin{equation}
    \frac{d}{dt}\left(x_1^{\alpha_1} \cdots x_n^{\alpha_n}\right) = x_1^{\alpha_1} \cdots x_n^{\alpha_n}\left(\bm{\alpha}^T\mathbf{A}\mathbf{x}\right) = 0,
\label{eqn:TimeDerivativeOfProduct}
\end{equation}
and thus $x_1^{\alpha_1} \cdots x_n^{\alpha_n}$ is a conserved quantity of the system. It is worth noting that Akin and Losert's \cite{AL84} noted that these quantities were conserved in the context of zero-sum replicator dynamics, but studied them purely from the perspective of symplectic geometry in which they were treated as the Hamiltonians. In unrelated work, Hamiltonian structure in evolutionary games was also considered by Hofbauer \cite{H96}. Neither Hofbauer nor Akin and Losert consider general integrability. 

In general, there will be more conserved quantities than just the Casimirs, as shown in special cases by Kac and Moser \cite{KM75} and Moerbeke \cite{M74}, Bogoyavlenskij  \cite{B91,BIY08} and especially by Itoh \cite{I87,I08}, who provided a combinatorial analysis to the construction of conserved quantities in Bogoyavlenskij  graphs. In particular, the following theorem is immediate from the work of Itoh and  Bogoyavlenskij  \cite{I87,I08,BIY08}.
\begin{theorem} Let $G = B(n,k)$ be a Bogoyavlenskij  graph. Then $G$ admits a sufficient number of conserved quantities for integrability. \hfill\qed
\end{theorem}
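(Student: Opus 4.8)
The plan is to reduce the statement to the explicit constructions of conserved quantities due to Itoh and Bogoyavlenskij and then to verify that their number matches the count demanded by \cref{def:LAIntegrability}. First I would write down the skew-symmetric matrix $\mathbf{A}$ of $B(n,k)$ explicitly: $A_{ij}=-1$ exactly when $j-i \bmod n \in\{1,\dots,k\}$, and correspondingly $A_{ij}=+1$ when $i-j\bmod n\in\{1,\dots,k\}$, so that \cref{eqn:Replicator} becomes (a relabelling of) the Bogoyavlenskij--Itoh lattice $\dot x_i = x_i\big(\sum_{l=1}^{k}x_{i-l}-\sum_{l=1}^{k}x_{i+l}\big)$ with indices read modulo $n$. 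Since $\mathbf{A}$ is a circulant, its spectrum is computed by the discrete Fourier transform: the eigenvalue attached to the root of unity $\zeta = e^{2\pi i/n}$ at frequency $m$ is $\lambda_m=\sum_{l=1}^{k}(\zeta^{-ml}-\zeta^{ml})$. Counting the $m$ with $\lambda_m=0$ gives $\dim\ker\mathbf{A}=n-\mathrm{rank}(\mathbf{A})$, and by \cref{eqn:TimeDerivativeOfProduct} every kernel vector $\bm\alpha$ furnishes a conserved monomial $x_1^{\alpha_1}\cdots x_n^{\alpha_n}$; these are the Casimirs, of which there are exactly $n-\mathrm{rank}(\mathbf{A})$ independent ones (the all-ones vector always lies in the kernel, since $k<n/2$ forces every row of $\mathbf{A}$ to have $k$ entries equal to $+1$, $k$ entries equal to $-1$ and otherwise $0$, hence zero row sum, giving the Casimir $x_1\cdots x_n$).

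Next I would import Itoh's combinatorial integrals \cite{I87,I08} (see also \cite{BIY08}): for each $r$ the polynomial obtained by summing $x_{i_1}x_{i_2}\cdots x_{i_r}$ over all index sets that are suitably ``spread out'' around the cycle is a constant of motion of the lattice above, and an appropriate subcollection of these, adjoined to the Casimirs, is functionally independent and pairwise in involution for the bracket \cref{eqn:Bracket}. The clean structural reason is that these same integrals are the coefficients of the characteristic polynomial of a banded Lax matrix (Bogoyavlenskij's Lax representation), from which both conservation and --- via the accompanying $r$-matrix / Poisson-commutativity of spectral invariants --- the involutivity follow.

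Finally I would carry out the bookkeeping. Itoh and Bogoyavlenskij show that the number of independent integrals in involution produced in this way (Casimirs plus Itoh integrals) equals $n-\tfrac12\mathrm{rank}(\mathbf{A})$, which is precisely the value $s$ appearing in item (iii) of \cref{def:LAIntegrability}; independence can be confirmed by exhibiting a point of $\mathbb{R}^n$ at which the differentials are linearly independent, where the monomial (Newton-polytope) structure of the Itoh integrals makes a generic point suffice. The main obstacle --- and the reason the theorem is ``immediate from'' earlier work rather than re-derived here --- is the involutivity $\{H_i,H_j\}_\mathbf{A}=0$ of the Itoh integrals: checking this by hand is a delicate combinatorial cancellation, so the efficient route is to quote the Lax-pair/$r$-matrix argument of Bogoyavlenskij together with Itoh's original combinatorial proof rather than to reconstruct it.
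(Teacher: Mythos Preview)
Your approach tracks the paper's in that both simply defer to Itoh \cite{I87,I08} and Bogoyavlenskij \cite{B88,BIY08}: the paper gives no proof body at all, just a citation and a \qed. So on the level of ``cite the constructions of conserved quantities from prior work,'' you are aligned.

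However, you overclaim relative to what the theorem actually asserts and what the paper says is known. The statement is deliberately phrased as ``admits a sufficient \emph{number} of conserved quantities for integrability'' --- a counting statement only. Immediately after the theorem the paper writes: ``We note that the integrability of the Bogoyavlenskij graphs is only known for directed cycles \dots\ and balanced tournaments \dots. However, we (and others) conjecture that these graphs are all integrable.'' In other words, the existence of the right number of Itoh-style integrals is what follows from \cite{I87,I08,BIY08}; the involutivity (and hence full Liouville--Arnold integrability in the sense of \cref{def:LAIntegrability}) of those integrals for general $B(n,k)$ is, according to the paper, \emph{open}. Your proposal asserts that ``Itoh and Bogoyavlenskij show that the number of independent integrals in involution \dots\ equals $n-\tfrac12\mathrm{rank}(\mathbf{A})$'' and suggests quoting a Lax/$r$-matrix argument of Bogoyavlenskij for the commutation. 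That would establish genuine integrability of all $B(n,k)$, which the paper explicitly flags as conjectural outside the cases $k=1$ and $n=2k+1$. So either you are appealing to a result the authors do not believe exists in the literature, or you are reading more into the theorem than it claims. Trim the proposal to the counting and independence of the integrals and drop the involutivity claim for general $(n,k)$, and you match the paper exactly.
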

We note that the integrability of the Bogoyavlenskij  graphs is only known for directed cycles (via the work Moser \cite{M74}, Kac and van Moerbeke \cite{KM75}) and balanced tournaments \cite{VS93,I87,B91}. However, we (and others) conjecture that these graphs are all integrable. In particular, direct computation shows that if $n \leq 6$, then $B(n,k)$ is integrable for all appropriate $k$ (this is included in a Mathematica file in the SI). We will use this fact in our taxonomy of graphs on six or fewer vertices in \cref{sec:Taxonomy}.

Itoh's combinatorial description \cite{I08} of the conserved quantities can be rephrased using graph theoretic language. We do so for the class $B(n,1)$, which is equivalent to the class of directed cycles, as we will use these conserved quantities later in discussing a new family of integrable graphs.

Consider the directed cycle with $n$ vertices (see $G_\text{in}$ or $G_\text{out}$ in \cref{fig:GraphEmbedding}). Evripidou  et al. \cite{EKV22} refer to this as $\mathrm{KM}(n) \equiv B(n,1)$ for the work of Kac and van Moerbeke \cite{KM75} and Moser \cite{M74} whose work on the Volterra lattice predates the work of Kac and van Moerbeke. For reasons that will be clear in the sequel, let,
\begin{equation*}
    s = \left\lfloor\frac{n}{2}\right\rfloor + 1.
\end{equation*}
In the dynamical system generated from the graph, $\mathrm{KM}(n)$, there are always two conserved quantities,
\begin{equation*}
H_1 = \sum_{i} x_i \qquad \text{and} \qquad H_{s} = \prod_{i} x_i.
\end{equation*}
We note that the corresponding game (interaction) matrix has rank $n - 2$ when $n$ is even and rank $n - 1$ when $n$ is odd. Consequently, when $n$ is odd, there will be $\left\lceil\tfrac{n}{2}\right\rceil = \left\lfloor\tfrac{n}{2}\right\rfloor + 1$ conserved quantities and when $n$ is even, there will be $\tfrac{n}{2} + 1 = \left\lfloor\frac{n}{2}\right\rfloor + 1$ conserved quantities.
 
Define a \textit{non-edge} as the pair $(i,j)$ with $i < j$ so that neither $(i,j)$ nor $(j,i)$ is an element of $E$. Likewise, define a \textit{non-cycle} as a sequence $(i_1,\dots,i_r)$ so that, $i_j < i_{j+1}$, and for all pairs $(i_j, i_k)$, the pair $(i_j, i_k)$ is a non-edge and the pair $(i_1,i_r)$ is also a non-edge. Thus, $(i_1,\dots,i_r)$ forms a cycle in the graph complement of $\mathrm{KM}(n)$. Denote the set of all non-edges as $\overline{E}$ and the set of all non-cycles of length $l$ as $\overline{\mathcal{C}}(l)$. We can think of non-edges as non-cycles of length $2$, that is $\overline{E} = \overline{\mathcal{C}}(2)$. Notice the graph $\mathrm{KM}(n)$ posses non-empty sets $\overline{\mathcal{C}}(k)$ for $2 \leq k \leq \left\lfloor\tfrac{n}{2}\right\rfloor$.

Define 
\begin{equation}
    H_k = \sum_{c \in \overline{C}(k)} \prod_{j \in c} x_j.
    \label{eqn:ItohCC}
\end{equation}
The following result follows from \cite{I08} and is a restatement of the classical integrability of the Volterra lattice.
\begin{theorem} For $1 \leq k \leq s$, the quantity $H_k$ is conserved in the replicator dynamics generated by $\mathrm{KM}(n)$. Moreover, these conserved quantities $H_k$ commute under the nonlinear bracket in \cref{eqn:Bracket}.
\hfill\qed
\end{theorem}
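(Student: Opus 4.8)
The plan is to reduce the statement to the classical integrability of the periodic Volterra lattice, which $\mathrm{KM}(n)$ generates, and then give a combinatorial proof of conservation directly while citing (or reconstructing) the classical proof of involution.

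\emph{Setup.} For $\mathrm{KM}(n)=B(n,1)$ the interaction matrix satisfies $A_{i,i+1}=-1$ (indices read cyclically, with $n+1\equiv 1$), all remaining nonzero entries being fixed by skew-symmetry; hence \cref{eqn:Replicator} is the periodic Volterra lattice $\dot x_i = x_i(x_{i-1}-x_{i+1})$, and \cref{eqn:Bracket} collapses to
\begin{equation*}
    \{F,G\}_\mathbf{A} = -\sum_{i=1}^n x_i x_{i+1}\left(\frac{\partial F}{\partial x_i}\frac{\partial G}{\partial x_{i+1}}-\frac{\partial G}{\partial x_i}\frac{\partial F}{\partial x_{i+1}}\right).
\end{equation*}
By the discussion following \cref{def:LAIntegrability}, a non-cycle of $\mathrm{KM}(n)$ of length $k$ is exactly an independent set of size $k$ in the $n$-cycle, so for $1\le k\le s-1$ the quantity $H_k$ in \cref{eqn:ItohCC} is the degree-$k$ term of the independence polynomial of $C_n$ evaluated at $x_1,\dots,x_n$ (in particular $H_1=\sum_i x_i$), while $H_s=\prod_i x_i$. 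I would establish both assertions directly, following Itoh \cite{I08}.

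\emph{Conservation.} Since $H_1$ is the Hamiltonian, $\dot H_k=\{H_k,H_1\}_\mathbf{A}=\sum_i x_i(x_{i-1}-x_{i+1})\,\partial_{x_i}H_k$, where
\begin{equation*}
    \partial_{x_i}H_k=\sum_{T}\prod_{j\in T}x_j,
\end{equation*}
the sum running over the independent $(k-1)$-sets $T$ of $C_n$ disjoint from the closed neighbourhood $\{i-1,i,i+1\}$. Substituting and reindexing the $x_{i-1}$- and $x_{i+1}$-parts so that both carry the common factor $x_m x_{m+1}$, the contributions in which $T$ avoids a block of four consecutive indices cancel in pairs; the surviving terms carry a product of three consecutive variables ($x_{m-1}x_m x_{m+1}$ or $x_m x_{m+1}x_{m+2}$), and a second shift matches and cancels these, so $\dot H_k=0$. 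For $H_s=\prod_i x_i$ one instead observes that every row of $\mathbf{A}$ sums to zero, so $\langle 1,\dots,1\rangle$ is a null eigenvector of $\mathbf{A}$ and, by \cref{eqn:TimeDerivativeOfProduct}, $\prod_i x_i$ is a Casimir of $\{\cdot,\cdot\}_\mathbf{A}$; it is therefore conserved and commutes with every $H_q$. Thus only $\{H_p,H_q\}_\mathbf{A}=0$ for $p,q\le s-1$ remains.

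\emph{Involution.} This is the classical Liouville integrability of the periodic Volterra lattice, which I would obtain by one of two standard routes. (i) The lattice carries a Lax pair whose periodic monodromy matrix $M(\lambda)$ has spectral invariants polynomial in the $x_i$; an $r$-matrix controls $\{M(\lambda)\overset{\otimes}{,}M(\mu)\}_\mathbf{A}$ and forces $\{\operatorname{tr}M(\lambda),\operatorname{tr}M(\mu)\}_\mathbf{A}=0$, whereupon expanding in $\lambda$ expresses each coefficient as a $\mathbb{Z}$-linear combination of the $H_k$ with invertible (unitriangular) transition matrix, giving $\{H_p,H_q\}_\mathbf{A}=0$. (ii) Alternatively, $\pi_\mathbf{A}$ is one member of a compatible Poisson pair for which the $H_k$ form a Lenard--Magri chain, whence involution follows by the standard Magri argument. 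Either route reproduces Itoh's theorem, whose combinatorial formula for the Volterra integrals is precisely \cref{eqn:ItohCC} transcribed into the complement graph; for the purposes of this paper one may also simply cite \cite{I08} together with \cite{M74,KM75}.

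\emph{Main obstacle.} The telescoping behind conservation is routine bookkeeping. The delicate point, if a self-contained proof of involution is desired, is that in
\begin{equation*}
    \{H_p,H_q\}_\mathbf{A}=-\sum_i x_i x_{i+1}\left(\partial_{x_i}H_p\,\partial_{x_{i+1}}H_q-\partial_{x_i}H_q\,\partial_{x_{i+1}}H_p\right)
\end{equation*}
the factors $\partial_{x_i}H_p$ and $\partial_{x_{i+1}}H_q$ range over independent sets that may overlap, so the resulting monomials occur with multiplicities and do not cancel term by term; one needs a sign-reversing, fixed-point-free involution on the indexing data (a window position paired with two independent sets), which is the combinatorial heart of \cite{I08} and which I would invoke rather than reproduce.
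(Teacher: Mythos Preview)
Your proposal is correct and, in its citation route, matches the paper exactly: the paper does not prove this theorem at all but states it with a \qed, prefacing it with ``The following result follows from \cite{I08} and is a restatement of the classical integrability of the Volterra lattice.'' Your sketched telescoping argument for conservation and the Lax/bi-Hamiltonian routes to involution go beyond what the paper supplies, but your final recommendation to simply cite \cite{I08} together with \cite{M74,KM75} is precisely what the paper does.
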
 

Both Itoh \cite{I87} and Paik and Griffin \cite{PG23} note that the conserved quantities for odd tournament graphs $B\left(n,\left\lfloor\tfrac{n}{2}\right\rfloor\right)$ can be written in terms of sums over products of cycles (rather than non-cycles). For other graphs in $B(n,k)$, the conserved quantities can be expressed as combinations of both sums over products of cycles and non-cycles. Consequently, all conserved quantities can be constructed directly out of the edge structure of $B(n,k)$ or its graph complement. (Details on this generalisation can be found in the SI.) This observation further supports our goal of characterising the dynamics of the replicator (Lotka-Volterra) equations directly from the graph structures used to generate them.

Given the work by Itoh et al. \cite{I87,I08}, we now make the following definition.
\begin{definition} An oriented directed graph $G$ admits Itoh-style conserved quantities if it has a set of algebraically independent conserved quantities that can be written in the form,
\begin{equation}
    H = \sum_{j\in \mathcal{J}}\prod_{i\in\mathcal{I}(j)} x_j,
    \label{eqn:ItohStyle}
\end{equation}
for appropriate index sets $\mathcal{I}(j)$ and $\mathcal{J}$.
\end{definition}

Importantly, these conserved quantities can be directly extracted from the graph structure, indicating a connection between dynamics and structure. Therefore, we could only identify graphs as integrable if they possessed these types of conserved quantities. 

We require one final, seminal, result proved by Evripidou  et al. \cite{EKV22}, relating graph decloning and integrability. 
\begin{theorem}[Proposition 4.5 of \cite{EKV22}] A graph $G_\mathbf{A}$ is (super) integrable if and only if its decloned form $\tilde{G}_{\tilde{\mathbf{A}}}$ is integrable. \hfill\qed
\label{thm:Declone}
\end{theorem}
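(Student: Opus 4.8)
The plan is to move conserved quantities and the nonlinear Poisson structure back and forth between $G_\mathbf{A}$ and $\tilde G_{\tilde{\mathbf{A}}}$ along a single linear map. Write $n$ for the number of vertices of $G_\mathbf{A}$, let $V_1,\dots,V_{\tilde n}$ be the partition of $V$ into clone classes (so $\tilde G_{\tilde{\mathbf{A}}}$ has $\tilde n$ vertices), fix a representative of each class, and set $\sigma_k(\mathbf{x})=\sum_{j\in V_k}x_j$ and $\Sigma=(\sigma_1,\dots,\sigma_{\tilde n}):\mathbb{R}^n_{>0}\to\mathbb{R}^{\tilde n}_{>0}$. Three elementary facts carry the argument. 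First, if $Q$ is the $\tilde n\times n$ zero-one matrix with $Q_{kj}=1$ iff $j\in V_k$, then $\mathbf{A}=Q^T\tilde{\mathbf{A}}Q$; since $Q$ has full row rank and $QQ^T$ is invertible, $\mathrm{rank}(\mathbf{A})=\mathrm{rank}(\tilde{\mathbf{A}})$, so the two integrability counts $s=n-\tfrac12\mathrm{rank}(\mathbf{A})$ and $\tilde s=\tilde n-\tfrac12\mathrm{rank}(\tilde{\mathbf{A}})$ satisfy $s-\tilde s=n-\tilde n=:c$. Second, if $a,b$ are clones then row $a$ and row $b$ of $\mathbf{A}$ agree, so $A_{ca}=A_{cb}$ for every $c$; a one-line computation from \cref{eqn:Bracket} then gives $\{x_c,x_a/x_b\}_\mathbf{A}=\tfrac{x_cx_a}{x_b}(A_{ca}-A_{cb})=0$ for all $c$, so each ratio $x_a/x_b$ of clone coordinates is a Casimir of $\pi_\mathbf{A}$, and taking one such ratio per class (against the chosen representative) produces $c$ functionally independent Casimirs $\rho_1,\dots,\rho_c$. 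Third, \cref{eqn:Bracket} also gives $\{\sigma_k,\sigma_l\}_\mathbf{A}=\tilde A_{kl}\sigma_k\sigma_l$, so $\Sigma$ is a Poisson map onto $(\mathbb{R}^{\tilde n}_{>0},\pi_{\tilde{\mathbf{A}}})$; and since $\mathbf{e}_j^T\mathbf{A}\mathbf{x}$ is constant over $j\in V_k$ and equals the $k$-th component of $\tilde{\mathbf{A}}\,\Sigma(\mathbf{x})$, $\Sigma$ carries the replicator flow of $\mathbf{A}$ to that of $\tilde{\mathbf{A}}$.

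Suppose first $\tilde G_{\tilde{\mathbf{A}}}$ is integrable, with $\tilde H_1,\dots,\tilde H_{\tilde s}$ independent and in involution for $\pi_{\tilde{\mathbf{A}}}$. I would exhibit $\tilde H_1\circ\Sigma,\dots,\tilde H_{\tilde s}\circ\Sigma,\rho_1,\dots,\rho_c$ as a witness to integrability of $\mathbf{A}$. These are conserved (the $\tilde H_i\circ\Sigma$ because $\Sigma$ intertwines flows, the $\rho_i$ because they are Casimirs) and pairwise in involution (the pulled-back ones because $\Sigma$ is Poisson, the $\rho_i$ because Casimirs commute with everything); there are $\tilde s+c=s$ of them; and they are functionally independent by a block-triangular Jacobian argument, since the $\tilde H_i\circ\Sigma$ depend on $\mathbf{x}$ only through $\Sigma(\mathbf{x})$ whereas the $\rho_i$ are genuinely ``within-class.'' This is exactly \cref{def:LAIntegrability}.

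Conversely, suppose $G_\mathbf{A}$ is integrable, with $H_1,\dots,H_s$ independent and in involution. Let $D=\{\mathbf{x}\in\mathbb{R}^n_{>0}:x_a=x_b\text{ whenever }a,b\text{ are clones}\}$; this is the common level set $\{\rho_1=\dots=\rho_c=1\}$ of Casimirs, hence a Poisson submanifold, and in the coordinates $\Sigma|_D$ the induced bracket is $\pi_{\tilde{\mathbf{A}}}$ while the induced (invariant) dynamics is the replicator flow of $\tilde{\mathbf{A}}$. Hence $H_1|_D,\dots,H_s|_D$ are conserved for $\tilde{\mathbf{A}}$ and in involution, and one only needs $\tilde s$ of them to remain functionally independent on $D$. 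Here is the one step that is not bookkeeping: because $s$ is the maximal number of functionally independent pairwise-commuting functions on $(\mathbb{R}^n_{>0},\pi_\mathbf{A})$ and the $\rho_i$ Poisson-commute with the $H_j$, each $d\rho_i$ must lie in $\mathrm{span}\{dH_1,\dots,dH_s\}$ on a dense open set; consequently $\ker dH_{\mathbf{p}}\subseteq T_{\mathbf{p}}D$ there, so $(dH|_D)_{\mathbf{p}}$ has kernel $\ker dH_{\mathbf{p}}$ and hence rank $\dim D-\dim\ker dH_{\mathbf{p}}=\tilde n-\tfrac12\mathrm{rank}(\mathbf{A})=\tilde s$. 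Picking $\tilde s$ of the $H_j|_D$ realising this rank completes the proof.

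I expect the main obstacle to be exactly this independence count in the converse: restricting $s$ globally independent functions to the lower-dimensional invariant manifold $D$ could a priori collapse independence, and the argument works only because the clone-ratio Casimirs are \emph{forced} to depend functionally on any maximal involutive family, so that the kernel directions of $H$ already lie tangent to $D$. The remaining loose ends---working on $\mathbb{R}^n_{>0}$ so that \cref{eqn:Bracket} and the ratios behave well and treating $\Delta_{n-1}$ as the level set of the conserved quantity $H_1=\sum_i x_i$, and promoting ``dense open set'' statements to global ones using polynomiality of the $H_j$---are routine.
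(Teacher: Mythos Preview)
The paper does not prove this statement at all: it is quoted verbatim as Proposition~4.5 of \cite{EKV22} and closed with a \qed. So there is no in-paper argument to compare against; your proposal is an independent proof, and it should be judged on its own.

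Your argument is sound and is the natural Poisson-geometric one. The three ingredients you isolate are exactly right: $\mathbf{A}=Q^T\tilde{\mathbf{A}}Q$ with $Q$ of full row rank forces $\mathrm{rank}(\mathbf{A})=\mathrm{rank}(\tilde{\mathbf{A}})$ and hence $s-\tilde s=n-\tilde n$; the clone ratios $x_a/x_b$ are Casimirs of $\pi_\mathbf{A}$ by the one-line bracket computation you give; and $\Sigma$ is a surjective Poisson submersion intertwining the two replicator flows. The forward direction (decloned integrable $\Rightarrow$ cloned integrable) is then immediate, and your block-triangular Jacobian remark for independence is the standard way to finish it. For the converse you correctly identify the only nontrivial step---that restricting the $H_j$ to the clone-diagonal $D$ does not destroy too much independence---and your resolution is clean: since $s=n-\tfrac12\mathrm{rank}(\mathbf{A})$ is the maximal size of an independent involutive family on a rank-$\mathrm{rank}(\mathbf{A})$ Poisson manifold, the $c$ extra Casimirs $\rho_i$ (which commute with everything) must satisfy $d\rho_i\in\mathrm{span}\{dH_j\}$ generically, whence $\ker dH\subseteq TD$ and the rank count $\tilde n-(n-s)=\tilde s$ goes through. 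Two small remarks: you should say a word about why the ``dense open'' conclusion suffices (polynomiality of everything in sight, as you note at the end, does it), and the parenthetical ``(super)'' in the statement asks for a tiny addendum---the same pull-back-plus-Casimirs construction transports any extra (not necessarily commuting) first integrals, so superintegrability transfers by the identical mechanism.
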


\section{Skip-Vertex Graphs: A New Family of Integrable Graphs}\label{sec:SkipGraphs}
Intuitively, Bogoyavlenskij  graphs are cycles with edges added in a way that preserves a notion of rotational symmetry. Here we consider another instance where adding edges to a directed cycle produces another family of integrable graphs, but symmetry is broken.
\begin{definition}[Skip-Vertex Graph] The skip-vertex graph on $n \geq 4$ vertices with $k \leq \tfrac{n}{2}$ skips ($k < 2$ for $n = 4$) is (isomorphic to) a graph constructed by adding the edges $(1,3), (3,5),\dots,(2k-1,2k+1)$ to the directed cycle with $n$ vertices. We refer to these added edges as \textit{skip-edges} and denote this graph $\Sk(n,k)$. Note when $k = \tfrac{n}{2}$ addition is modulo $n$ so that the last added edge connects Vertex $n-1$ to Vertex 1. If the skip-edge $(2k-1,2k+1)$ is present, we call vertex $2k$ a skipped vertex.
\end{definition}
The skip-graph $\Sk(8,3)$ is shown in \cref{fig:SkipGraph}.
\begin{figure}[htbp]
\centering
\includegraphics[width=0.45\textwidth]{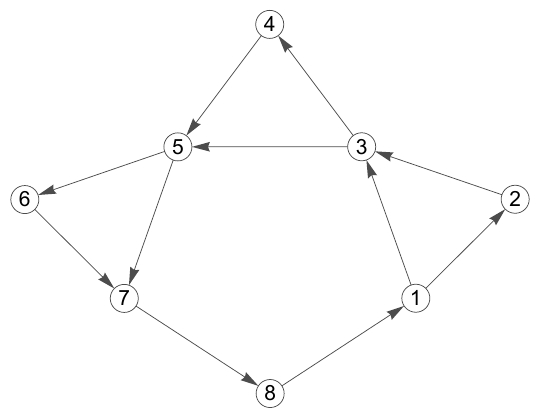}
\caption{The skip-graph $\Sk(8,3)$ is shown. Notice this skip-graph is not as symmetric as a graph in the family $B(n,k)$.}
\label{fig:SkipGraph}
\end{figure}
The graph $\Sk(n,k)$ has an interaction matrix with the following form,
\begin{equation}
    \mathbf{A} = \begin{bmatrix}
    0 & -1 & -\alpha_1 & 0 & 0 & \cdots & \alpha_{n-1} & 1\\
    1 & 0 & -1 & 0 & 0 & \cdots & 0 & 0\\
    \alpha_1 & 1 & 0 & -1 & -\alpha_3 & \cdots & 0 & 0\\
    0 & 0 & 1 & 0 & 0 & \cdots & 0 & 0 \\
    0 & 0 & \alpha_3 & 0 & 0 & \cdots & 0 & 0\\
    \vdots & \vdots & \vdots& \vdots & \vdots& \ddots & \vdots& \vdots\\
    -\alpha_{n-1} & 0 & 0 & 0 & 0 & \cdots & 0 & -1\\
    -1 & 0 & 0 & 0 & 0 & \cdots & 1 & 0
    \end{bmatrix},
    \label{eqn:AMatrixSkipGraph}
\end{equation}
where $\alpha_1 = 1$ and by extension, $\alpha_{2i-1} = 1$ if $k \geq i$ and otherwise $\alpha_{2i-1} = 0$. 

\begin{lemma} The dynamics generated by a skipped-vertex $\Sk(n,k)$ graph has a fixed point in the interior of $\Delta_{n-1}$ if and only if $n$ is even.
\label{lem:EvenSkipGraph}
\end{lemma}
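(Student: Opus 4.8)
The plan is to translate the statement into a question about $\ker\mathbf{A}$, where $\mathbf{A}$ is the matrix in \cref{eqn:AMatrixSkipGraph}, and then to exploit the fact that every even-indexed vertex of $\Sk(n,k)$ lies on no skip-edge. Because $\mathbf{A}$ is skew-symmetric, the replicator field \cref{eqn:Replicator} is simply $\dot{x}_i = x_i(\mathbf{A}\mathbf{x})_i$, so a point in the interior of $\Delta_{n-1}$ is a fixed point precisely when $\mathbf{A}\mathbf{x}=0$, and conversely normalising any strictly positive kernel vector produces an interior fixed point; hence the lemma is equivalent to deciding whether $\ker\mathbf{A}$ contains a coordinatewise-positive vector. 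The structural remark that drives both directions is this: vertex $2m$ is even and therefore incident only to the cycle edges $(2m-1,2m)$ and $(2m,2m+1)$ (skip-edges join odd vertices to odd vertices), so row $2m$ of $\mathbf{A}\mathbf{x}=0$ reads $x_{2m-1}=x_{2m+1}$. Running this over all even $m$ forces every odd-indexed coordinate of a kernel vector to equal one common value $p$.

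For the ``only if'' direction, suppose $n$ is odd. Then the cycle-neighbour $n-1$ of vertex $1$ is even, and vertex $1$ is never the head of a skip-edge, so vertex $1$ carries exactly one skip-edge, the outgoing edge $(1,3)$ (recall $k\ge 1$). Row $1$ of $\mathbf{A}\mathbf{x}=0$ therefore reads $x_n - x_2 - x_3 = 0$, and since $x_3 = x_n = p$ this forces $x_2 = 0$. Hence no element of $\ker\mathbf{A}$ is strictly positive, and $\Sk(n,k)$ has no interior fixed point.

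For the ``if'' direction, suppose $n$ is even; I would exhibit a positive kernel vector explicitly. Set the odd coordinates to $p=1$ and write $y_j = x_{2j}$ for $1\le j\le n/2$. At an odd vertex $2j-1$, an incoming skip-edge (from $2j-3$) and an outgoing skip-edge (to $2j+1$) contribute $+1$ and $-1$ to that row when present, so they cancel unless exactly one is present. When $k<n/2$ the only such ``defect'' vertices are vertex $1$ (outgoing skip-edge only) and vertex $2k+1$ (incoming skip-edge only); the remaining odd-vertex equations collapse to $y_1=\dots=y_k$ and $y_{k+1}=\dots=y_{n/2}$, while the two defect equations give $y_{n/2}-y_1=1$ and $y_{k+1}-y_k=1$, which is satisfied by $y_1=\dots=y_k=1$ and $y_{k+1}=\dots=y_{n/2}=2$. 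When $k=n/2$ every odd vertex carries both a skip-edge in and a skip-edge out, all odd-vertex equations reduce to $y_{j-1}=y_j$ cyclically, and $y_1=\dots=y_{n/2}=1$ works. In either case this gives a strictly positive element of $\ker\mathbf{A}$, and normalising it yields the interior fixed point; the case $n=4$ (where $k=1$ is forced) is the instance $k=1<2=n/2$ of this argument.

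The only real difficulty is the incidence bookkeeping: correctly tracking which odd vertices see one skip-edge rather than zero or two, and in particular handling the wrap-around when $k=\tfrac n2$ and the boundary vertex $2k+1$. (One should also note that the statement presupposes $k\ge 1$; for $k=0$ the graph $\Sk(n,0)$ is a directed cycle, whose barycentre $\langle 1/n,\dots,1/n\rangle$ is an interior fixed point for every $n$.)
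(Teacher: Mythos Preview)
Your proof is correct and follows essentially the same route as the paper: both reduce the question to whether $\ker\mathbf{A}$ contains a strictly positive vector and then exploit the row structure of $\mathbf{A}$. The paper writes down an explicit kernel basis $\{\mathbf{v}_1,\mathbf{v}_2\}$ for $n$ even and searches parametrically for a positive combination, while you first observe that the even-indexed rows force all odd-indexed coordinates equal and then solve the remaining odd-row equations directly; the positive vector you construct is precisely the paper's $\mathbf{v}_2+2\mathbf{v}_1$, and for odd $n$ your row-$1$ computation giving $x_2=0$ is exactly what the paper reads off its single eigenvector. Your treatment additionally handles the wrap-around case $k=n/2$ explicitly, which the paper's proof tacitly excludes via the standing assumption $2k<n$.
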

\begin{proof} Suppose $n$ is even, and let $\mathbf{A}$ be the interaction matrix from a graph $\Sk(n,k)$. Interior fixed points of \cref{eqn:Replicator} arise from solutions to the equation, $\mathbf{A}\mathbf{x} = \mathbf{0}$, when $\mathbf{x}$ is constrained to lie in the interior of $\Delta_{n-1}$. Thus, it suffices to analyse the eigenvectors of $\mathbf{A}$ corresponding to the zero eigenvalues (and to the Casimirs of the system).

As with the directed cycle with $n$ vertices, $\mathbf{A}$ has two eigenvectors corresponding to zero eigenvalues with form,
\begin{equation*}
    \mathbf{v}_1 = \begin{bmatrix}0 \\1\\ 0 \\ 1\\ \vdots\\ 0 \\1\end{bmatrix} \quad \text{and} \quad
    \mathbf{v}_2 = \begin{bmatrix}1 \\ -\alpha_1 \\ 1 \\-\alpha_3 \\ \vdots \\ 1\\0
    \end{bmatrix},
\end{equation*}
with alternating structure. This can be shown by direct computation using \cref{eqn:AMatrixSkipGraph}. Notice, if $\alpha_{2i-1} = 0$ for all $i$, then these are the two eigenvectors corresponding to eigenvalue zero in the directed cycle. Let, $\mathbf{v} = \alpha\mathbf{v}_1 + \beta\mathbf{v}_2$, then $\mathbf{A}\mathbf{v} = \mathbf{0}$. We will show there are $\alpha$ and $\beta$ so that $\mathbf{v} \in \Delta_{n-1}$. If $\mathbf{v} = \langle{v_1,\dots,v_n}\rangle$,
then,
\begin{equation*}
    \sum_i v_i = \frac{n}{2}\alpha + \left(\frac{n}{2} - k\right)\beta.
\end{equation*}
If $\mathbf{v} \in \Delta_{n-1}$, then we require
\begin{equation*}
    \frac{n}{2}\alpha + \left(\frac{n}{2} - k\right)\beta = 1,
\end{equation*}
which implies,
\begin{equation*}
    \beta = \frac{n\alpha - 2}{2k - n}.
\end{equation*}
We also require $\mathbf{v} > 0$. This yields two inequalities,
\begin{equation*}
    \frac{\alpha n-2}{n-2k} > 0 \quad \text{and} \quad \alpha + \frac{\alpha n - 2}{n-2k} > 0.
\end{equation*}
Using our assumption that $n > 2k$, it follows that we require
\begin{equation*}
    \frac{1}{n-k} < \alpha < \frac{2}{n},
\end{equation*}
for $\mathbf{v}$ to be a vector in $\Delta_{n-1}$. Thus, the dynamics generated by a graph in $\Sk(n,k)$ have an interior fixed point when $n$ is even.

Now suppose $n$ is odd. As in the case of an odd cycle, the interaction matrix has only one eigenvector with eigenvalue 0. This vector has form,
\begin{equation*}
    \mathbf{v} = \begin{bmatrix}
    1\\
    1-\alpha_1\\
    1\\
    1-\alpha_3\\
    1\\
    \vdots\\
    1 - \alpha_{2k-1}\\
    1
    \end{bmatrix}
\end{equation*}
Notice if $\alpha_i = 0$, we recover the unique eigenvector of the interaction matrix of the directed cycle corresponding to the zero eigenvalue, as expected. This vector cannot be scaled to lie in the interior of $\Delta_{n-1}$. Consequently, there is no interior point solution and all solutions asymptotically decay to a lower-order system solution.
\end{proof}

From the preceeding proof, we see that the rank of the interaction matrix of a skip graph with $2n$ vertices is at most $2n - 2$. The structure of the two eigenvectors identified in the proof imply that through Gaussian elminiation one can remove the final two rows and columns of the interaction matrix leaving a reduced $(2n-2) \times (2n-2)$ matrix,
\begin{equation*}
    \tilde{\mathbf{A}} = 
\begin{bmatrix}
    0 & -1 & -\alpha_1 & 0 & 0 & \cdots &0 & 0\\
    1 & 0 & -1 & 0 & 0 & \cdots & 0 &  0\\
    \alpha_1 & 1 & 0 & -1 & -\alpha_3 & \cdots & 0 & 0\\
    0 & 0 & 1 & 0 & 0 & \cdots &  0 & 0\\
    0 & 0 & \alpha_3 & 0 & 0 & \cdots & 0 & 0\\
    \vdots & \vdots & \vdots& \vdots & \vdots& \ddots & \vdots & \vdots\\
    0 & 0 & 0 & 0 & 0 & \cdots & 0 & -1\\
    0 & 0 & 0 & 0 & 0 & \cdots & 1 & 0
    \end{bmatrix},
\end{equation*}
that is the interaction matrix of a directed acyclic graph. This is a banded skew-symmetric matrix. Direct computation of the Pfaffian \cite{C1849,B04} for this matrix shows that it has value $(-1)^{n-1}$ (just as in the case for a directed graph). Consequently, we have $\det(\tilde{\mathbf{A}}) = (\pm 1)^2$, implying this reduced matrix has full rank. It follows that the rank of the interaction matrix produced by a skip-vertex graph with $2n$ vertices must be $2n - 2$. This is sufficient to show that Poisson manifold generated by this graph has rank $2n-2$ \cite{EKV22}. Note, it is also possible to directly compute the determinant of $\tilde{\mathbf{A}}$, which is always $1$.

We now construct conserved quantities for a skip-graph with $2n$ vertices. Let $V_s$ denote the set of vertices that are not skipped. Notice the eigenvalues in the proof provide two conserved quantities for skip graphs,
\begin{align}
    &H_{n} = x_2x_4\cdots x_{2n}\label{eqn:SkipCCn}\\
    &H_{n+1} = \prod_{i \in V_s} x_i.\label{eqn:SkipCCn1}
\end{align}
This can be seen immediately by reading these terms from $\mathbf{v}_1$ and $\mathbf{v}$ in the proof. For example, these two conserved quantities for the graph shown in \cref{fig:SkipGraph}, i.e., $\Sk(8,3)$ are,
\begin{align*}
    &H_4 = x_2x_4x_6x_8\\
    &H_5 = x_1x_3x_5x_7x_8.
\end{align*}
As always, $H_1 = x_1 + x_2 + \cdots + x_{2n}$ is a conserved quantity. 

Given the result in \cref{lem:EvenSkipGraph}, we consider only skip-graphs $\Sk(n,k)$ where $n$ is even and use an imagined contrivance of a fluid flowing along the edges of a directed cycle or skip-graph. The fact that the graph is strongly connected implies that this imaginary fluid must circulate through the graph.

Consider the directed cycle $\mathrm{KM}(n)$ with corresponding matrix $\mathbf{A}_0$. Consider an arbitrary non-cycle in $\mathrm{KM}(n)$ corresponding to the term $T = x_{i_1}\cdots x_{i_l}$. Then, computing $\dot{T}$ with respect to the dynamics generated by $\mathrm{KM}(n)$ yields,
\begin{multline*}
    \dot{T} = \left(\prod_{j=1}^n x_{i_j}\right)\left(\sum_{j=1}^n \mathbf{e}_{i_j}^T\right)\mathbf{A}_0\mathbf{x} = \\
    T \cdot (x_{i_1+1} - x_{i_1-1} + x_{i_2+1} - x_{i_2-1} + ... + x_{i_n+1} - x_{i_n-1}).
\end{multline*}
Addition in the index terms is taken modulo $n$ with appropriate adjustments to count from $1,\dots,n$ if needed. Notice, the sum on the right-hand-side performs in/out flow counting with respect to the non-cycle; i.e., the net sum is composed of terms corresponding to outward flow being negative and inward flow being positive. This is not the case if $T$ contains pairs of variables corresponding to an edge in the graph. Consequently, summing over all these inward and outward flows in non-cycles must yield zero because the graph is strongly connected and flow can only circulate. This is just a restatement of Itoh's \cite{I08} result.

Now decorate the time derivative operator with the graph used to formulate it. That is,
\begin{multline}
    D_{\mathrm{KM}(n)}[T] = \left(\prod_{j=1}^n x_{i_j}\right)\left(\sum_{j=1}^n \mathbf{e}_{i_j}^T\right)\mathbf{A}_0\mathbf{x} = \\
    T \cdot (x_{i_1+1} - x_{i_1-1} + x_{i_2+1} - x_{i_2-1} + ... + x_{i_n+1} - x_{i_n-1}).
    \label{eqn:TimeDerivSkipProof}
\end{multline}
Passing to $\Sk(n,k)$, the terms corresponding to non-cycles that do not contain vertices in skip-edges have no change to their time derivative. Now consider a skip-edge of form $(i_j,i_{j}+2)$. Say a non-cycle term $T$ contains $x_{i_j}$ (but not $x_{i_j+2}$). Then, 
\begin{equation*}
    D_{\Sk(n, k)}[T] = D_{\mathrm{KM}(n)}[T] - Tx_{i_j+2}.
\end{equation*}
Likewise, if $T$ contains $x_{i_j+2}$, then,
\begin{equation*}
    D_{\Sk(n, k)}[T] = D_{\mathrm{KM}(n)}[T] + Tx_{i_j+1},
\end{equation*}
This alteration occurs because there may be fluid flow cancellation (both in and out) to the elements of a non-cycle. However, the fact that the time-derivatives of the non-cycles are still counting fluid flow implies that the sum of all non-cycles of a given length will be a conserved quantity because of our imaginary fluid conservation. Thus, we state the following lemma, which follows from the discussion above. 
\begin{lemma} Let $\overline{C}_{\Sk(n,k)}(m)$ denote the non-cycles of the skip-graph $\Sk(n,k)$ of length $m$, and let,
\begin{equation*}
    H_m = \sum_{c \in \overline{C}_{\Sk}(m)} \prod_{j \in c} x_j.
    \label{eqn:SkipCC}
\end{equation*} 
Then $H_m$ is a conserved quantity for $\Sk(n,k)$. \hfill \qed
\end{lemma}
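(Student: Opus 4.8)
The plan is to show $\dot H_m=0$ identically by differentiating $H_m$ along the dynamics \eqref{eqn:Replicator} of the interaction matrix $\mathbf A$ of $\Sk(n,k)$ from \eqref{eqn:AMatrixSkipGraph} and checking that every monomial of the resulting polynomial has vanishing coefficient. For a vertex set $c$ write $T_c=\prod_{j\in c}x_j$; exactly as in \eqref{eqn:TimeDerivativeOfProduct}, $\dot T_c = T_c\,(\mathbf 1_c^{T}\mathbf A\mathbf x)$, where $\mathbf 1_c$ is the indicator vector of $c$. First I would observe that if $c$ is a non-cycle of $\Sk(n,k)$ then all pairs inside $c$ are non-edges, so $\sum_{i\in c}A_{ij}=0$ for $j\in c$; hence $\dot T_c = T_c\sum_{j\notin c}\big(\sum_{i\in c}A_{ij}\big)x_j$ and $\dot H_m = \sum_{c\in\overline{C}_{\Sk}(m)}\dot T_c$ is a sum of degree-$(m+1)$ monomials, each supported on a set $c\cup\{j\}$ with $j\notin c$. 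Collecting like terms, the coefficient of $\prod_{\ell\in S}x_\ell$ (with $|S|=m+1$) in $\dot H_m$ is $\sum_{j\in C(S)}\sum_{i\in S}A_{ij}$, where $C(S)$ is the set of vertices of $S$ whose deletion turns $S$ into a non-cycle; equivalently $C(S)=\bigcap_{e}e$, the intersection of all edges of $\Sk(n,k)$ contained in $S$, with the empty intersection read as $S$.

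The second step is a short case analysis on the edges $\Sk(n,k)$ places inside $S$. If $S$ contains no such edge, $C(S)=S$ and the coefficient is $\sum_{i,j\in S}A_{ij}=0$ by skew-symmetry. If $S$ contains exactly one edge $\{a,b\}$, then $C(S)=\{a,b\}$ and the coefficient reduces to $A_{ba}+A_{ab}=0$. If $S$ contains two or more edges not all meeting at a single vertex, then $C(S)=\emptyset$ and the coefficient is an empty sum. The remaining case is that $S$ contains at least two edges, all incident to one vertex $j_0$: then $C(S)=\{j_0\}$ and the coefficient is $\sum_{i\in S}A_{ij_0}$, i.e.\ the number of out-neighbours of $j_0$ lying in $S$ minus the number of in-neighbours of $j_0$ lying in $S$. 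Here I would use the explicit adjacency structure of $\Sk(n,k)$: the neighbours of $j_0$ lying in $S$ form an independent set (because $S\setminus\{j_0\}$ is a non-cycle), an even vertex has exactly one in-neighbour and one out-neighbour (it meets no skip-edge), and the in-neighbours of an odd vertex are mutually cycle-adjacent, as are its out-neighbours; hence an independent set of two or more neighbours of $j_0$ consists of exactly one in- and one out-neighbour, and the coefficient is $1-1=0$. In all cases the coefficient vanishes, so $\dot H_m=0$. This is the rigorous version of the ``circulating fluid'' argument preceding the statement: conservation of the imagined flow across a non-cycle is precisely the vanishing of these coefficients.

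The hard part will be that last case: monomials whose support $S$ carries several cycle- or skip-edges through one vertex, where $S$ arises from a unique non-cycle and no skew-symmetry cancellation is available. Settling it requires the combinatorial fact that, once one restricts to an independent set of vertices, every vertex of $\Sk(n,k)$ has equal in- and out-degree into that set; establishing this means reading the adjacencies off \eqref{eqn:AMatrixSkipGraph} and, in particular, handling the boundary vertices $2k$ and $2k+1$ at which the chain of skip-edges begins and ends, and the modular wrap-around that occurs when $k=\tfrac{n}{2}$ (where Vertex $1$ must be treated as an odd skip-endpoint). The degenerate case $k=0$ is $\mathrm{KM}(n)$ and is already covered by \eqref{eqn:ItohCC}, and parity of $n$ plays no role in the conservation itself (only in the interior fixed point of \cref{lem:EvenSkipGraph}); note also that the naive route of inheriting conservation from $\mathrm{KM}(n)$ via the corrections displayed above does not quite close, since the non-cycles of $\Sk(n,k)$ are only a subset of those of $\mathrm{KM}(n)$, which is why the direct monomial count is preferable. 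Checking $\dot H_2=0$ by hand on a small instance such as $\Sk(8,3)$ is a good way to catch any sign slip in \eqref{eqn:AMatrixSkipGraph}.
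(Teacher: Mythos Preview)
Your argument is correct and, while it shares the paper's starting point of differentiating each non-cycle monomial $T_c$ along the flow, it proceeds by a genuinely different and more systematic route. The paper argues informally by comparing the dynamics of $\Sk(n,k)$ to those of $\mathrm{KM}(n)$: it writes $D_{\Sk(n,k)}[T] = D_{\mathrm{KM}(n)}[T] \pm (\text{correction})$ for each skip-edge touching $T$ and then invokes an ``imaginary fluid conservation'' heuristic to conclude that these corrections cancel when summed over all non-cycles of a given length. You instead group the degree-$(m{+}1)$ monomials of $\dot H_m$ by their support $S$, identify $C(S)$ as the intersection of the edges inside $S$, and run a clean four-case analysis, reducing the only nontrivial case (all edges through a single hub $j_0$) to the combinatorial observation that in $\Sk(n,k)$ the in-neighbours of any vertex are pairwise adjacent and likewise the out-neighbours, so an independent set of two or more neighbours is forced to be one in and one out. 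This gives a self-contained proof that does not lean on the $\mathrm{KM}(n)$ comparison and, as you correctly flag, avoids the genuine subtlety that the non-cycles of $\Sk(n,k)$ form a proper subset of those of $\mathrm{KM}(n)$---a point the paper's heuristic leaves implicit. What the paper's approach buys in return is a clearer picture of \emph{why} the conserved quantities survive the passage from the cycle to the skip-graph, even if it defers the final cancellation to intuition rather than computation.
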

Notice that $2 \leq m \leq \tfrac{n}{2}$. Moreover, non-cycles of length $\tfrac{n}{2}$ correspond to the eigenvector $\mathbf{v}_1$ from the proof of \cref{lem:EvenSkipGraph}. Thus, for a skip-graph $\Sk(n,k)$ we have $\tfrac{n}{2} + 1$ conserved quantities, where $n$ is even. It is worth noting that this explains why conserved quantities are generated from both cycles and non-cycles in graphs in class $B(n,k)$ for large enough $k$. When there are no non-cycles available to use to create balanced flow, the conserved quantities are written in terms of cycles, which conserve flow differently. This is most obvious in the conserved quantities for balanced tournament graphs (see \cite{I87} or \cite{PG23}).

Using a similar argument, or adapting Itoh's proof technique, we also have the following.

\begin{lemma} If $H_1,\dots,H_{n+1}$ are the conserved quantities for $\Sk(n,k)$ (where $n$ is assumed to be even), then these quantities commute under the action of the nonlinear bracket in \cref{eqn:Bracket}. \hfill\qed.
\end{lemma}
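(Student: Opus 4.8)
The plan is to prove $\{H_a,H_b\}_\mathbf{A}=0$ by splitting the conserved quantities of $\Sk(n,k)$ into three kinds, disposing of the easy kinds first, so that only a combinatorial ``flux-counting'' computation in the style of the preceding lemmas is left. First, the linear quantity $H_1=x_1+\cdots+x_n$ is the Hamiltonian, so $\{H_1,H_b\}_\mathbf{A}=\dot{H}_b$, which vanishes by the fact --- just established --- that each $H_b$ is conserved. Second, the two ``extreme'' quantities $x_2x_4\cdots x_n$ and $\prod_{i\in V_s}x_i$ are Casimirs of $\pi_\mathbf{A}$: their exponent vectors are zero eigenvectors of $\mathbf{A}$ (namely $\mathbf{v}_1$, and a combination of $\mathbf{v}_1$ and $\mathbf{v}_2$, from the proof of \cref{lem:EvenSkipGraph}), and a one-line variant of \cref{eqn:TimeDerivativeOfProduct} with a general test function in place of $H_1$ shows that any monomial $\prod_i x_i^{\alpha_i}$ with $\mathbf{A}\bm{\alpha}=0$ satisfies $\{\prod_i x_i^{\alpha_i},G\}_\mathbf{A}=0$ for all $G$; a Casimir commutes with everything, in particular with every $H_b$. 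Hence only the relations $\{H_a,H_b\}_\mathbf{A}=0$ among the remaining non-cycle sums $H_m=\sum_{c\in\overline{C}_{\Sk}(m)}\prod_{j\in c}x_j$ require proof.

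For these I would compute the bracket directly. Writing $x_c=\prod_{j\in c}x_j$ and using that $\partial_p x_c=x_c/x_p$ when $p\in c$ and $0$ otherwise, one obtains the clean identity $\{x_c,x_{c'}\}_\mathbf{A}=x_cx_{c'}\sum_{p\in c,\,q\in c'}A_{pq}$, hence
\[
  \{H_a,H_b\}_\mathbf{A}\;=\;\sum_{c\in\overline{C}_{\Sk}(a)}\ \sum_{c'\in\overline{C}_{\Sk}(b)} x_cx_{c'}\,\Phi(c,c'),
  \qquad \Phi(c,c'):=\sum_{p\in c,\,q\in c'}A_{pq},
\]
where $\Phi(c,c')$ is the signed count of skip-graph edges running between the vertex sets $c$ and $c'$ --- the bilinear analogue of the single-sum ``in/out flow'' count of \cref{eqn:TimeDerivSkipProof}. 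I would then collect the right-hand side by monomial and show that, for each monomial, the coefficient $\sum\Phi(c,c')$ --- summed over all pairs of non-cycles $(c,c')$ with $c\cap c'$ and $c\cup c'$ equal to the prescribed supports --- vanishes. The mechanism is Itoh's: since $c$ and $c'$ are non-cycles, the edges of the underlying directed $n$-cycle contribute only telescoping/boundary terms, and strong connectivity of $\Sk(n,k)$ forces the circulating flow to balance; each skip-edge $(i_j,i_j+2)$ contributes exactly the corrections ($-x_cx_{c'}x_{i_j+2}$ or $+x_cx_{c'}x_{i_j+1}$) that already appeared in the proof of the previous lemma, and these cancel in the summed coefficient by the same flow-conservation principle. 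Equivalently, since the cycle part of the computation is literally Itoh's, one may invoke \cite{I08} for the $\mathrm{KM}(n)$ contribution and check only that switching on the skip-edges preserves the cancellation.

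I expect the last step to be the main obstacle. Unlike the Bogoyavlenskij family, $\Sk(n,k)$ has no rotational symmetry to reduce each monomial's coefficient to a single representative non-cycle, so the cancellation must be argued purely locally through the flow/strong-connectivity principle; one must also carefully track non-cycles that overlap (variables appearing squared) and the vertices incident to skip-edges, where the flow count is modified. I anticipate the difficulty is bookkeeping rather than concept: once the bracket is rewritten in the $\Phi$-form above, the argument runs parallel to the proof that $\dot{H}_m=0$, with a double sum over pairs $(c,c')$ in place of the single sum over non-cycles.
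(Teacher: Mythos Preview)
Your proposal is correct and is precisely the ``similar argument, or adapting Itoh's proof technique'' that the paper gestures at; the paper gives no further detail than that one sentence, so your three-way split (Hamiltonian, Casimirs, non-cycle sums) together with the monomial-bracket identity $\{x_c,x_{c'}\}_\mathbf{A}=x_cx_{c'}\sum_{p\in c,\,q\in c'}A_{pq}$ and the flow-counting cancellation is in fact a more complete argument than the paper supplies. One small caution: the non-cycles of $\Sk(n,k)$ form a proper subset of those of $\mathrm{KM}(n)$ (a pair joined by a skip-edge is no longer a non-edge), so the ``$\mathrm{KM}(n)$ contribution'' is not literally Itoh's sum but a restricted one --- the bookkeeping you anticipate must track both the modified $\Phi$ from the new edges \emph{and} the smaller index set of non-cycles, though the cancellation mechanism is unchanged.
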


It remains to argue that the conserved quantities we have found are algebraically independent. Consider a skip-graph with $\Sk(2n,k)$ with $2n$ vertices. The Jacobian matrix of the set of conserved quantities of this graph is,
\begin{equation*}
    \mathbf{J}(x_1,\dots,x_{2n}) = \begin{bmatrix} 
    \frac{\partial H_1}{\partial x_1} & \cdots & \frac{\partial H_1}{\partial x_{2n}}\\
    \vdots & \ddots & \vdots\\
    \frac{\partial H_n}{\partial x_1} & \cdots & \frac{\partial H_n}{\partial x_{2n}}\\
    \frac{\partial H_{n+1}}{\partial x_1} & \cdots & \frac{\partial H_{n+1}}{\partial x_{2n}}.
    \end{bmatrix}
\end{equation*}
This matrix has algebraic form,
\begin{equation*}
    \mathbf{M} = \begin{bmatrix}
        1 & 1 & \cdots & 1\\
        p_{1,1}(\mathbf{x}) & p_{1,2}(\mathbf{x}) & \cdots & p_{1,2n}(\mathbf{x})\\
        p_{2,1}(\mathbf{x}) & p_{2,2}(\mathbf{x}) & \cdots & p_{2,2n}(\mathbf{x})\\
        \vdots & \vdots & \ddots & \vdots\\
        p_{n-1,1}(\mathbf{x}) & p_{n-1,2}(\mathbf{x}) &\cdots & p_{n-1,2n}(\mathbf{x})\\
        p_{2n-(k+1),1}(\mathbf{x}) & p_{2n-(k+1),2}(\mathbf{x}) & \cdots & p_{2n-(k+1),2n}(\mathbf{x})
    \end{bmatrix},
\end{equation*}

where $p_{i,j}$ denotes either $0$ or a polynomial of degree $i$ in column $j$. Notice that each row has polynomials of distinct degree. Moreover, these polynomials cannot be multiples of each other because they were constructed from sums of non-cycles in the graph.

We can use elementary row operations to convert the first column to a unit vector without changing the degrees of each row. We then divide the second row by a degree-one polynomial and continue row reduction. The result will be a matrix with rational expressions as entries, but with each row having rational entries with distinct degrees in each row. This process can be repeated until the first $n+1$ columns have been converted to identity matrix columns, with no rows transforming to the zero-vector precisely because of the uniqueness of the degrees in each row and the construction of the polynomials in $\mathbf{M}$ from (derivatives of) sums of non-cycles of varying size or derivatives of the Casimir generated from the cycle of skip vertices (see \cref{eqn:SkipCCn,eqn:SkipCCn1}). From this construction, we see that $\mathbf{M}$ must have full rank. From the Jacobian criterion \cite[Theorem 2.3]{P08} and \cref{def:LAIntegrability}, we have the following result.

\begin{theorem} Let $G = \Sk(n,k)$ for even $n$ and $2k < n$. Then $G$ is integrable. \hfill\qed
\label{thm:SkipVertex}
\end{theorem}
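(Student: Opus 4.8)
The plan is to verify the three conditions of \cref{def:LAIntegrability} by assembling the lemmas proved above. Fix an even $n$ with $2k<n$, write $G=\Sk(n,k)$, and let $\mathbf{A}$ be its interaction matrix \cref{eqn:AMatrixSkipGraph}. The first task is to pin down $\mathrm{rank}(\mathbf{A})$: by the proof of \cref{lem:EvenSkipGraph} the matrix $\mathbf{A}$ has the two independent null vectors $\mathbf{v}_1,\mathbf{v}_2$, so $\mathrm{rank}(\mathbf{A})\le n-2$, while the Gaussian elimination described just after \cref{lem:EvenSkipGraph} reduces $\mathbf{A}$ to the banded, acyclic, skew-symmetric matrix $\tilde{\mathbf{A}}$ whose Pfaffian is $(-1)^{n-1}$ (equivalently $\det\tilde{\mathbf{A}}=1$), so $\tilde{\mathbf{A}}$ is nonsingular and $\mathrm{rank}(\mathbf{A})=n-2$. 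Since the rank of the Poisson structure $\pi_\mathbf{A}$ equals $\mathrm{rank}(\mathbf{A})$, the rank--count condition of \cref{def:LAIntegrability} requires exactly $s=n-\tfrac12(n-2)=\tfrac n2+1$ conserved quantities.

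Next I would produce that many conserved quantities and check involution. Take $H_1=x_1+\cdots+x_n$ (always conserved); the non-cycle sums $H_m=\sum_{c}\prod_{j\in c}x_j$ over non-cycles of length $m$ for $2\le m\le\tfrac n2$, which are conserved by the non-cycle lemma above (the top one, $H_{n/2}=x_2x_4\cdots x_n$, being the Casimir associated with $\mathbf{v}_1$ via \cref{eqn:TimeDerivativeOfProduct}); and the remaining Casimir $H_{n/2+1}=\prod_{i\in V_s}x_i$ associated with $\mathbf{v}_2$ (again via \cref{eqn:TimeDerivativeOfProduct}, after multiplying by $H_{n/2}$ to clear negative exponents). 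This gives $1+(\tfrac n2-1)+1=\tfrac n2+1=s$ functions, and the commuting lemma proved just above states they are pairwise in involution under the bracket \cref{eqn:Bracket}, settling the involution condition.

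The last and most delicate condition is algebraic independence, for which I would invoke the Jacobian criterion \cite[Theorem 2.3]{P08}: it suffices that the Jacobian $\mathbf{J}(x_1,\dots,x_n)$ of $(H_1,\dots,H_s)$ has rank $s$ over the field $\mathbb{R}(x_1,\dots,x_n)$. Each $H_i$ is homogeneous, so the entries of row $i$ are either $0$ or homogeneous of a single degree, and these degrees are pairwise distinct across the $s$ rows --- they are $0,1,\dots,\tfrac n2-1$ for $H_1,\dots,H_{n/2}$ and $n-k-1$ for $H_{n/2+1}$, with $n-k-1>\tfrac n2-1$ because $2k<n$. Row-reducing $\mathbf{J}$ over $\mathbb{R}(x_1,\dots,x_n)$ --- clearing the constant first row, then dividing each subsequent row by a nonzero polynomial of its degree and eliminating --- one argues that no row is ever killed: the degree separation prevents a row from being a rational multiple of another, and the entries are sums of distinct non-cycle monomials (plus one monomial from the skip-vertex cycle), so no internal cancellation can collapse a row. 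Hence $\mathrm{rank}(\mathbf{J})=s$. With algebraic independence, involution, and the balance $s+\tfrac12\mathrm{rank}(\mathbf{A})=(\tfrac n2+1)+(\tfrac n2-1)=n$ all established, \cref{def:LAIntegrability} gives that $G$ is integrable. I expect the first two steps to be routine; the genuine obstacle is making the last step airtight --- upgrading the heuristic ``distinct degrees plus combinatorial provenance'' into a rigorous proof that every pivot in the row reduction is nonzero, i.e. that no nontrivial $\mathbb{R}(x_1,\dots,x_n)$-linear relation holds among the rows of $\mathbf{J}$.
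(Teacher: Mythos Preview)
Your proposal is correct and follows essentially the same approach as the paper: the rank computation via the two null vectors and the Pfaffian of the reduced banded matrix, the list $H_1,\dots,H_{n/2+1}$ built from non-cycle sums and the two Casimirs (your observation about multiplying by $H_{n/2}$ to clear the negative exponents coming from $\mathbf{v}_2$ is exactly how the paper arrives at $\prod_{i\in V_s}x_i$), the involution lemma, and finally the Jacobian-criterion argument based on the rows having pairwise distinct homogeneous degrees and non-cycle provenance. Your honest flag on the last step is apt --- the paper's own row-reduction argument for full rank of $\mathbf{J}$ is at the same heuristic level you describe, relying on ``uniqueness of the degrees in each row and the construction of the polynomials \dots\ from (derivatives of) sums of non-cycles'' rather than an explicit nonvanishing minor.
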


Illustration of this result for the skip graphs containing six vertices is provided in a Mathematica notebook in the supplemental information. These graphs are used in the taxonomy in \cref{sec:Taxonomy}.

\section{Integrability of Graph Embeddings}\label{sec:Embeddings}
We now proceed to generalise the work by Paik and Griffin \cite{PG23} related to Graph-to-Vertex Embeddings.

\begin{theorem} Let $\Gout = (\Vout,\Eout)$ and $\Gin = (\Vin,\Ein)$ be two graphs that admit Itoh-style conserved quantities and produce integrable dynamical systems with these conserved quantities in the sense of \cref{def:LAIntegrability}. If 
$J = \Gin \hookrightarrow_v \Gout$, then $J$ is integrable.
\label{thm:Embed}
\end{theorem}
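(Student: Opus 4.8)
The plan is to exploit the block structure of $\mathbf{A}$ in \cref{eqn:Aembed,eqn:R} — crucially that every column of $\mathbf{R}$ equals $\mathbf{r}$ — to show that after a change of coordinates the dynamics of $J$ decouples into the dynamics of $\Gout$ and a time‑reparametrised copy of the dynamics of $\Gin$. Write $\mathbf{x} = (\xin,\xoutTilde)$ for the inner block and the remaining‑outer block, set $S = \sum_{i\in\Vin}x_i$, and pass to coordinates $(\mathbf{p},S,\xoutTilde)$ with $\mathbf{p} = \xin/S$; since $n = \nin + (\nout-1)$, the map $(\mathbf{p},(S,\xoutTilde))\mapsto\mathbf{x}$ with $\xin = S\mathbf{p}$ is a diffeomorphism from the interior of $\Delta_{\nin-1}\times\Delta_{\nout-1}$ onto the interior of $\Delta_{n-1}$. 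First I would check directly from \cref{eqn:Replicator} that, because the inner rows of $\mathbf{A}$ all terminate in $\mathbf{r}^T$ and $\xin^T\Ain\xin = 0$,
\begin{equation*}
\dot S = S\,(\mathbf{r}^T\xoutTilde), \qquad \dot p_i = S\,p_i\,(\mathbf{e}_i^T\Ain\mathbf{p}),
\end{equation*}
and that the outer variables together with $S$ evolve exactly as the $\Gout$‑dynamics with the coordinate of the embedded vertex $v$ replaced by $S$. In other words the map $(\xin,\xoutTilde)\mapsto(S,\xoutTilde)$ semiconjugates the $J$‑flow onto the $\Gout$‑flow, while $\mathbf{p}$ follows the $\Gin$‑flow up to the positive time factor $S$.

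From this, conserved quantities of $J$ are produced by pull‑back. Any conserved quantity $G$ of $\Gout$ gives the conserved quantity $\hat{G} := G|_{x_v\mapsto S}$ of $J$ — in particular $H_1 = \sum_i x_i$ is the pull‑back of the outer Hamiltonian. Because the replicator vector field is quadratic and hence degree preserving, conserved quantities may be taken homogeneous, and a homogeneous conserved quantity $H$ of $\Gin$ of degree $d$ gives the conserved quantity $H/S^{d}$ of $J$: along $J$ one has $\dot H = d\,(\mathbf{r}^T\xoutTilde)\,H$ by Euler's identity together with the $\Gin$‑conservation of $H$, so $H/S^{d}$ is constant since $\dot S/S = \mathbf{r}^T\xoutTilde$. (Here the Itoh‑style hypothesis guarantees polynomial, hence degree‑homogenisable, conserved quantities; in degenerate low‑rank situations one must additionally adjoin the Casimirs of $\pi_{\mathbf{A}}$ read off from $\ker(\mathbf{A})$, which commute with everything — this is one place the stated caveats enter.) Counting, the $\ssout$ quantities $\hat G$ and the $\ssin-1$ nontrivial quantities $H/S^{d}$ give $\ssin+\ssout-1$ conserved quantities, which is exactly the number required by \cref{def:LAIntegrability} provided $\mathrm{Rank}(\mathbf{A}) = \mathrm{Rank}(\Ain)+\mathrm{Rank}(\Aout)$.

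Involution and independence are then handled using the coordinate split. Expanding \cref{eqn:Bracket} and using that $A_{ij}=r_j$ for $i\in\Vin$, $j$ outer, and that $\sum_{i\in\Vin}x_i=S$, one finds that on functions of $(S,\xoutTilde)$ the bracket $\{\,\cdot\,,\cdot\,\}_{\mathbf{A}}$ is the pull‑back of $\{\,\cdot\,,\cdot\,\}_{\Aout}$, so the $\hat G$ commute; on functions of $\xin$ alone it reduces to $\{\,\cdot\,,\cdot\,\}_{\Ain}$, and since each $H/S^{d} = H/(H_1^{\mathrm{in}})^{d}$ is a function of conserved quantities of $\Gin$ the $H/S^{d}$ commute; and the mixed brackets $\{\hat G,\,H/S^{d}\}_{\mathbf{A}}$ vanish because $H/S^{d}$ is homogeneous of degree $0$ in $\xin$, so $\sum_{i\in\Vin}x_i\,\partial_i(H/S^{d})=0$ kills the inner–outer part while the inner–inner part is proportional to $\{H_1^{\mathrm{in}},H/S^{d}\}_{\Ain}=0$. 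Independence follows because in the coordinates $(\mathbf{p},S,\xoutTilde)$ the Jacobian of the whole family is block triangular — the $\hat G$ depend only on $(S,\xoutTilde)$ and the $H/S^{d}$ only on $\mathbf{p}$ — with diagonal blocks the full‑rank Jacobians of the independent conserved quantities of $\Gout$ and of $\Gin$ (the latter read on $\{S=1\}$, where $H/S^{d}=H$).

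The main obstacle is the rank identity $\mathrm{Rank}(\mathbf{A}) = \mathrm{Rank}(\Ain)+\mathrm{Rank}(\Aout)$, needed to reconcile the count above with condition (iii) of \cref{def:LAIntegrability}. The inequality $\le$ is immediate: $\mathbf{A}$ is the sum of the matrix of $\Gout$ with $v$ cloned $\nin$ times — which, by \cref{thm:Declone} and invariance of rank under decloning, has rank $\mathrm{Rank}(\Aout)$ — and the matrix $\Ain$ padded by zeros. For $\ge$ I would analyse $\ker(\mathbf{A})$ directly: a null vector $(\mathbf{y},\mathbf{z})$ satisfies $\Ain\mathbf{y} = -(\mathbf{r}^T\mathbf{z})\mathbf{1}_{\nin}$ and $\AoutTilde\mathbf{z} = (\mathbf{1}_{\nin}^T\mathbf{y})\mathbf{r}$, and when $\mathbf{1}_{\nin}\notin\mathrm{im}(\Ain)$ — equivalently when $\Gin$ possesses a Casimir of nonzero total degree, which holds for directed cycles, Bogoyavlenskij graphs, skip‑vertex graphs, and every graph entering our taxonomy — a short computation gives $\dim\ker(\mathbf{A}) = \dim\ker(\Ain)-1+\dim\ker(\Aout)$, which is the identity. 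In the remaining cases (for instance $\Ain$ of full rank) the embedding creates compensating Casimirs of $\pi_{\mathbf{A}}$ that must be counted among the conserved quantities, and making this bookkeeping uniform is precisely where the stated caveats are needed. Granting the identity, \cref{def:LAIntegrability} is satisfied and $J$ is integrable.
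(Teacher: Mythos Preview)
Your argument is correct and takes a genuinely different route from the paper's. The paper never changes coordinates: it works directly with explicit polynomial conserved quantities, building two families. First, it replaces $x_v$ by $F_1(\xin)=S$ in \emph{every} conserved quantity of $\Gout$ to obtain $Z_i$ --- these are exactly your $\hat G$. Second, it singles out the (assumed unique) Casimir $C = x_vP(\xoutTilde)$ of $\Gout$ touching $x_v$ and, for each non-Hamiltonian $F_i$ of $\Gin$ of degree $r$, forms $\Gamma_i = F_i(\xin)\,P(\xoutTilde)^{r}$. Your family $F_i/S^{r}$ is equivalent to theirs, since $SP = \hat C$ is among the $\hat G$ and $\Gamma_i = (F_i/S^{r})(SP)^{r}$. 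Involution in the paper is checked by a term-by-term expansion of the bracket using the Itoh-style monomial form, and independence by an ad~hoc algebraic argument; your degree-zero/Euler trick and the block-diagonal Jacobian in $(\mathbf{p},S,\xoutTilde)$ coordinates are cleaner and rely only on homogeneity rather than the full Itoh-style hypothesis. The trade-off is that your $F_i/S^{r}$ are rational, whereas the paper's $\Gamma_i$ are polynomial and (up to the power on $P$) again Itoh-style --- this matters in \cref{sec:MultipleEmbeddings}, where the construction is iterated and the outer graph at each stage must admit Itoh-style conserved quantities. On the rank identity, the paper argues by counting Casimirs of $J$ (asserting that every Casimir arises from the construction), implicitly assuming $\Gout$ has a Casimir containing $x_v$; your kernel computation is more explicit and places the condition $\mathbf{1}_{\nin}\notin\mathrm{im}(\Ain)$ on $\Gin$ instead. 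Both conditions hold for all graphs appearing in the taxonomy (any graph with an interior fixed point has a null vector with positive coordinate sum), so the two sets of caveats are comparable in scope.
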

\begin{proof}
Without loss of generality, we assume $v=1$ and that $\Gout$ has $\nout$ vertices and $\Gin$ has $\nin$ vertices. Denote the interaction matrices of the graphs by $\Aout$ and $\Ain$, respectively. The interaction matrix of $J$ is given by \cref{eqn:Aembed} as,
\begin{equation*}
    \mathbf{A} = \begin{bmatrix} \Ain & \mathbf{R}^T\\-\mathbf{R} & \AoutTilde
    \end{bmatrix}.
\end{equation*}
Define the variables vectors,
\begin{align*}
    &\xin = \langle{x_1,\dots,x_{\nin}}\rangle,\\
    &\xout = \langle{x_{\bar{1}},\dots,x_{\noutTilde}}\rangle,\quad \text{and} \\
    &\xoutTilde = \langle{x_{\bar{2}},\dots, x_{\noutTilde}}\rangle.
\end{align*} 

Suppose $\Gout$ has $\ssout$ conserved quantities denoted $H_1,\dots,H_{\ssout}$ and suppose $\Gin$ has $\ssin$ conserved quantities denoted, $F_1,\dots,F_{\ssin}$. We will assume the Hamiltonians are,
\begin{align*}
    &H_1(x_{\bar{1}},\dots,x_{\noutTilde}) = x_{\bar{1}} + \cdots + x_{\noutTilde}\quad \text{and}\\
    &F_1(x_1,\dots,x_{\nin}) = x_1 + \cdots + x_{\nin}.
\end{align*}
Necessarily, $\Gout$ has at most one Casimir containing $x_1$ with unit power. For simplicity, denote this Casimir as $C$ with corresponding eigenvector for $\Aout$, $\bm{\chi} = \langle{\chi_{1}, \dots\chi_{\nout}}\rangle$. Our assumption that $\Gout$ admits Itoh-style conserved quantities implies that $\bm{\chi}$ is made up of ones and zeros with $\chi_1 = 1$. This function $C$ is among the functions $H_1,\dots,H_{\nout}$ and there may be other Casimirs not containing $x_1$.

Our assumption that $\Gin$ and $\Gout$ admit Itoh-style conserved quantities implies that we can write each of the conserved quantities of $\Gout$ and $\Gin$ as,
\begin{align} 
&H_i(x_{\bar{1}},\xoutTilde) = \sum_{l \in \mathcal{U}_i} x_{\bar{1}}P_l(\xoutTilde) + \sum_{l \in \mathcal{V}_i}  Q_l(\xoutTilde)\label{eqn:GoutConserved}\\
&F_i(\xin) = \sum_{l \in \mathcal{W}_i} A_l(\xin),\label{eqn:GinConserved}
\end{align}
where $P_l$, $Q_l$ and $A_l$ are monomials and $\mathcal{U}_i$, $\mathcal{V}_i$ and $\mathcal{W}_i$ are appropriately defined index sets over the terms of the conserved quantities $H_i$ and $F_i$. See \cref{eqn:ItohStyle}. Also, an example is given in \cref{eqn:ItohCC} for the cycle graphs. In particular, the Casimir $C$ will have form $C = x_1P(\xoutTilde)$. 

The following procedure can be used to construct the new conserved quantities for $J$:
\begin{enumerate}
    \item Using the Casimir $C$ of $\Gout$ containing the variable $x_1$ and the non-Hamiltonian conserved quantities of $\Gin$ define,
    \begin{equation}
        \Gamma_{i}(\xin,\xoutTilde) = F_i(\xin)^{\chi_{1}}P(\xoutTilde)^r,
        \label{eqn:Gamma}
    \end{equation}
    for $i \in \{2,\dots,\ssin\}$. Here $r$ is the order of $F_i$ as a polynomial. Note that we have replaced $x_1$ in $C$ with $F_i(\xin)$. Recall by assumption $\chi_{1} = 1$, but we leave it explicit for clarity in the proof. We show that when $F_i$ is a Casimir of $\Gin$, then $\Gamma_{i}$ is a Casimir of $J$.
    
    \item Using the Hamiltonian $F_1(\xin)$ of $\Gin$ and every conserved quantity of $\Gout$ that contains $x_1$ define,
    \begin{equation}
        Z_{i} = \sum_{l \in \mathcal{U}_i} F_1(\xin)P_l(\xoutTilde) + \sum_{l \in \mathcal{V}_i}  Q_l(\xoutTilde).
        \label{eqn:Z}
    \end{equation}
    That is, we replace $x_1$ in $H_i$ with $F_1(\xin)$. Any other remaining conserved quantities that do not contain $x_1$ are carried forward to $J$, which can also be thought of as replacing $x_1$ with $F_1$ in a trivial way.
\end{enumerate}
The procedure is illustrated in \cref{fig:Proof-Pic}.
\begin{figure}
\centering
\includegraphics[width=0.75\textwidth]{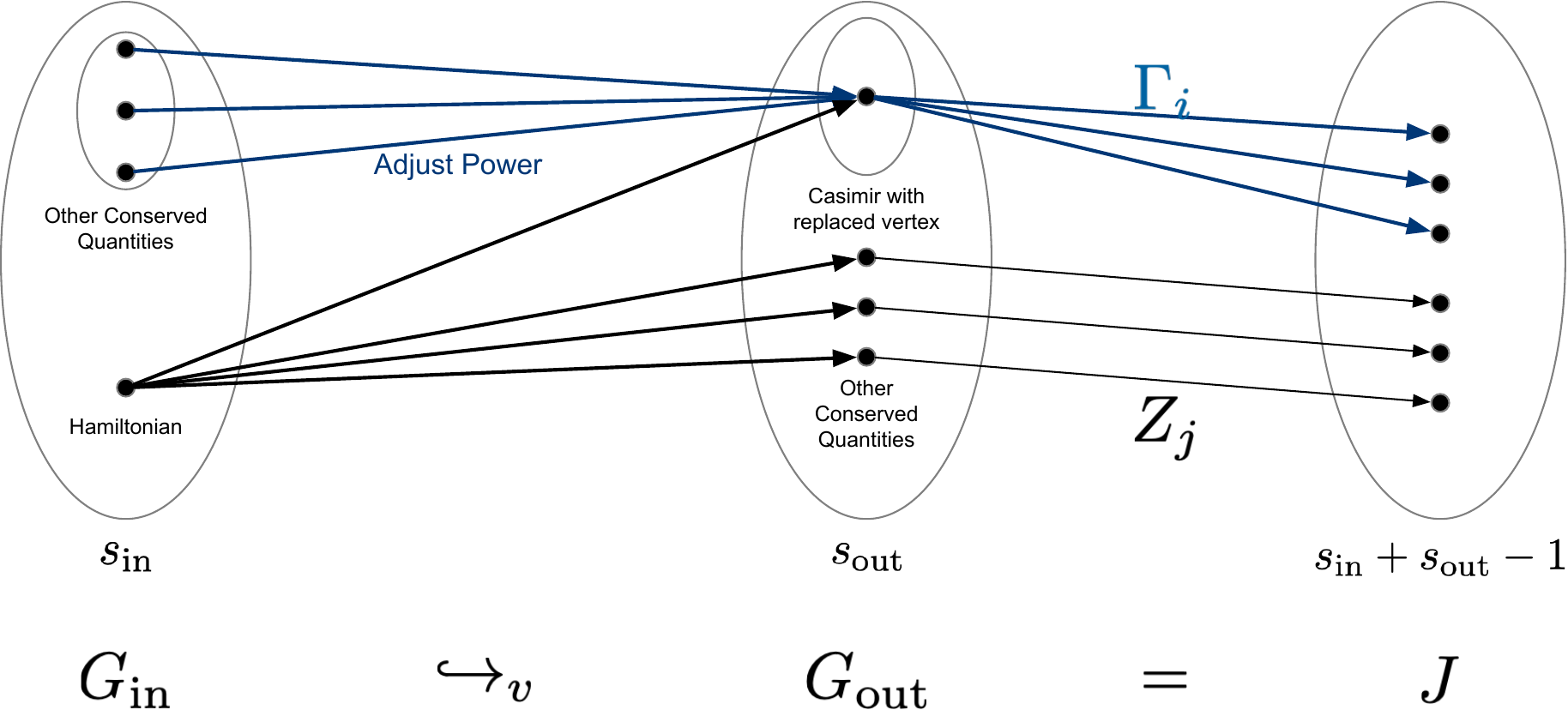}
\caption{To construct conserved quantities for an embedding of two Bogoyavlenskij  graphs, first replace all instances of the variable representing the embedding vertex in the Casimirs of the outer graph with the conserved quantities from the inner graph, correcting the power of the remaining terms appropriately. Then replace all instances of the variable representing the embedding vertex with the Hamiltonian of the inner graph in non-Casimir conserved quantities of the outer graph.}
\label{fig:Proof-Pic}
\end{figure}

\begin{claim} The quantities defined in \cref{eqn:Gamma} are conserved.
\end{claim}
\begin{proof} By assumption, there are vectors $\bm{\alpha}_l$ so that,
\begin{equation*}
    \frac{dA_l}{dt} = A_l(\xin)\bm{\alpha}_l^T \Ain\xin,
\end{equation*}
with,
\begin{equation}
    \frac{d F_i}{dt} = \sum_{l \in \mathcal{W}_i} A_l(\xin)\bm{\alpha}_l^T \Ain\xin = 0.
    \label{eqn:InGoesZero}
\end{equation}
Likewise, for the eigenvector vector $\bm{\chi} = \langle{\chi_{1},\dots\chi_{\nout}}\rangle$, we have,
\begin{equation*}
    \frac{d C}{dt} = C \bm{\chi}^T\Aout\xout  = 0.
\end{equation*}
The vectors $\bm{\alpha}_l$ and $\bm{\chi}_j$ are just the powers appearing in the monomials $A_l$ and Casimir $C_j$. In particular, $\bm{\chi}$ is an eigenvector of $\Aout$. To be explicit, let $\bar{\bm{\chi}} =\langle{\chi_{2},\dots\chi_{\nout}}\rangle$. We know from, \cref{eqn:Aout}, that,
\begin{equation*}
    \bm{\chi}^T\Aout = \begin{bmatrix}\chi_{1} & \bar{\bm{\chi}}^T\end{bmatrix} \begin{bmatrix} 0 & \mathbf{r}^T\\-\mathbf{r} & \AoutTilde\end{bmatrix} = \begin{bmatrix} -\bar{\bm{\chi}}^T\mathbf{r} & \chi_{1}\mathbf{r}^T + \bar{\bm{\chi}}^T\AoutTilde\end{bmatrix} = \mathbf{0}.
\end{equation*}
Thus, we have the identities, 
\begin{align}
    &\bar{\bm{\chi}}^T\mathbf{r} = 0 \label{eqn:OutGoesZero1}\\
    &\chi_{1}\mathbf{r}^T + \bar{\bm{\chi}}^T\AoutTilde = \mathbf{0},\label{eqn:OutGoesZero2}
\end{align}
which can be interpreted as a kind of flow balance equations. By construction, $\Gamma_{i} = F_i(\xin)P(\xoutTilde)$ is just a sum of monomials. To account for the embedding and to be notationally consistent, let $\tilde{\bm{\chi}}_j =\langle{\chi_{\tilde{2}},\dots\chi_{\noutTilde}}\rangle$. 

Consequently, there are vectors 
$\bm{\gamma}_{l} = \langle{\chi_{1}\bm{\alpha}_l,r\tilde{\bm{\chi}}}\rangle$ so that,
\begin{multline*}
    \frac{d\Gamma_{i} }{dt} = \sum_{l\in \mathcal{W}_i}A_lP^r \bm{\gamma}_{l}^T\mathbf{A}\mathbf{x} = \\
    \sum_{l \in \mathcal{W}_i} A_lP^r \left[ 
    \left(\chi_{1}\bm{\alpha}_l^T\Ain - r\tilde{\bm{\chi}}^T\mathbf{R}\right)\xin + 
    \left(\chi_{1}\bm{\alpha}_l^T\mathbf{R}^T + r\tilde{\bm{\chi}}^T\AoutTilde\right)\xoutTilde
    \right].
\end{multline*}
Recall, $r$ is the order of $F_i$. From the structure of $\mathbf{R}$, see \cref{eqn:R}, we see that,
\begin{equation}
    \bm{\alpha}_l^T\mathbf{R}^T = r\mathbf{r}^T.
    \label{eqn:rFactor}
\end{equation}
Likewise, we have from \cref{eqn:OutGoesZero1} we see,
\begin{equation}
    \tilde{\bm{\chi}}^T\mathbf{R} = \mathbf{0}
    \label{eqn:FactorChi}
\end{equation}
Therefore, from \cref{eqn:InGoesZero,eqn:OutGoesZero1,eqn:OutGoesZero2,eqn:rFactor,eqn:FactorChi} we have,
\begin{multline*}
    \frac{d\Gamma_{i} }{dt} = 
    P^r\sum_{l \in \mathcal{W}_i}A_l\bm{\alpha}_l^T\Ain\xin - rP^r\sum_{l \in \mathcal{W}_i}A_l\tilde{\bm{\chi}}^T\mathbf{R}\xin + \\P^r\sum_{l\in \mathcal{W}_i} rA_l\left(\chi_{1}\mathbf{r}^T + \bm{\chi}^T\AoutTilde\right)\xoutTilde = 0.
\end{multline*}
\end{proof}
From the proof of this claim, it's immediately clear that if $F_i$ is a Casimir of $\Gin$, then there is only one term in the sum and the resulting vector $\bm{\gamma}_{i}$ must be an eigenvector of $\mathbf{A}$ with eigenvalue $0$ and therefore $\Gamma_{i}$ is a Casimir of the new system.

\begin{claim} The quantities defined in \cref{eqn:Z} are conserved.
\end{claim}
\begin{proof} This follows from the use of the Hamiltonian $F_1(\xin)$ in \cref{eqn:Z} and the fact that $x_{\bar{1}}$ is being replaced by $x_1,\dots,x_{\nin}$ and $Z_i$ is constructed from the conserved quantity in \cref{eqn:GoutConserved}. That is, we are simply renaming $x_{\bar{1}}$ and this will not affect the time derivative of the quantity. 
\end{proof}

\begin{claim} The constructed conserved quantities commute under the action of the quadratic bracket in \cref{eqn:Bracket}.
\end{claim}
\begin{proof}
It suffices to consider two functions of form,
\begin{equation*}
    \Gamma_\alpha = F_\alpha(\xin)P_\alpha(\xoutTilde)^{r_\alpha} \quad \Gamma_\beta = F_\beta(\xin)P_\beta(\xoutTilde)^{r_\beta},
\end{equation*}
under the assumption that $F_\alpha$ and $F_\beta$ commute under the bracket restricted to $\xin$ and $x_{\bar{1}}^{r_\alpha}P_\alpha(\xoutTilde)^{r_\alpha}$ and $x_{\bar{1}}^{r_\beta}P_\beta(\xoutTilde)^{r_\beta}$ commute under the bracket restricted to $\xout$. Here $r_\alpha$ is the order of the polynomial $F_\alpha$ and $r_\beta$ is the order of the polynomial $F_\beta$.

In what follows, we assume that $1 < \cdots < \nin < \bar{2} < \cdots \noutTilde$. That is, the bar indicates an appropriate index shift as a result of the embedding. Consider the bracket,
\begin{equation*}
    \left\{ \Gamma_\alpha, \Gamma_\beta\right\}_\mathbf{A} = 
    \sum_{i < j} \underbrace{A_{ij}x_ix_j \left(
    \frac{\partial F_\alpha P_\alpha^{r_\alpha}}{\partial x_i}\frac{\partial F_\beta P_\beta^{r_\beta}}{\partial x_j} - 
    \frac{\partial F_\alpha P_\alpha^{r_\alpha}}{\partial x_j}\frac{\partial F_\beta P_\beta^{r_\beta}}{\partial x_i}
    \right)}_{\Xi}
\end{equation*}
Denote the term under the sum by $\Xi$. When $i< j \leq \nin$, then $\Xi$ simplifies to,
\begin{equation*}
    P_\alpha^{r_\alpha}P_\beta^{r_\beta}A_{ij}x_ix_j\left(\frac{\partial F_\alpha }{\partial x_i}\frac{\partial F_\beta }{\partial x_j} - 
    \frac{\partial F_\alpha }{\partial x_j}\frac{\partial F_\beta}{\partial x_i}\right),
\end{equation*}
which vanishes under the sum because $F_\alpha$ and $F_\beta$ commute. 

Now consider the remaining elements of the sum with $j \geq \bar{2}$. The assumption that the graphs admit Itoh-style conserved quantities is now critical. Assume $\Xi$ does not evaluate to zero. If $i \geq \bar{2}$, then both derivatives affect the terms $P_\alpha^{r_\alpha}$ and $P_\beta^{r_\beta}$. We have,
\begin{multline*}
    \Xi = 
    A_{ij}x_ix_j \left(
    \frac{\partial F_\alpha P_\alpha^{r_\alpha}}{\partial x_i}\frac{\partial F_\beta P_\beta^{r_\beta}}{\partial x_j} - 
    \frac{\partial F_\alpha P_\alpha^{r_\alpha}}{\partial x_j}\frac{\partial F_\beta P_\beta^{r_\beta}}{\partial x_i}
    \right) = \\
    A_{ij}\left(r_\alpha r_\beta F_\alpha F_\beta P_\alpha^{r_\alpha} P_\beta^{r_\beta} - r_\alpha r_\beta F_\alpha F_\beta P_\alpha^{r_\alpha} P_\beta^{r_\beta}\right) = 0,
\end{multline*}
because the $x_ix_j$ multiplier effectively removes the differentiation in the functions $P_\alpha^{r_\alpha}$ and $P_\beta^{r_\beta}$. 

This leaves the case when $i \leq \nin$ and $j \geq \bar{2}$. Here we have,
\begin{equation*}
    \Xi = \underbrace{r_\beta A_{ij}F_\alpha F_\beta P_\alpha^{r_\alpha}P_\beta^{r_\beta}}_{T_1} - \underbrace{r_\alpha A_{ij}F_\alpha F_\beta P_\alpha^{r_\alpha}P_\beta^{r_\beta}}_{T_2}
\end{equation*}
However, because the order of $F_\alpha$ is $r_\alpha$, the term $T_1$ will occur $r_\alpha$ times in the sum, while the term $T_2$ will occur $r_\beta$ times in the sum because $F_\beta$ has order $r_\beta$. Thus, when summing over all $i$ indices occurring in $F_\alpha$ and $F_\beta$, the sum will produce,
\begin{equation*}
    r_\alpha r_\beta A_{ij}F_\alpha F_\beta P_\alpha^{r_\alpha}P_\beta^{r_\beta} - r_\beta r_\alpha A_{ij}F_\alpha F_\beta P_\alpha^{r_\alpha}P_\beta^{r_\beta} = 0.
\end{equation*}
It follows immediately that any pair of constructed conserved quantities commute under the quadratic bracket.
\end{proof}

We now show that there are a sufficient number of conserved quantities to ensure that the resulting dynamics generated by the graph $J$ are integrable. Let $\ccin$ and $\ccout$ be the total number of Casimirs in $\Gin$ and $\Gout$ respectively. By construction, $J$ has at least $\ccin + \ccout - 1$ Casimirs consisting of the $\ccin$ $\Gamma_i$ that were constructed and the remaining $\ccout - 1$ Casimirs that do not contain $x_1$. However, it follows from the structure of $\mathbf{A}$ that these are the only possible Casimirs, as any other Casimirs would have the form $F_i(\xin)Q(\xout)$, where $Q$ is yet another Casimir of $\Gout$ containing $x_1$. Consequently, we have:
\begin{multline*}
    \mathrm{Rank}(\mathbf{A}) = (\nin + \nout - 1) - (\ccin + \ccout -1) = (\nin - \ccin) + (\nout - \ccout) =\\
    \mathrm{Rank}(\Ain) + \mathrm{Rank}(\Aout).
\end{multline*}
As illustrated in \cref{fig:Proof-Pic}, we have constructed $\ssin + \ssout - 1$ conserved quantities. From \cref{def:LAIntegrability} and our assumption on the integrability of $\Gin$ and $\Gout$ we know that,
\begin{align*}
    &\ssin + \frac{1}{2}\mathrm{Rank}(\Ain) = \nin\\
    &\ssout + \frac{1}{2}\mathrm{Rank}(\Aout) = \nout.
\end{align*}
Adding these quantities together and subtracting $1$ yields,
\begin{equation*}
    \ssin + \ssout - 1 + \frac{1}{2}\mathrm{Rank}(\mathbf{A}) = \nin + \nout - 1.
\end{equation*}

\begin{claim} The constructed conserved quantities are algebraically independent.
\end{claim}
\begin{proof} Consider the conserved quantities defined in \cref{eqn:Gamma,eqn:Z}. These are constructed from algebraically independent conserved quantities written in terms of distinct sets of variables and produced by multiplication (and simple powers). Consequently, if there is an algebraic function $\mathfrak{F}(\Gamma_2,\dots,\Gamma_{\ssin},Z_1,\dots,Z_{\ssout}) = 0$, then this function must operate on the classes of conserved quantities independently. That is, it follows that neither the conserved quantities $F_1,\dots F_{\ssin}$ nor the conserved quantities $H_1,\dots,H_{\ssout}$ are independent because $\mathfrak{F}$ can be decomposed into functions operating on the two classes of variables separately. This contradicts our assumption that $F_1,\dots F_{\ssin}$ and $H_1,\dots,H_{\ssout}$ are both independent.
\end{proof}

It follows at once that the dynamics generated by $J$ are integrable. This completes the proof.
\end{proof}

Interestingly, cloning is the simplest example of this process. If we take the unconnected graph of $n$ vertices and embed it into a single vertex, then we would obtain a cloning graph with n cloned vertices. The simplest example of this process for connected graphs is the embedding of a directed three cycle into another directed three cycle, which is used as an ecological model by Allesina et al. \cite{AL11} and shown to be integrable by Paik and Griffin \cite{PG23} in their discussion on the integrability of tournament embeddings, which is generalised by \cref{thm:Embed}. Instead, we illustrate the more complex case by embedding a directed three-cycle into a directed four-cycle, as illustrated in \cref{fig:GraphEmbedding}. The conserved quantities for $\Gin$ are,
\begin{equation*}
    F_1 = x_1 + x_2 + x_3 \quad \text{and} \quad F_2 = x_1x_2x_3.
\end{equation*}
The conserved quantities for $\Gout$ are,
\begin{equation*}
    H_1 = x_1 + x_2 + x_3 + x_4, \quad H_2 = x_1x_3,\quad \text{and} \quad H_3 = x_2x_4.
\end{equation*}
Notice, $\Gout$ has two Casimirs, $H_2$ and $H_3$, only one of which contains $x_1$. We have,
\begin{equation*}
    \Gamma_1 = \underbrace{(x_1x_2x_3)}_{F_2}\underbrace{x_{\bar{3}}^3}_{P^3},
\end{equation*}
and
\begin{align}
    &Z_1 = x_1 + x_2 + x_3 + x_{\bar{2}} + x_{\bar{3}} + x_{\bar{4}}\\
    &Z_2 = (x_1 + x_2 + x_3)x_{\bar{3}}\\
    &Z_3 = x_2x_4
\end{align}
Direct inspection shows that these are conserved quantities under the replicator dynamics generated by the graph shown in \cref{fig:GraphEmbedding} and that they are algebraically independent and commute under the action of the bracket.

Before concluding this section, it is worth noting that the matrix of the embedded graph need not have only $0,\pm1$ entries. The matrix, corresponding to $\Ain$ can be multiplied by any constant, effectively speeding up the dynamics in this subpopulation of variables, without changing the integrability of the system. This allows us to extend our results to a broader class of interaction matrices and potential make formal statements about the integrability of dynamics that exhibit multiple, hierarchical time scales. We omit a formal analysis of this observation and leave it for future work.

\section{Multiple Simultaneous Embeddings}\label{sec:MultipleEmbeddings}
Paik and Griffin note in \cite{PG23} that this embedding procedure can be generalised to multiple simultaneous embeddings to produce integrable (tournament) graphs. We now assume that multiple graphs with Itoh-style conserved quantities are being embedded into a single graph also with Itoh-style conserved quantities. For this, we use the notation,
\begin{equation}
J = (\Gin^1,\dots,\Gin^m) \hookrightarrow_{(v_1,\dots,v_m)} \Gout.
\end{equation}
For simplicity, we now refer to the conserved quantities of graph $G^j_\text{in}$ (with $j \in \{1,\dots,m\}$) as $F^j_i$ with $i \in \{1,\dots,\ssin^j$\} and the conserved quantities of $\Gout$ by $H_i$ for $i\in \{1,\dots\ssout\}$. Without loss of generality, we assume that the vertices into which we are embedding are ordered so that $v_i = i$ for $i\in\{1,\dots,m\}$. By assumption, the conserved quantities of $\Gout$ can be written as,
\begin{equation*}
    H_i = \sum_{l \in \mathcal{U}_i} B_l(x_1,\dots, x_{\nout}),
\end{equation*}
where $B_l$ is a monomial (product) and $\mathcal{U}_i$ is an appropriate index set (as before). Also, any Casimir $C_i$ of $\Gout$ containing $x_{i_1},\dots,x_{i_k}$ for $1 \leq i_1 < \cdots < i_k \leq m$ can be written,
\begin{equation*}
    C_i(\xout) = x_{i_1}\cdots x_{i_k} P_i^{i_1,\dots, i_k}(\xoutTilde),
\end{equation*}
where, the definition of $\xoutTilde$ is modified appropriately and $P_i^{i_1,\dots,i_k}$ is an appropriate monomial.

To build the conserved quantities of $J = (\Gin^1,\dots,\Gin^m) \hookrightarrow_{(v_1,\dots,v_m)} \Gout$, we use the following procedure, which generalises the one illustrated in \cref{fig:Proof-Pic}.
\begin{enumerate}
    \item Suppose $C_i(\xout) = x_{i_1}\cdots x_{i_k} P(\xoutTilde)$ is a Casimir of $\Gout$ containing $x_{i_1},\dots x_{i_k}$, where $i_1,\dots,i_k$ are a subset of the indices into which $(\Gin^1,\dots,\Gin^m)$ are embedded. Let $H^{i_1}_{j_1},\dots,H^{i_k}_{j_k}$ be conserved quantities of $G^{i_1},\dots,G^{i_k}$ with orders $r_{i_1},\dots,r_{i_k}$. Let $r = \max\{r_{i_1},\dots r_{i_k}\}$. Then,
    \begin{equation*}
        \Gamma_{i, j_1,\dots, j_k}^{i_1,\dots, i_k} = \left(H^{i_1}_{j_1}\right)^{r/r_{i_1}} \cdots \left(H^{i_k}_{j_k}\right)^{r/r_{i_k}} \left(P_i^{i_1,\dots, i_k}\right)^r,
    \end{equation*}
    is a conserved quantity and when $H^{i_1}_{j_1},\dots,H^{i_k}_{j_k}$ are all Casimirs of $\Gin^{i_1},\cdots,\Gin^{i_k}$, then the conserved quantity is a Casimir of $J$.

    \item Suppose $H_i(\xout)$ is a non-Casimir conserved quantity of $\Gout$ containing $x_{i_1},\dots x_{i_k}$, where $i_1,\dots,i_k$ are a subset of the indices into which $(\Gin^1,\dots,\Gin^m)$ are embedded. Let $H^{i_1}_1,\dots,H^{i_k}_1$ be the Hamiltonians of the input graphs $\Gin^{i_1},\cdots,\Gin^{i_k}$. Then,
    \begin{equation*}
        Z_{i, i_1,\dots, i_k} = \sum_{l \in \mathcal{U}_i}\left(H^{i_1}_1, \dots, H^{i_k}_1, \xoutTilde \right),
    \end{equation*}
    is also a conserved quantity.
\end{enumerate}
This can be summarised as follows: First, replace the $x_1,\dots,x_m$ in the Casimir(s) of $\Gout$ with any combination of conserved quantities from the corresponding input graphs, being sure that each term in the resulting product is of the same polynomial order by raising the terms to an appropriate power. Then replace $x_1,\dots,x_m$ in any other non-Casimir conserved quantities of $\Gout$ with the Hamiltonians of $\Gin^1,\dots,\Gin^m$. This will produce the conserved quantities of $J$.

To illustrate this procedure, consider the embedding of two directed five-cycles into a third skip graph with six vertices, as shown in \cref{fig:FourInFour}.
\begin{figure}[htbp]
\centering
\includegraphics[width=0.5\textwidth]{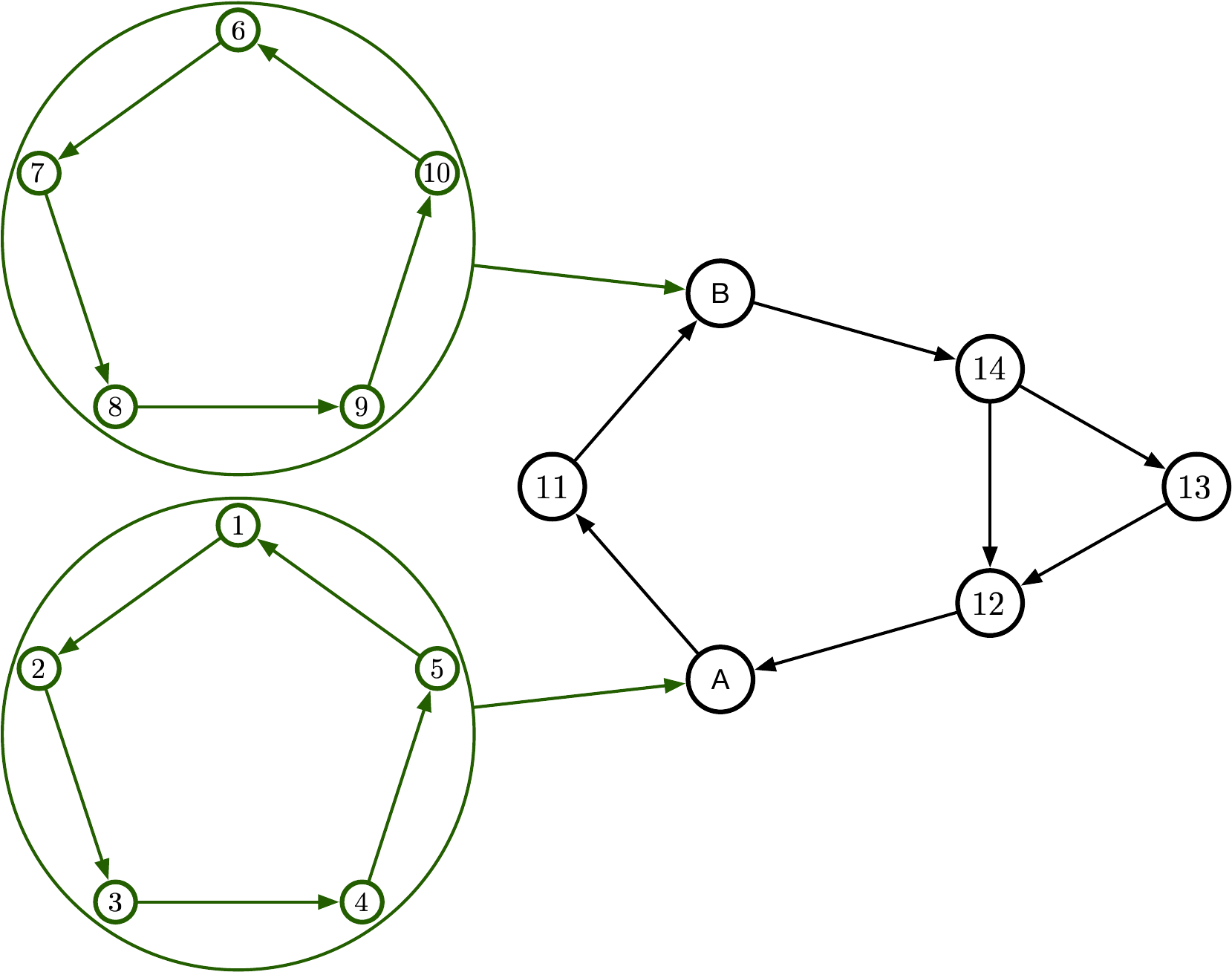}
\caption{When we embed two directed five-cycles into a skip-graph with six vertices, we require eight algebraically independent conserved quantities, which can be constructed using the conserved quantities of the constituent graphs.}
\label{fig:FourInFour}
\end{figure}
We have labelled the vertices for simplicity in constructing the conserved quantities. The conserved quantities for one of the five cycles are,
\begin{align*}
& F_1 = x_1 + x_2 + x_3 + x_4 + x_5\\
& F_2 = x_1 x_3+x_5 x_3+x_1 x_4+x_2 x_4+x_2 x_5\\
& F_3 = x_1 x_2 x_3 x_4 x_5.
\end{align*}
The conserved quantities for the skip-graph, in terms of their labelled vertices, are,
\begin{align*}
& H_1 = x_A+x_B+x_{11}+x_{12}+x_{13}+x_{14}\\
& H_2 = x_A x_B+x_Ax_{13} +x_Ax_{14} +x_Bx_{12} +x_Bx_{13} +
x_{11} x_{12}+x_{11} x_{13}+x_{11} x_{14} \\
& H_3 =  x_A x_Bx_{13}\\
& H_4 = x_A x_B x_{11} x_{12} x_{14} 
\end{align*}
We now organise the constructed conserved quantities into groups depending on which classes of inner-graph conserved quantities are substituted into the outer-graph conserved quantities.
\\
\noindent\textbf{Group 1: Casimirs into Casimir}\\
\hspace*{2em}
$(x_1 x_2 x_3 x_4 x_5)(x_6 x_7 x_8 x_9 x_{10}) x_{11}^5 x_{12}^5 x_{14}^5$\\
\hspace*{2em}
$(x_1 x_2 x_3 x_4 x_5) (x_6 x_7 x_8 x_9 x_{10}) x_{13}^5$\\

\noindent\textbf{Group 2: Mixed Casimir and Second Order Conserved Quantities in Casimir}\\
\hspace*{2em}
$x_1 x_2 x_3 x_4 x_5 \left(x_6 x_8+x_{10} x_8+x_6 x_9+x_7 x_9+x_7 x_{10}\right)^{5/2} x_{11}^5 x_{12}^5
   x_{14}^5$\\
\hspace*{2em}
$x_1 x_2 x_3 x_4 x_5 \left(x_6 x_8+x_{10} x_8+x_6 x_9+x_7 x_9+x_7 x_{10}\right)^{5/2} x_{13}^5$\\
\hspace*{2em}
$\left(x_1 x_3+x_5 x_3+x_1 x_4+x_2 x_4+x_2 x_5\right)^{5/2} x_6 x_7 x_8 x_9 x_{10} x_{11}^5 x_{12}^5
   x_{14}^5$\\
\hspace*{2em}
$\left(x_1 x_3+x_5 x_3+x_1 x_4+x_2 x_4+x_2 x_5\right)^{5/2} x_6 x_7 x_8 x_9 x_{10} x_{13}^5$\\

\noindent\textbf{Group 3: Second Order Conserved Quantities in Casimir}\\
\hspace*{2em}
$\left(x_1 x_3+x_5 x_3+x_1 x_4+x_2 x_4+x_2 x_5\right) \left(x_6 x_8+x_{10} x_8+x_6 x_9+x_7 x_9+x_7
   x_{10}\right) x_{11}^2 x_{12}^2 x_{14}^2$ \\
\hspace*{2em}
$\left(x_1 x_3+x_5 x_3+x_1 x_4+x_2 x_4+x_2 x_5\right) \left(x_6 x_8+x_{10} x_8+x_6 x_9+x_7 x_9+x_7
   x_{10}\right) x_{13}^2$\\

\noindent\textbf{Group 4: Mixed Hamiltonians and Second Order Conserved Quantities in Casimir}\\
\hspace*{2em}
$\left(x_1+x_2+x_3+x_4+x_5\right)^2 \left(x_6 x_8+x_{10} x_8+x_6 x_9+x_7 x_9+x_7 x_{10}\right) x_{11}^2 x_{12}^2 x_{14}^2$ \\
\hspace*{2em}
$\left(x_1+x_2+x_3+x_4+x_5\right)^2 \left(x_6 x_8+x_{10} x_8+x_6 x_9+x_7 x_9+x_7 x_{10}\right) x_{13}^2$\\
\hspace*{2em}
$\left(x_1 x_3+x_5 x_3+x_1 x_4+x_2 x_4+x_2 x_5\right) \left(x_6+x_7+x_8+x_9+x_{10}\right)^2 x_{11}^2
   x_{12}^2 x_{14}^2$\\
\hspace*{2em}
$\left(x_1 x_3+x_5 x_3+x_1 x_4+x_2 x_4+x_2 x_5\right) \left(x_6+x_7+x_8+x_9+x_{10}\right)^2 x_{13}^2$\\

\noindent\textbf{Group 5: Hamiltonians in Casimir}\\
\hspace*{2em}
$\left(x_1+x_2+x_3+x_4+x_5\right) \left(x_6+x_7+x_8+x_9+x_{10}\right) x_{11} x_{12} x_{14}$ \\
\hspace*{2em}
$\left(x_1+x_2+x_3+x_4+x_5\right) \left(x_6+x_7+x_8+x_9+x_{10}\right) x_{13}$\\

\noindent\textbf{Group 6: Hamiltonians in Second Order Conserved Quantity}\\
\begin{multline*}
    \left(x_1+x_2+x_3+x_4+x_5\right) \left(x_6+x_7+x_8+x_9+x_{10}\right)+\\
    x_{12}
   \left(x_6+x_7+x_8+x_9+x_{10}\right)+
   x_{13} \left(x_6+x_7+x_8+x_9+x_{10}\right)+\\
   \left(x_1+x_2+x_3+x_4+x_5\right) x_{13}+\left(x_1+x_2+x_3+x_4+x_5\right)
   x_{14}+\\ x_{11}
   x_{12}+  x_{11} x_{13} + x_{11} x_{14}
\end{multline*}
\\
\noindent\textbf{Group 7: Hamiltonians in Hamiltonian}\\
\hspace*{2em}
$x_1+x_2+x_3+x_4+x_5+x_6+x_7+x_8+x_9+x_{10}+x_{11}+x_{12}+x_{13}+x_{14}$\\

Straightforward computation shows that these are conserved quantities for the graph resulting from the embedding. These conserved quantities can be re-organised into polynomials of order $1$, $2$, $3$, $5$, $6$, $10$, $15$ and $25$. Analysing the matrix corresponding to the graph constructed from the embedding shows that eight conserved quantities are required for the graph to be integrable. The eight distinct polynomials thus show that this graph produces integrable dynamics as expected. 

The proof of \cref{thm:Embed} can be readily extended to the case when multiple inner graphs with Itoh-style conserved quantities are embedded into an outer graph with Itoh-style conserved quantities, when we use the conserved quantity construction procedure just outlined. In this case, the matrix for the resulting graph is simply decomposed into additional blocks and the same logic is applied, but in a combinatorial way. Moreover, we note that the proof structure relies only on the fact that the \textit{outer} graph admits Itoh-style conserved quantities. Meanwhile, the inner graph(s) can be far more generic, having themselves been constructed from an embedding procedure. An example of this is given in \ref{sec:DoubleEmbedding}. This leads to the following (recursive) definition and theorem.

\begin{definition}[Embedding Graph] A graph $J$ is an embedding graph if:
\begin{enumerate}
    \item There are graphs $\Gin^1,\dots,\Gin^m$ and $\Gout$ that all admit Itoh-style conserved quantities and are integrable in the sense of \cref{def:LAIntegrability} using those conserved quantities, and $J = (\Gin^1,\dots,\Gin^m) \hookrightarrow_{(v_1,\dots,v_m)} \Gout$.
    \item There are embedding graphs $\Gin^1,\dots,\Gin^m$ and a graph $\Gout$ that admits Itoh-style conserved quantities and is integrable in the sense of \cref{def:LAIntegrability} using those conserved quantities, and $J = (\Gin^1,\dots,\Gin^m) \hookrightarrow_{(v_1,\dots,v_m)} \Gout$.
\end{enumerate}
\label{def:EmbeddingGraph}
\end{definition}

Our final result follows from an appropriate (but combinatorial) modification to the proof of \cref{thm:Embed} to account for the generalised conserved quantity construction procedure and the observation that the matrix corresponding to the generated graph has a simple block decomposition. Notice, the integrability of the component graphs provides the guarantee that the constructed conserved quantities are independent, without needing to check the maximal Jacobian minors.

\begin{theorem} If $J$ is an embedding graph, then $J$ is integrable. \hfill\qed
\label{thm:GeneralEmbed}    
\end{theorem}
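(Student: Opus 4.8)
The plan is to prove \cref{thm:GeneralEmbed} by structural induction on the recursive \cref{def:EmbeddingGraph}, reducing everything to a combinatorial strengthening of the argument already carried out in the proof of \cref{thm:Embed}. The base case is precisely when all the inner graphs $\Gin^1,\dots,\Gin^m$ and the outer graph $\Gout$ admit Itoh-style conserved quantities and are integrable with them; here I first dispatch the single-embedding case by \cref{thm:Embed}, then handle the general $m$ by iterating (or, more cleanly, by redoing the block-matrix computation directly). For the inductive step, each $\Gin^j$ is itself an embedding graph, so by the induction hypothesis it is integrable; the key observation to exploit is the remark made in the text that the proof of \cref{thm:Embed} uses the Itoh-style hypothesis \emph{only on the outer graph} $\Gout$ --- the inner graphs need only be integrable and supply a Hamiltonian of the form $F_1 = x_1 + \cdots + x_{\nin}$ and a collection of Casimirs and non-Casimir conserved quantities that can be renamed and raised to powers. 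So the induction goes through provided I verify that every embedding graph, integrable or not yet shown so, still carries a linear Hamiltonian (it does: $H_1 = \sum_i x_i$ is always conserved by the discussion following \cref{eqn:Bracket}) and Casimirs readable from the zero-eigenvectors of its interaction matrix.

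The concrete steps I would carry out are: (i) Write the interaction matrix of $J = (\Gin^1,\dots,\Gin^m)\hookrightarrow_{(v_1,\dots,v_m)}\Gout$ in block form, generalising \cref{eqn:Aout,eqn:Aembed,eqn:R} --- a diagonal block $\mathbf{A}_\text{in}^j$ for each inner graph, the residual block $\AoutTilde$, and off-diagonal blocks that are the relevant rows/columns $\mathbf{r}_j$ of $\Aout$ replicated across the columns/rows of $\Gin^j$ (so a matrix $\mathbf{R}_j$ with $\nin^j$ identical columns $\mathbf{r}_j$). (ii) Reproduce the two flow-balance identities \cref{eqn:OutGoesZero1,eqn:OutGoesZero2} for each embedding vertex $v_j$, using that a Casimir eigenvector $\bm\chi$ of $\Aout$ is $0/1$-valued (Itoh-style) with $\chi_{v_j}\in\{0,1\}$. (iii) Verify the construction from \cref{sec:MultipleEmbeddings}: that $\Gamma^{i_1,\dots,i_k}_{i,j_1,\dots,j_k}$ is conserved by the same cancellation $\bm\alpha_l^T\mathbf{R}_j^T = r\,\mathbf{r}_j^T$ (the analogue of \cref{eqn:rFactor}) combined with $\tilde{\bm\chi}^T\mathbf{R}_j = \mathbf{0}$ (analogue of \cref{eqn:FactorChi}); the power-matching by $r = \max\{r_{i_1},\dots,r_{i_k}\}$ is exactly what makes all blocks scale compatibly. (iv) Verify the $Z$-type quantities are conserved since they are literal renamings of conserved quantities of $\Gout$ with $x_{v_j}\mapsto F_1^{j} = \sum_i x_i^{(j)}$. (v) Check involution by the same case split on $i<j$ over which block the indices lie in as in Claim 3, now with several inner blocks; cross-terms between two \emph{different} inner blocks vanish because there are no edges between distinct inner graphs, so those brackets are of the $T_1 - T_2$ type that cancels after summing over all index occurrences. (vi) Count: the block structure gives $\mathrm{Rank}(\mathbf{A}) = \mathrm{Rank}(\AoutTilde\text{-part}) + \sum_j \mathrm{Rank}(\mathbf{A}_\text{in}^j)$ via the Casimir accounting, and the constructed conserved quantities number $\ssout + \sum_j(\ssin^j - 1)$, so \cref{def:LAIntegrability}(iii) follows by adding the per-graph identities and subtracting $m$. (vii) Independence: here I would invoke the induction hypothesis rather than a Jacobian computation --- the $\Gamma$'s and $Z$'s are built by multiplication and powers from algebraically independent families in \emph{disjoint} variable sets $\xin^1,\dots,\xin^m,\xoutTilde$, so any algebraic relation among them would decompose into relations within each family separately, contradicting the (inductively guaranteed) independence of each family, exactly as in Claim 4.

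The main obstacle I anticipate is step (iii)/(v) in the genuinely recursive case: when an inner graph $\Gin^j$ is an embedding graph that does \emph{not} admit Itoh-style conserved quantities, its conserved quantities are no longer monomials or simple sums of monomials but products of powers of sub-polynomials in disjoint variable blocks. I must check that the bracket-cancellation arguments --- which in \cref{thm:Embed} lean on terms being monomials so that $x_i x_j \partial_i\partial_j$ acts as multiplication --- still go through when the inner factors are composite. The saving grace, which I would make precise, is that the involution computation never differentiates \emph{inside} an inner factor against another inner factor from a different graph (no connecting edges), and differentiation of an inner factor against the outer $P$-factors or against the outer graph's structure only ever happens through the single ``collapsed'' variable $x_{v_j}$, where the inner factor behaves exactly like $x_{v_j}^{r_j}$ up to the scaling already accounted for by the power-matching. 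Formalising ``behaves exactly like $x_{v_j}^{r_j}$'' --- i.e., that for bracket purposes the inner conserved quantity is homogeneous of degree $r_j$ and its total-degree-counting interacts with $\mathbf{R}_j$ precisely as in \cref{eqn:rFactor} --- is the technical heart, and I would isolate it as a lemma before running the induction. Everything else is bookkeeping on block matrices and index sets.
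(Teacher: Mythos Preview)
Your proposal is correct and follows essentially the same approach as the paper: structural induction on \cref{def:EmbeddingGraph}, block-matrix generalisation of \cref{eqn:Aembed}, the observation that only $\Gout$ need admit Itoh-style conserved quantities, and independence inherited from the component graphs. The paper's own ``proof'' is in fact just the two paragraphs preceding the theorem statement, which sketch exactly this plan without working through the details you list in (i)--(vii); your identification of the homogeneity issue in the recursive case (where inner conserved quantities are no longer simple monomials) is a point the paper glosses over entirely, so your write-up would be strictly more complete than what appears in the text.
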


\section{Taxonomy of Integrable Graphs (Up to 6 Vertices)}\label{sec:Taxonomy}
We use the theoretical results developed in the previous sections to organise the dynamics generated by the 21,419 connected oriented directed graphs with up to 6 vertices. All computations were performed in Mathematica, version 14. Except for one figure, all graphs shown in this section were automatically laid out with Mathematica. Of those 21,419 graphs, we find that 57 are integrable. All these graphs admit a sufficient number of Itoh-style conserved quantities that commute under the quadratic bracket. We suspect that 184 graphs produce chaotic behaviour, as a result of numerical evaluation. While they admit a numer of Itoh-style conserved quantities, it is not sufficiently many to satisfy the requirements for Liouville-Arnold integrability. We do not (and cannot yet) prove these graphs have no other non-polynomial conserved quantities, but numerical chaos testing suggests it. The remaining 21,178 graphs asymptotically decay to a lower-order system and do not have an interior fixed point in a higher-order unit simplex.  Our results are summarised in \cref{tab:table_one}.
\begin{table}[htp!]\small
    \centering
    \begin{tabular}{ccccc}
        Vertex Count & Known Integrable & Suspected Chaotic & No Interior Fixed Point & Total\\
        \hline
        3 & 1 & 0 & 1 & 2\\
        4 & 3 & 0 & 31 & 34\\
        5 & 11 & 4 & 520 & 535\\
        6 & 42 & 180 & 20,626 & 20,848\\
        \hline
        \textbf{Totals} & 57 & 184 & 21,178 & 21,419
    \end{tabular}
    \caption{Breakdown by vertex size of all graphs that are integrable}
    \label{tab:table_one}
\end{table}

To derive these statistics, we used Brendan McKay’s database of graphs \cite{M24}. We first identified those graphs that were strongly connected; i.e., those graphs for which there is a directed path between any pair of vertices \cite{G23a}. The dynamics of these graphs necessarily asymptotically decay to the dynamics of a lower-order system, as shown by Paik and Griffin \cite{PG23}. This reduced the number of potentially integrable graphs down to 4,313.

To sort through these remaining graphs, we used an algorithm for analyzing each graph:
\begin{enumerate}
    \item Numerically check if the Dynamical System produced by the graph has an interior fixed point.
    \item Look for Itoh-style conserved quantities through the Casimirs of the graph, the graph structure, and Mathematica calculation.
    \item If there were not sufficient conserved quantities, we ran the Wolfe Algorithm \cite{WSSV85,S96} to confirm the numerical existence of the maximum Lyapunov exponent greater than zero.
    \item If there were sufficient conserved quantities, we (manually) categorised the graph in the taxonomy.
\end{enumerate}

For graphs with no interior fixed point, the dynamics asymptotically approach a face of the unit simplex. In this case,  trajectories may assymptotically approach an $n$-torus in a foliation of the phase space restricted to a face of the simplex making up the (higher dimensional) phase space. Consequently, there dynamics cannot be Liouville-Arnold integrable because the trajectories cannot foliate the original phase space. They may (in some sense) asymptotically foliate a lower-dimensional phase space or they may be integrable (solvable) in a sense that is distinct from \cref{def:LAIntegrability}. We are not aware of such a result in the literature.

We numerically identified the remaining graphs with an interior fixed point. It is possible, but not computationally efficient, to use Mathematica's symbolic solver to identify these graphs. Instead, we numerically integrated the system of differential equations generated from each graph to obtain numerical flows $x_i(t)$, where $i$ ranged over the number of vertices in the graph in question, using an initial condition $\mathbf{x} \in \mathrm{int}(\Delta_{n-1})$ for appropriate $n$. We then numerically computed, 
\begin{equation*}
    \bar{x}_i = \frac{1}{T_f - T_0}\int_{T_0}^{T_f} x_i(t) \,dt,
\end{equation*}
where $T_0 \gg 0$. Since the initial condition is in the interior of an appropriate unit simplex, we know that $\bar{x}_i > 0$ for all $i$ if the dynamics admit an interior fixed point. If there is some $i$ for which $\bar{x}_i < \tau$, where $\tau \ll 1$ is a threshold, we then conclude the dynamics admit no interior fixed point, and they collapse to the boundary of $\Delta_{n-1}$. In that case, the trajectories asymptotically approach the orbits of a lower-dimensional replicator equation. For our numerical experiments, we used $\tau = 10^{-6}$ and via Mathematica's \texttt{Chop} function and we set $T_0 = 250$.

Then, we attempted to find a sufficient number of Itoh-style conserved quantities for each remaining graph. For the graphs that did not have enough Itoh-Style conserved quantities, we used the Wolf Algorithm \cite{WSSV85,S96} find numerical evidence of chaotic behaviour. Many of the graphs exhibit weak chaos with very small (but positive) Lyapunov exponents. Consequently, we ran the Wolf algorithm over several initial conditions for each graph to identify possible signs of chaotic behaviour in the dynamics. In this way, all 21,419 were categorised (see \cref{tab:table_one}).

While executing the Wolf Algorithm, we found that several 5 vertex graphs exhibited signs of numerical chaos, suggesting that these are the smallest possible graphs to produce chaotic dynamics. One of the simplest of these graphs is shown in \cref{fig:bowtie} and appears to be an example of a coupled oscillator in the replicator dynamics, as it is composed of two directed three cycles joined at a single vertex. 
\begin{figure}[htp!]
\begin{tblr}{
  colspec = {Q[c, m]Q[c, h]},
  stretch = 0
}
\includegraphics[width=0.45\textwidth]{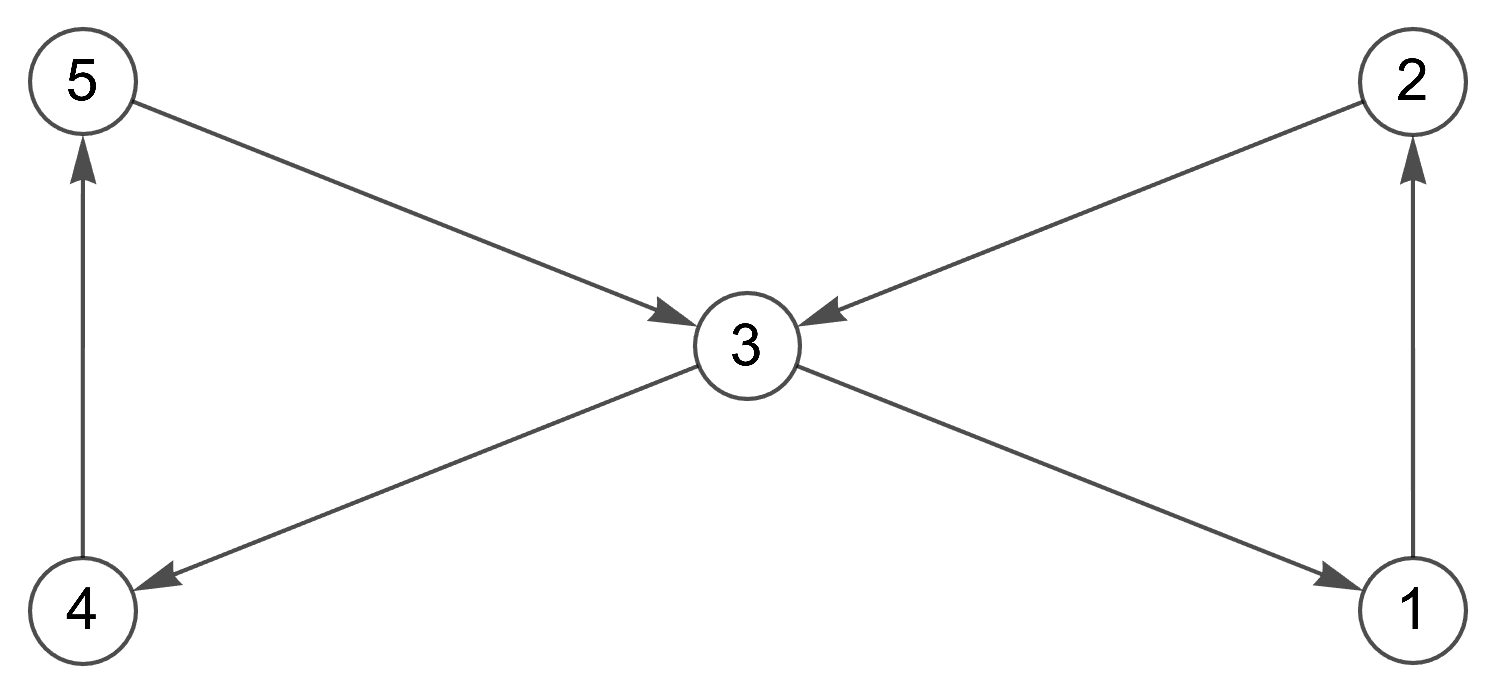}&
\includegraphics[width=0.45\textwidth]{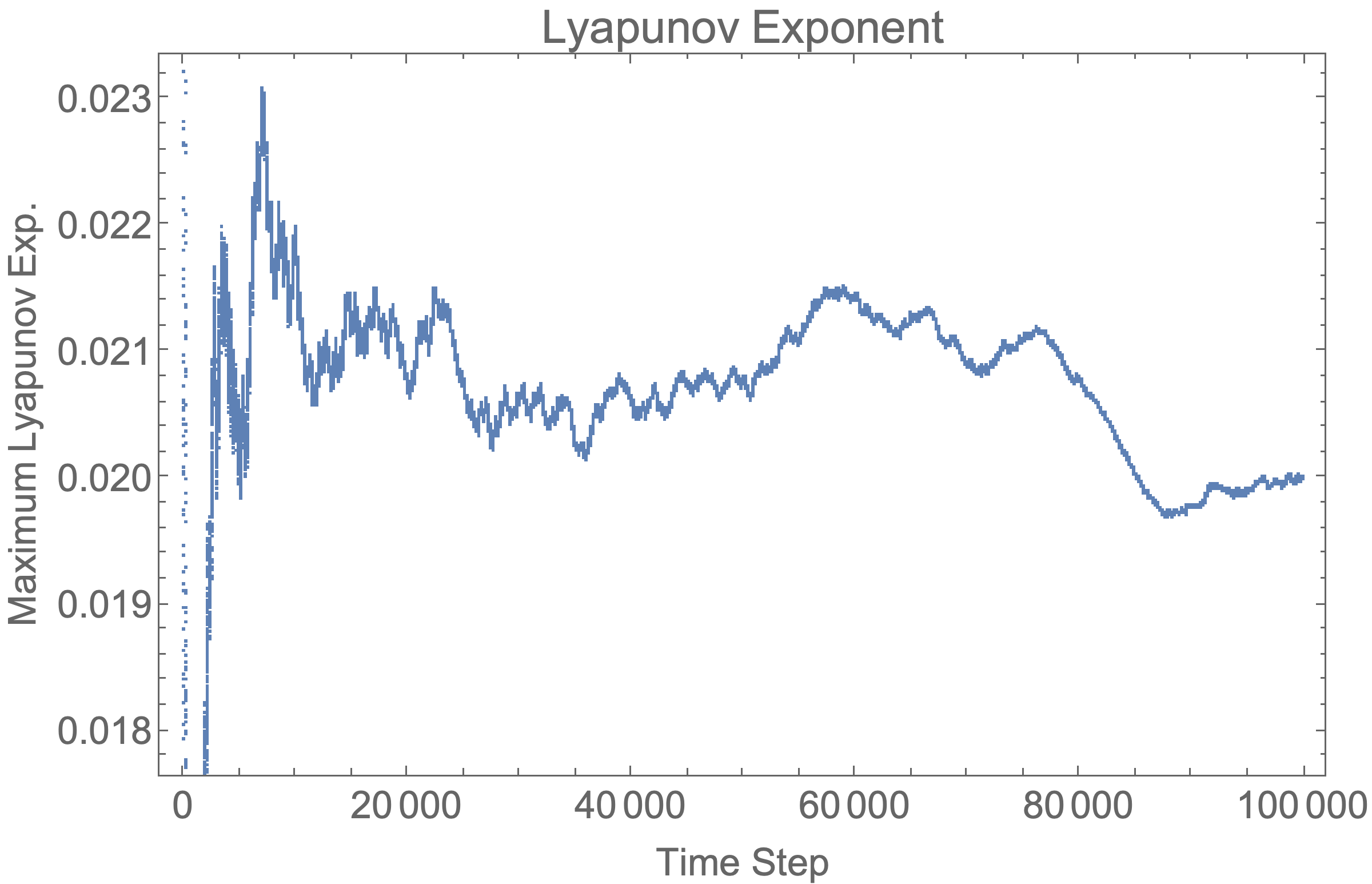}
\end{tblr}
\caption{(Left) A simple graph that produces chaotic dynamics appears to be a coupled oscillator. (Right) Output of the Wolf Algorithm indicating that the system has positive Lyapunov exponent.}
\label{fig:bowtie}
\end{figure}
While we have not formally proved that this graph admits dynamics with positive Lyapunov exponent, its small size may make it amenable to such an analysis in future work.

It is worth noting that each graph that does not have Itoh-style conserved quantities may have other conserved quantities. However, every graph that lacked Itoh-style conserved quantities displayed evidence of positive Lyapunov exponent from the Wolf Algorithm. Therefore, either each graph has more conserved quantities and the Wolf Algorithm is a flawed method of analysis for these systems or no more conserved quantities exist.

In our investigation, we have found that, with limited exceptions, integrable graphs up to 6 vertices are characterised by a modification to a directed cycle (i.e., an element of the Volterra lattice). We note that this modification need not preserve symmetry, as shown by the skip-vertex graphs. Based on work presented in this paper as well as the previous work done by Bogoyavlenskij  \cite{BIY08} and Evripidou  et al.\cite{EKV22}, we can categorise all but three six vertex graphs into four families: Bogoyavlenskij  graphs (denoted $B(n,k)$), Skip-vertex graphs, Cloned graphs and Embedding graphs. The remaining graphs are considered Holdout graphs. Raw counts of each family are shown in \cref{tab:graphfamilies}.
\begin{table}[htp!]
    \centering
    \begin{tabular}{lccccc}
    &\multicolumn{4}{c}{Vertex size} & \\ 
    \hline
        Family & 3 & 4 & 5 & 6 & Total\\
        \hline
        $B(n, k)$ & 1 & 1 & 2 & 2 & 6\\
        Skip-Vertex & 0 & 1 & 0 & 3 & 4\\
        Cloned & 0 & 1 & 7 & 24 & 32\\
        Embedding & 0 & 0 & 1 & 11 & 12\\
        Holdout & 0 & 0 & 0 & 3 & 3
    \end{tabular}
    \caption{Breakdown of Integrability by Family and Vertex Size}
    \label{tab:graphfamilies}
\end{table}
We provide illustrative examples from each family in \ref{sec:Examples}. 

\subsection{Holdout Graphs}

There are three integrable graphs that do not fall into any of these families. Of those three, two appear similar. They  both are cycles (a four cycle and a five cycle, respectively) with an extra vertex with equal in and out degrees of two. The graphs and their corresponding conserved quantities are shown in \cref{fig:2 Holdouts}. Notice these all have Itoh-style conserved quantities made of non-edges and cycles, as expected. 
\begin{figure}[htp!]
\begin{tblr}{
  colspec = {Q[c, m]Q[c, h]},
  stretch = 0
}
\includegraphics[width=0.4\textwidth]{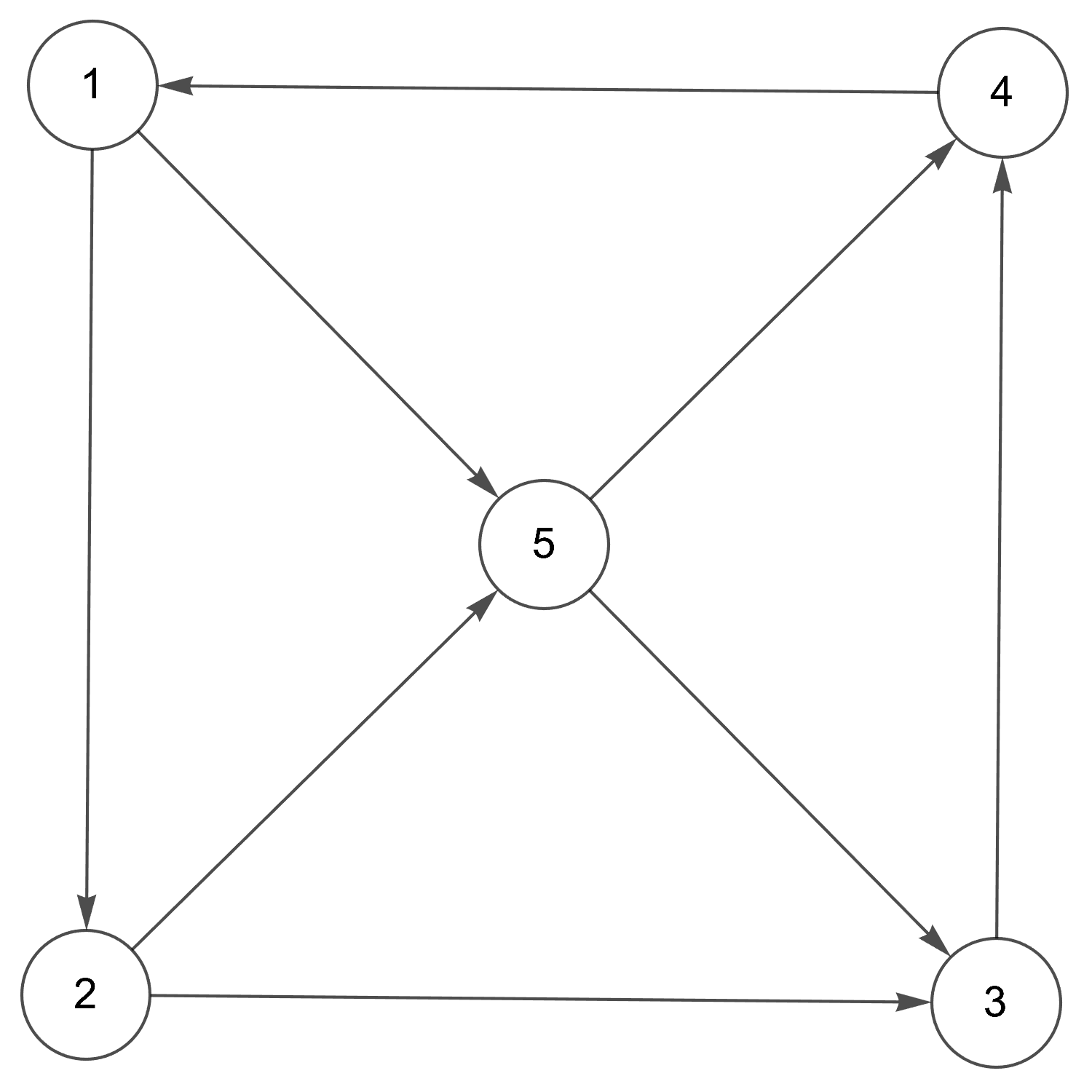} & 
\includegraphics[width=0.45\textwidth]{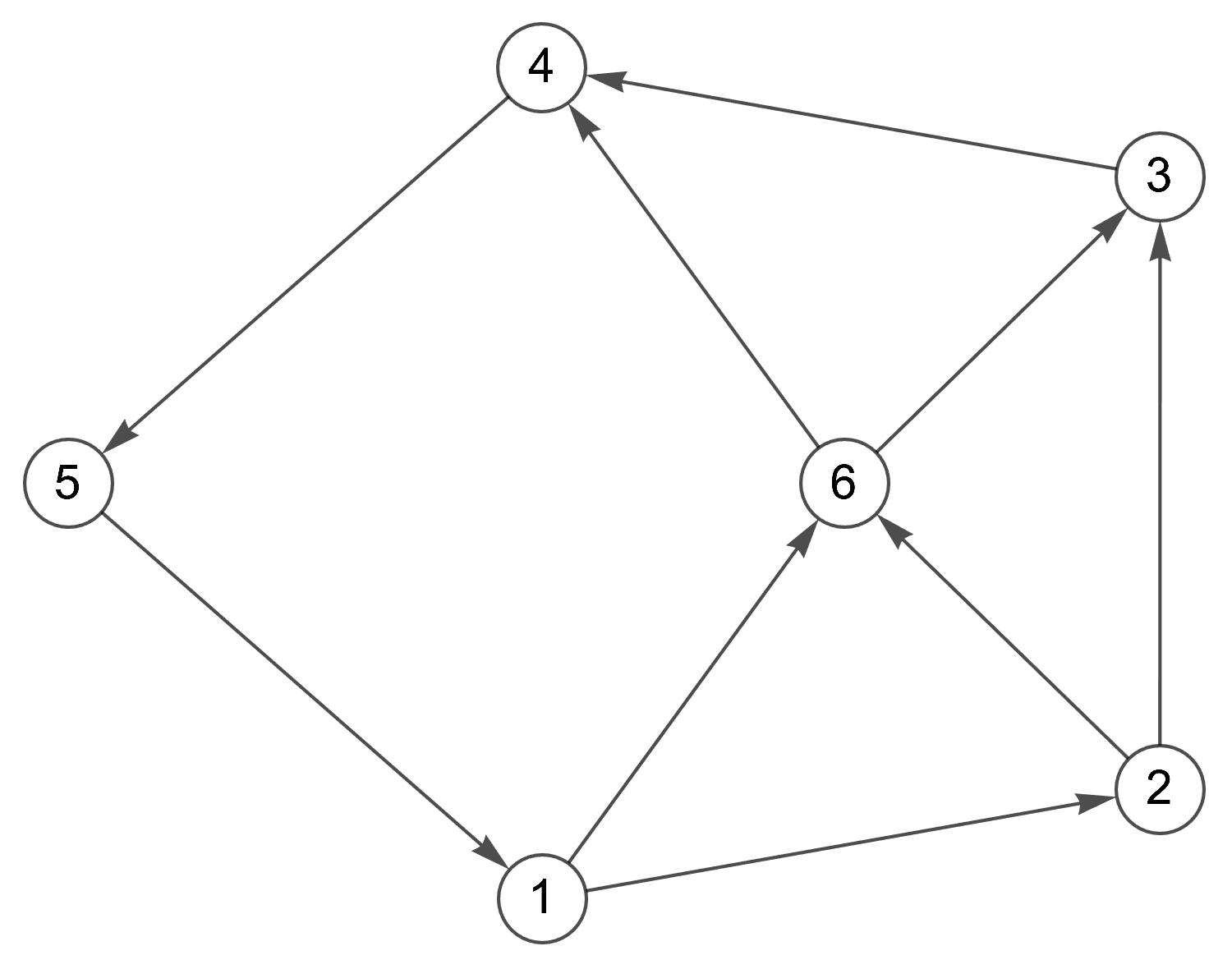}\\
\begin{minipage}{0.45\textwidth}
\begin{align*}
&H_1 = x_1+x_2+x_3+x_4+x_5\\
&H_2 = x_1x_3 + x_2x_4\\
&H_3 = x_1x_4x_5\\
&H_4 = x_1x_2x_3x_4
\end{align*}
\end{minipage}& 
\begin{minipage}{0.45\textwidth}
\begin{align*}
&H_1 = x_1+x_2+x_3+x_4+x_5+x_6\\
&\begin{aligned}
&H_2 = x_1x_3+x_1x_4+x_2x_4+\\
&\hspace*{4em}x_2x_5+x_3x_5+x_5x_6\\
\end{aligned}\\
&H_3 = x_1x_4x_5x_6\\
&H_4 = x_1x_2x_3x_4x_5
\end{align*}
\end{minipage}
\end{tblr}
\caption{2 Holdout graphs that appear to be related}
\label{fig:2 Holdouts}
\end{figure}

Further analysis suggests these graphs are part of a large family containing $n+1$ vertices with two in-edges leading from Vertices $i$ and $i+1$ to Vertex $n+1$ and two out-edges leading to Vertex $n+1$ to Vertices $i+2$ and $i+3$. The next two graphs in this family are shown in \cref{fig:TwoSpoke}. 
\begin{figure}[htbp]
\centering
\begin{tblr}{
  colspec = {Q[c, m]Q[c, h]},
  stretch = 0
}
\includegraphics[width=0.45\textwidth]{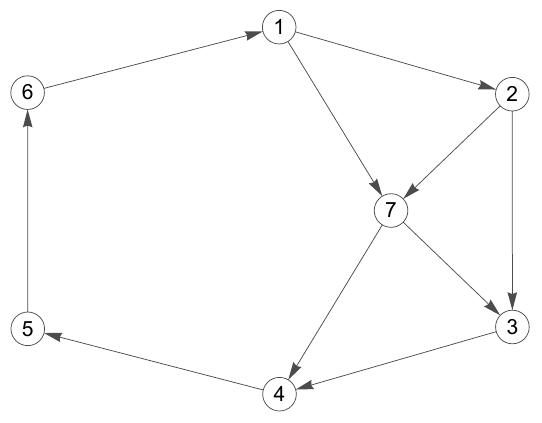} & 
\includegraphics[width=0.45\textwidth]{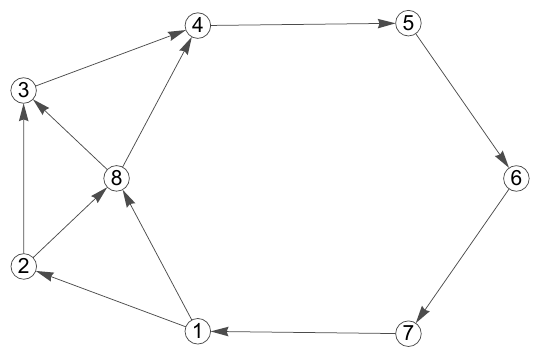}
\end{tblr}
\caption{The next two elements of a ``two-spoke'' family are integrable, suggesting a new infinite family.}
\label{fig:TwoSpoke}
\end{figure}
These graphs are integrable, having a sufficient number of commuting (Itoh-style) conserved quantities that commute. (See the SI for details.) It is left as future work whether these graphs are part of an even larger family related to wheel graphs, or if this characterisation is complete. We leave this analysis for future research. It is worth noting that these graphs also can be described as modifications of directed cycles.  

One final integrable graph with six vertices stands alone. It can be described as a balanced tournament with an extra vertex. The balanced tournament is composed of Vertices 1-5. This is not the archetypal balanced tournament considered by Paik and Griffin \cite{PG23} or that emerges from the Bogoyavlenskij  construction\footnote{The archetypical balanced tournament on five vertices describes rock-paper-scissors-Spock-Lizard, in that order.}. Instead, it is isomorphic to this archetype. Vertex 6 has the opposite neighbourhood to Vertex 1. The graph and its conserved quantities are shown in \cref{fig:Headscratcher}.
\begin{figure}
    \centering
    \includegraphics[width=0.5\linewidth]{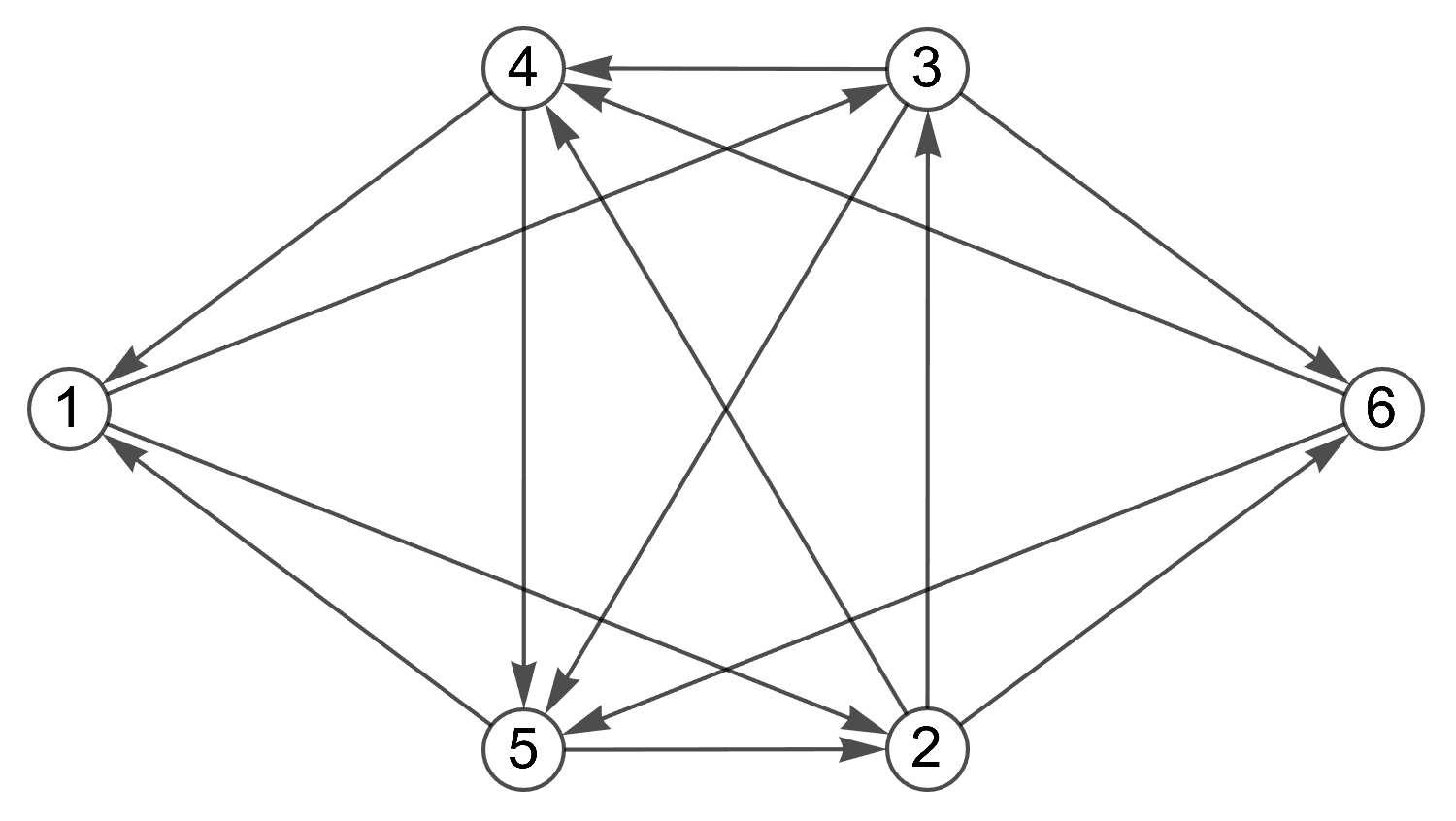}\\
    \begin{align*}
        &H_1 = x_1 + x_2 + x_3 + x_4 + x_5 + x_6\\
        &H_2 = x_1x_2x_4+x_1x_3x_4+x_1x_3x_5+x_2x_3x_5+x_2x_4x_5+x_2x_5x_6\\
        &H_3 = x_1x_6\\
        &H_4 = x_1x_2x_3x_4x_5
    \end{align*}
    \caption{An unclassified hold-out graph exhibiting a form of chiral symmetry.}
    \label{fig:Headscratcher}
\end{figure}
This graph suggests the potential for another operation, `anti-cloning', in which a vertex is introduced with a mirror image neighbourhood, imposing a kind of chiral symmetry into the graph. However, when this operation is performed  on the 5-cycle, the resulting graph is in the family of graphs suspected of chaotic behaviour by numerical analysis. Thus, this final graph indicates that, while we have been able to categorise the vast majority of the integrable graphs, there are still missing pieces to this puzzle to consider in future work.  

\section{Discussion and Future Directions}\label{sec:Conclusion}
In this paper, we generalised the concept of embeddings first introduced by Paik and Griffin \cite{PG23}. While Paik and Griffin showed that embeddings of tournaments were integrable, we extended this concept to show that any (set of) integrable graphs with Itoh-style conserved quantities can be embedded into another integrable graph with Itoh-style conserved quantities to produce an integrable graph. We also introduced a new family of integrable graphs (the skip-graph family) and used these graphs along with our results on graph embeddings and the prior work by Evripidou  et al. \cite{EKV22} to completely characterise the behaviour of the dynamics generated by all 21,419 oriented directed graphs with six or fewer vertices. In particular, we found 57 distinct integrable graphs, of which 54 fit into a taxonomy of four categories. Three hold-out graphs suggest the existence of both new integrable families as well as graph operations that preserve integrability. 

In addition to the integrable dynamics we have discussed, our analysis of graph structures and their resulting dynamics shows numerical evidence for Hamiltonian chaos (including weak chaos). In some cases (e.g., the graph shown in \cref{fig:bowtie}), the dynamics are simple enough that they may admit a formal proof for the existence of positive Lyapunov exponents. Of note, this chaotic behaviour is a function of the graph structure itself, a property also observed by Griffin et al. in networked replicator dynamics \cite{GSB22}. For some graphs, simple addition (or subtraction) of an edge will dramatically alter the underlying dynamics. For example, the graph shown in \cref{fig:ToFromChaos} produces chaotic behaviour, but removing the edge from Vertex 2 to Vertex 5 recovers the directed five cycle, an integrable graph. 
\begin{figure}[htbp]
\centering
\begin{tblr}{
  colspec = {Q[c, m]Q[c, h]},
  stretch = 0
}
\includegraphics[width=0.45\textwidth]{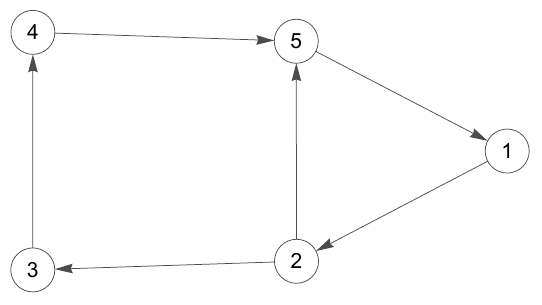}&
\includegraphics[width=0.45\textwidth]{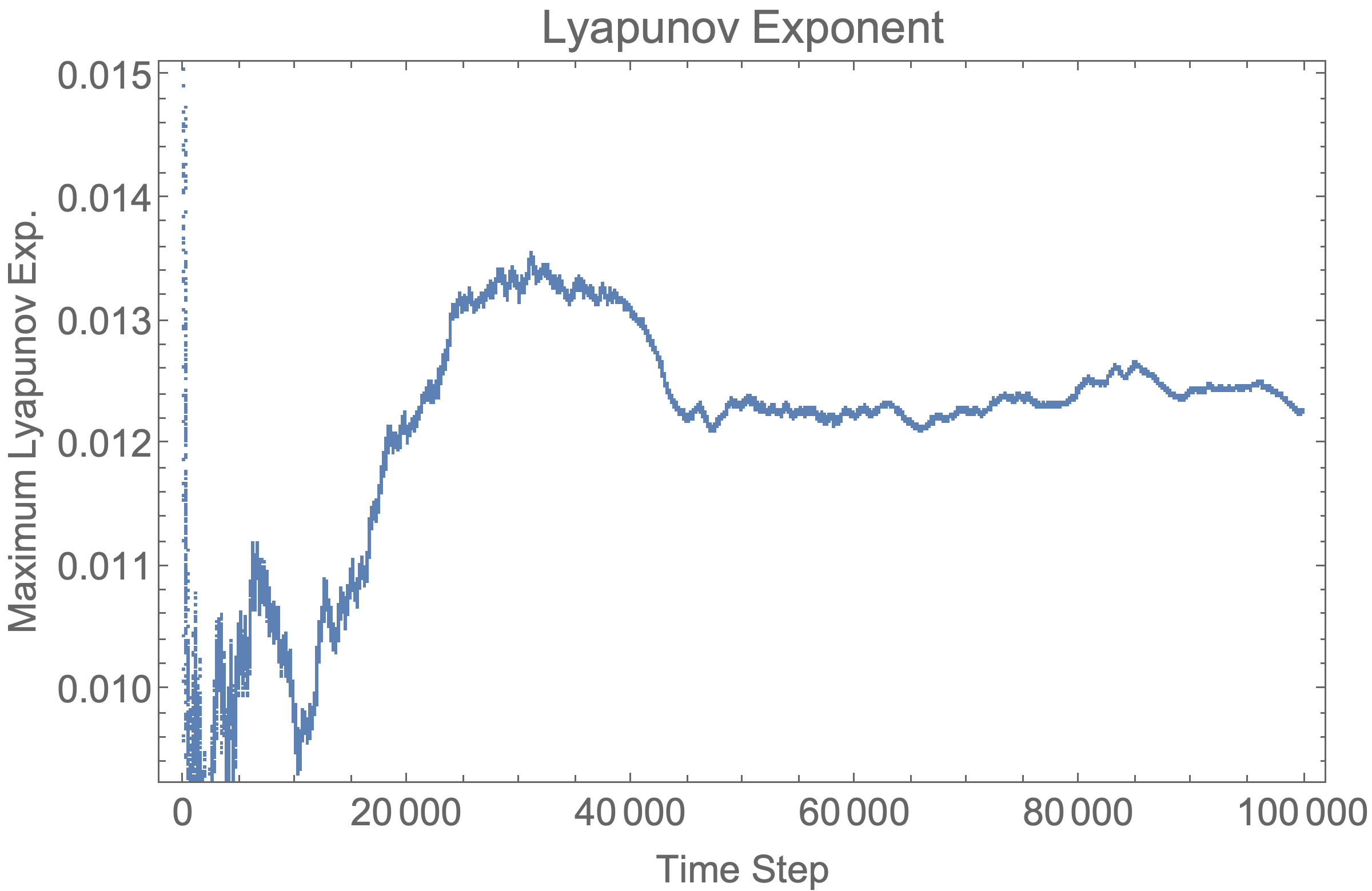}
\end{tblr}
\caption{A graph that produces chaotic behaviour, but that will produce integrable (quasi-periodic) behaviour if a single edge is removed.}
\label{fig:ToFromChaos}
\end{figure}
This suggests the dynamics arising from graphs may be ideal toy models for experimenting with the effects of the emergence and destruction of feedback loops in complex systems. Such feedback loops are known to occur in both biological and environmental dynamics, making the models and results presented in this paper potentially relevant in understanding complex natural systems.

In summary, this work represents a continuation of work begun by Moser, Kac and van Moerbeke on the Volterra lattice and extended by Itoh, Bogoyavlenskij  and Evripidou, Kassotakis and Vanhaecke. However, when taken as a whole, it suggests a program of research far more extensive in scope. There is a structure to the graphs that generate integrable replicator (Lotka-Volterra) dynamics, with families of ``seed graphs'' being used in cloning or embedding operations to generate sets of integrable graphs. It is unclear how many of these seed graph families there are, and whether they are all related in some way to the directed cycle, as we suspect. We note that the smallest integrable skip-graph has $4$ vertices; therefore this family starts with four vertices. It is an open question whether there are other integrable families of graphs whose smallest member is larger than four vertices and that is not constructed by embedding or cloning. Moreover, it is unclear whether there are other operations beyond cloning and embedding that generate new integrable graphs from integrable graphs. The final hold-out graph in our study suggests either a new family or a new operation that generates integrable graphs from other integrable graphs. This line of research is similar to the search for structure in the simple groups, and has the potential to quantify deep connections between combinatorial structures and integrable dynamics.

\section*{Acknowledgements}
M.V. was supported in part by an Erickson Discovery Grant from the Pennsylvania State University. C.G. was supported in part by the National Science Foundation under grant CMMI-1932991. M.V. and C.G. both thank the anonymous reviewers for their feedback.

\section*{Data and Code Availability}
Mathematica notebooks are provided as supplementary materials and contain the code needed to reproduce the results in this paper.

\appendix

\section{A Double Embedding}\label{sec:DoubleEmbedding}
By way of example, we illustrate the conserved quantities generated when two embedded graphs are themselves embedded in a graph with Itoh-style conserved quantities. The graph in question is shown in \cref{fig:DoubleEmbedding}.
\begin{figure}[htbp]
\centering
\includegraphics[width=0.5\textwidth]{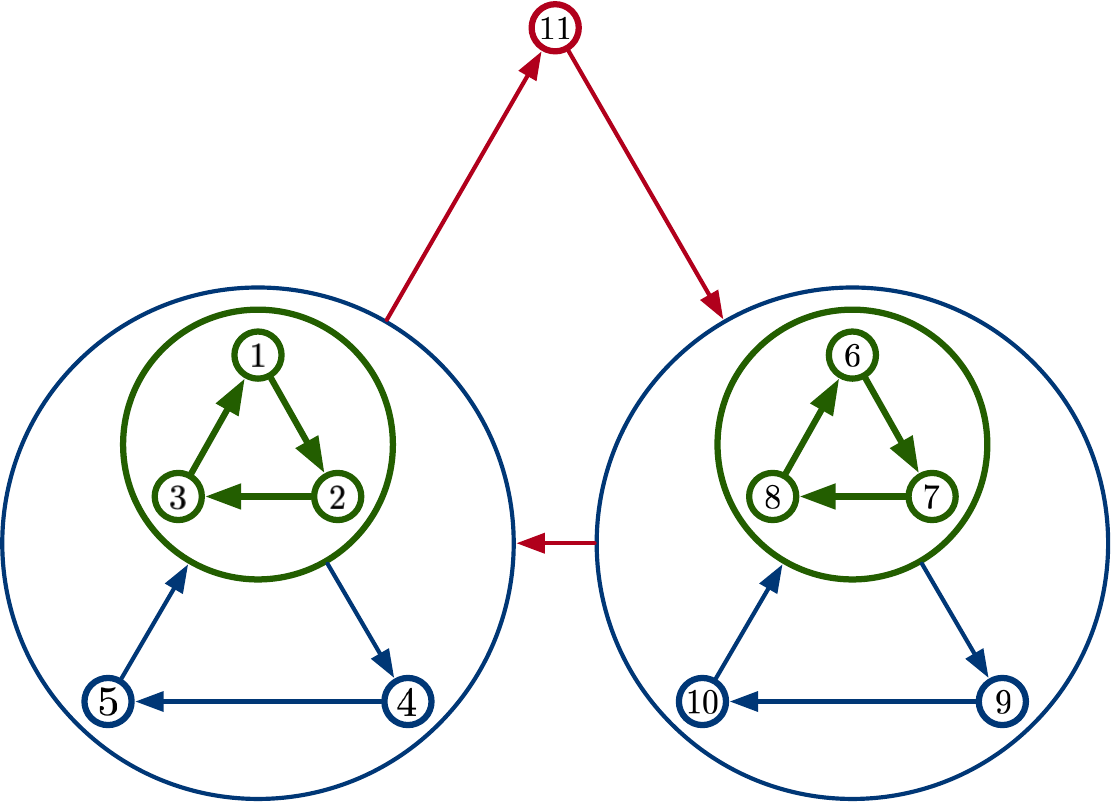}
\caption{Graphs can be recursively embedded within each other to create new integrable dynamics.}
\label{fig:DoubleEmbedding}
\end{figure}
As shown in \cite{PG23}, the conserved quantities for one of the embedded graphs are,
\begin{align}
    &H_1 = x_1 + x_2 + x_3 + x_4 + x_5\\
    &H_2 = (x_1 + x_2 + x_3)x_4x_5\\
    &H_3 = x_1x_2x_3x_4^3x_5^3.
\end{align}
These also follow immediately from \cref{thm:Embed}. Consequently, we can read off ten conserved quantities.
\begin{align*}
    &H_{10} = (x_1x_2x_3x_4^3x_5^3)(x_6x_7x_8x_9^3x_{10}^3)x_{11}^9\\
    &H_9 = \left[(x_1 + x_2 + x_3)x_4x_5\right]^3(x_6x_7x_8x_9^3x_{10}^3)x_{11}^9\\
    &H_8 = (x_1x_2x_3x_4^3x_5^3)\left[(x_6 + x_7 + x_8)x_9x_{10}\right]^3x_{11}^9\\
    &H_7 = (x_1 + x_2 + x_3 + x_4 + x_5)^9(x_6x_7x_8x_9^3x_{10}^3)x_{11}^9\\
    &H_6 = (x_1x_2x_3x_4^3x_5^3)(x_6 + x_7 + x_8 + x_9 + x_{10})^9x_{11}^9\\
    &H_5 = \left[(x_1 + x_2 + x_3)x_4x_5\right]\left[(x_6 + x_7 + x_8)x_9x_{10}\right]x_{11}^3\\
    &H_4 = (x_1 + x_2 + x_3 + x_4 + x_5)^3\left[(x_6 + x_7 + x_8)x_9x_{10}\right]x_{11}^3\\
    &H_3 = \left[(x_1 + x_2 + x_3)x_4x_5\right](x_6 + x_7 + x_8 + x_9 + x_{10})^3x_{11}^3\\
    &H_2 = (x_1 + x_2 + x_3 + x_4 + x_5)(x_6 + x_7 + x_8 + x_9 + x_{10})x_{11}\\
    &H_1 = (x_1 + x_2 + x_3 + x_4 + x_5) + (x_6 + x_7 + x_8 + x_9 + x_{10}) + x_{11}
\end{align*}
Computation by hand or using the computational tools included in the SI shows that these quantities commute under the bracket, as expected. 

\section{Selected Examples of Integrable graphs from the Defined Families}\label{sec:Examples}

\subsection{Bogoyavlenskij  Family}
The Bogoyavlenskij  family consists of all graphs $B(n,k)$, including the tournament graphs with an odd number of vertices. Technically, tournaments with $n$ even and $k = \tfrac{n}{2}$ are not Bogoyavlenskij  graphs and are not integrable, as their dynamics asymptotically converge to a lower order replicator dynamic. Tournaments are considered separately by Bogoyavlenskij  \cite{B88} and by Itoh \cite{I87}, who provides Itoh-style conserved quantities. In \cite{I87}, Itoh-style conserved quantities for all graphs in the Bogoyavlenskij  family are provided. The Bogoyavlenskij  family is characterised by their high degree of symmetry, as shown in \cref{fig:Bnk}. Notice the conserved quantities of these graphs are formed from combinations of cycles and non-cycles (non-edges). 

\begin{figure}[htbp]
\begin{tblr}{
  colspec = {Q[c, m]Q[c, h]},
  stretch = 0
}
\includegraphics[width=0.45\textwidth]{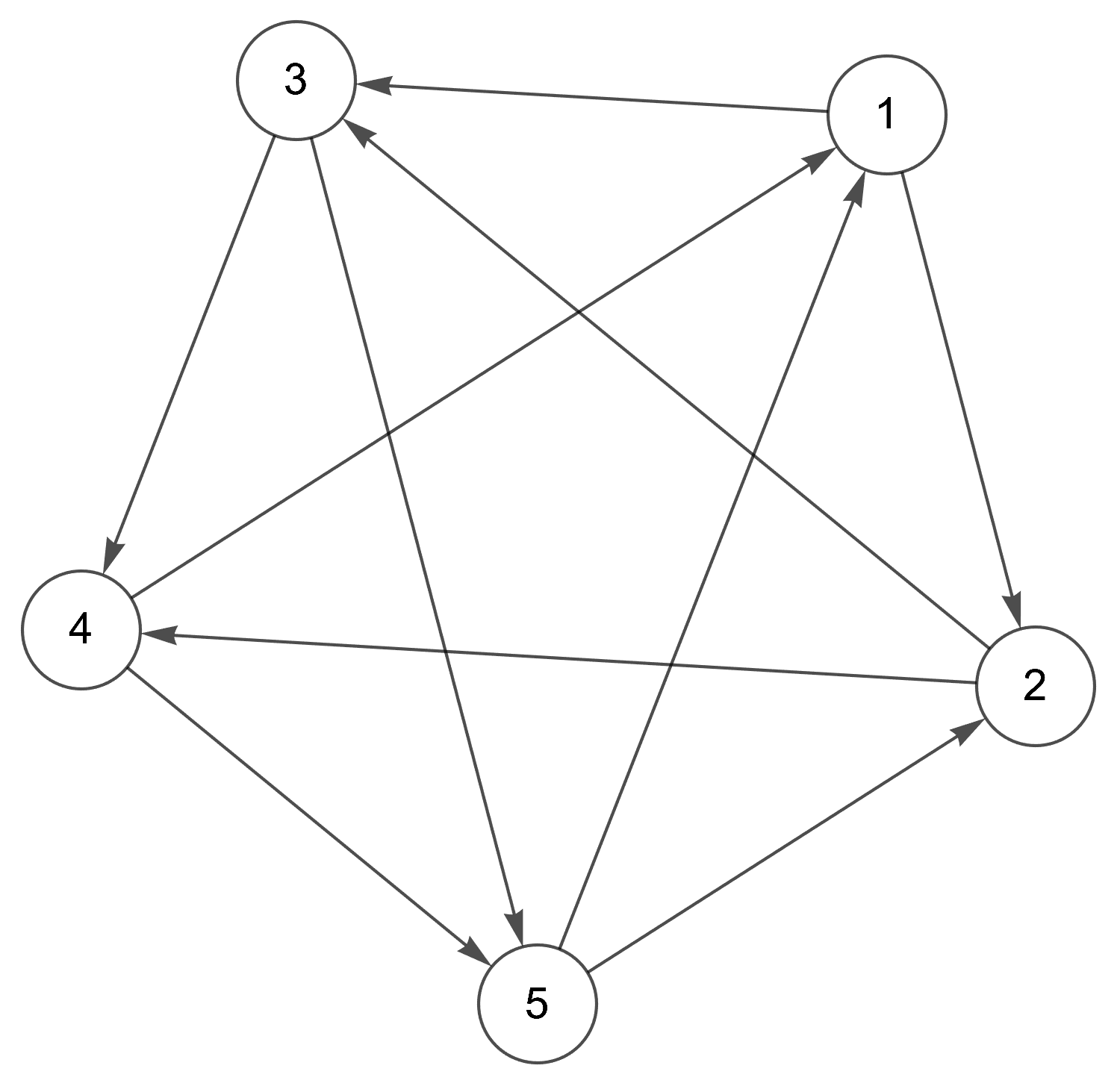} &  
\includegraphics[width=0.45\textwidth]{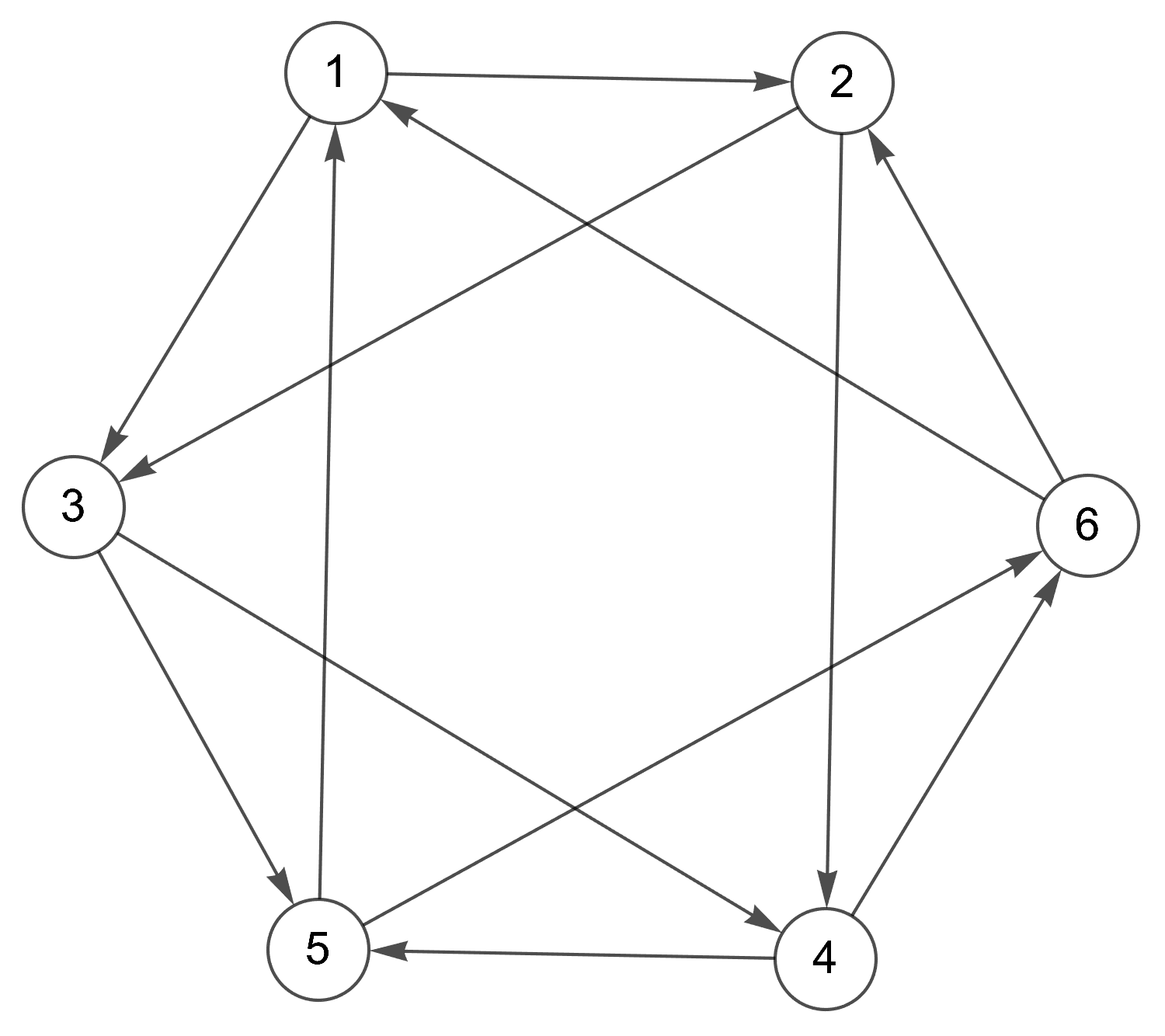}\\
\begin{minipage}{0.45\textwidth}\small\begin{align*}
    &H_1 = x_1 + x_2 + x_3 + x_4 + x_5\\
    &\begin{aligned}
    &H_2 = x_1x_2x_3 + x_2x_3x_4 + \\
    &\hspace*{4em}x_3 x_4 x_5 + x_1x_4x_5 + x_1x_2x_5\\
    \end{aligned}\\
    &H_3 = x_1x_2x_3x_4x_5
\end{align*}\end{minipage} & 
\begin{minipage}{0.45\textwidth}\small\begin{align*}
    &H_1 = x_1+x_2+x_3+x_4+x_5+x_6\\
    &H_2 = x_1 x_4+x_2 x_5+x_3 x_6\\
    &H_3 = x_1 x_3 x_5+x_2 x_4 x_6\\
    &H_4 = x_1 x_2 x_4 x_5+x_2 x_3 x_6 x_5+x_1 x_3 x_4 x_6\\
    &H_5 = x_1 x_2 x_3 x_4 x_5 x_6
\end{align*}
\end{minipage}
\end{tblr}
\caption{(Left) Balanced tournament graph with five vertices (species). (Right) Bogoyavlenskij  graph with the maximum number of edges on six vertices. The conserved quantities are shown below their respective graphs.}
\label{fig:Bnk}
\end{figure}

\subsection{Skip-Vertex Family}
We showed that all members of the skip-vertex family are integrable in \cref{thm:SkipVertex} and constructed their Itoh-style conserved quantities. There are 4 skip-vertex graphs with at most 6 vertices: one with 4 vertices, and three with 6 vertices. Two examples are shown  in \cref{fig:SkipVertex}
\begin{figure}[htbp]
\begin{tblr}{
  colspec = {Q[c, m]Q[c, h]},
  stretch = 0
}
\includegraphics[width=0.45\textwidth]{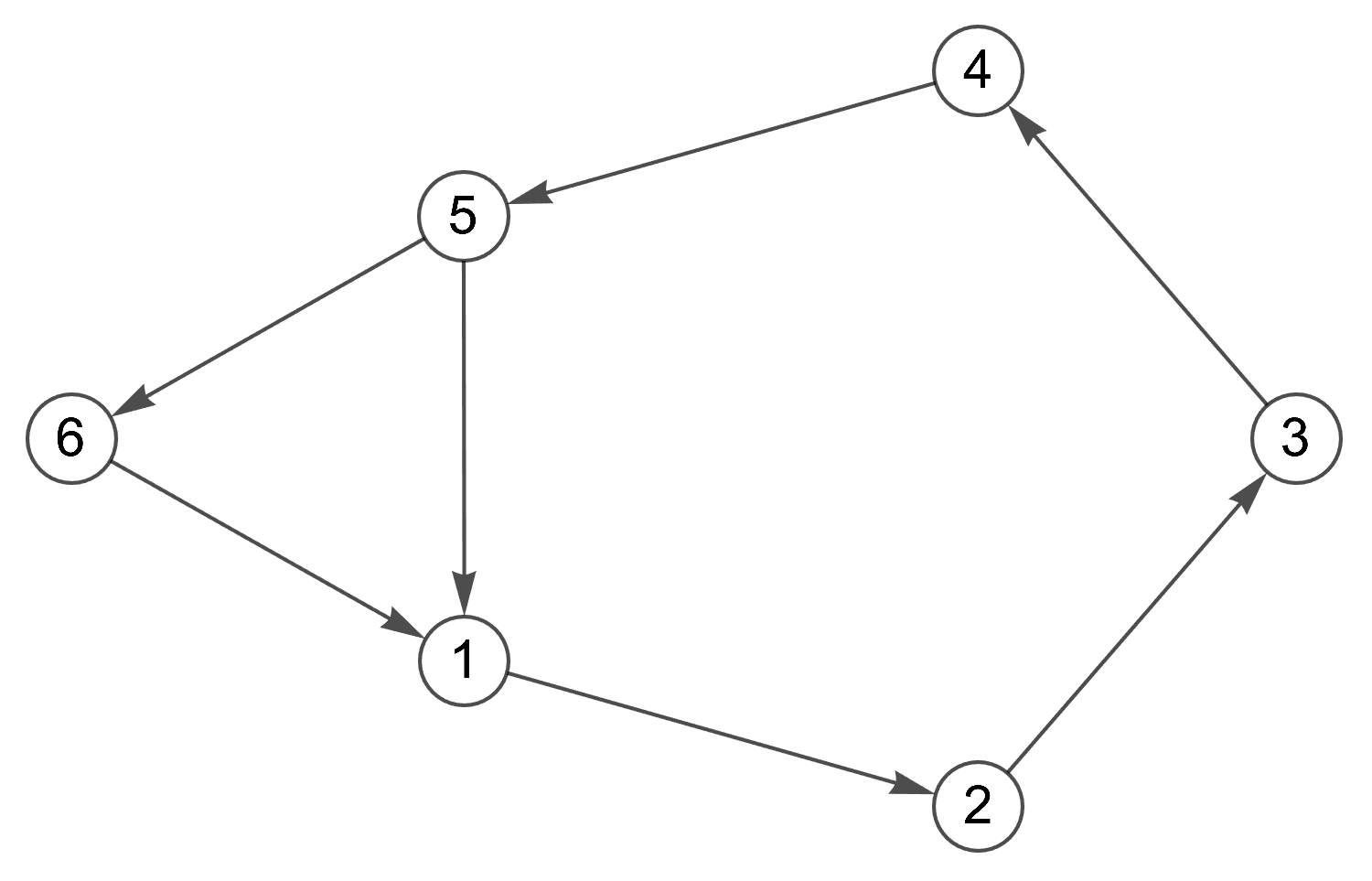} & 
\includegraphics[width=0.45\textwidth]{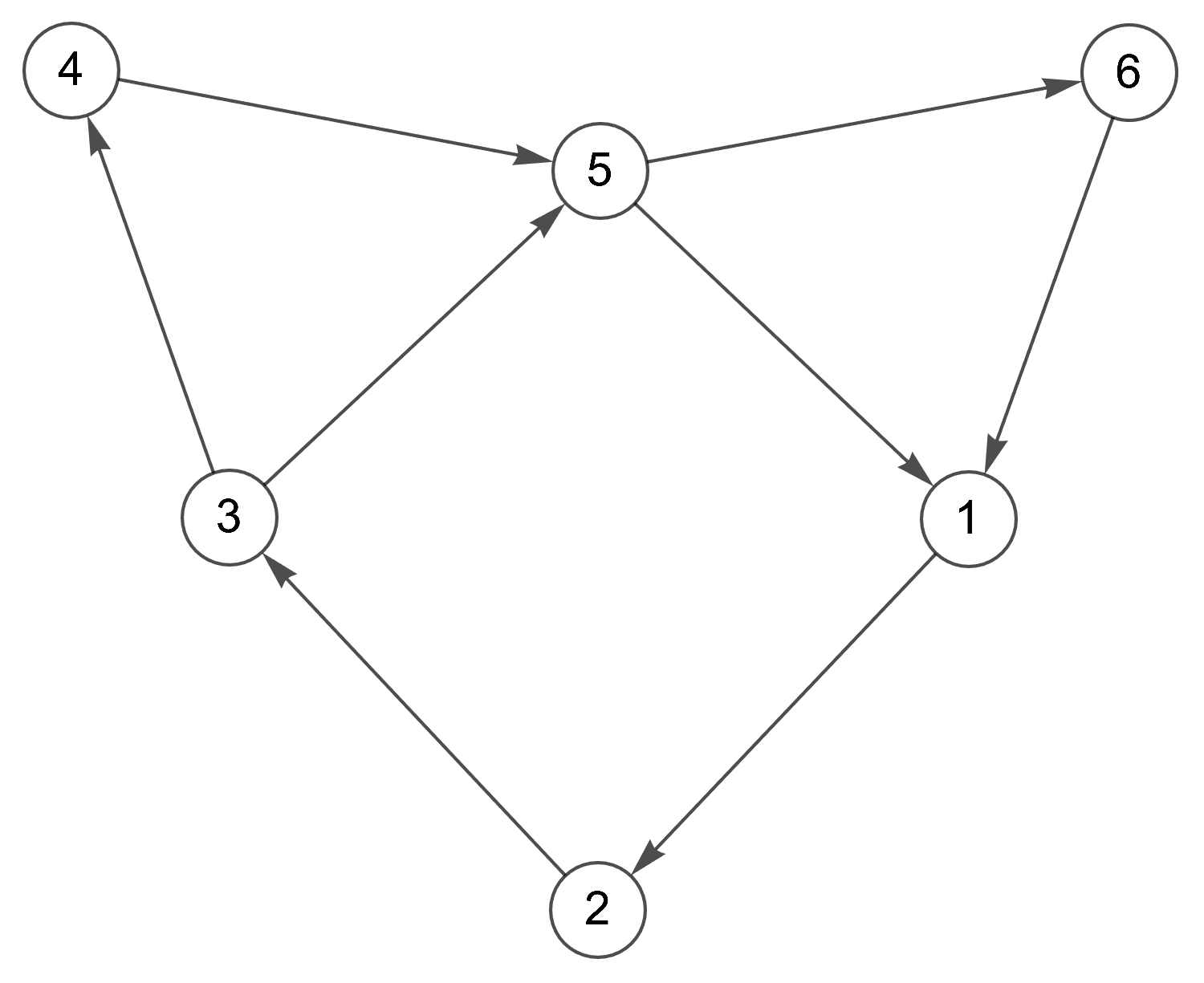}\\
\begin{minipage}{0.45\textwidth}\small\begin{align*}
    &H_1 = x_1+x_2+x_3+x_4+x_5+x_6\\
    &\begin{aligned}
        &H_2 = x_1 x_3+x_5 x_3+x_6 x_3+x_1 x_4+\\
        &\hspace*{4em}x_2 x_4+x_2 x_5+x_2 x_6+x_4 x_6\\
    \end{aligned}\\
    &H_3 = x_2 x_4 x_6\\
    &H_4 = x_1 x_2 x_3 x_4 x_5
\end{align*}\end{minipage} & 
\begin{minipage}{0.45\textwidth}\small\begin{align*}
    &H_1 = x_1+x_2+x_3+x_4+x_5+x_6\\
    &\begin{aligned}
    &H_2 = x_1 x_3+x_6 x_3+x_1 x_4+x_2 x_4+\\
    &\hspace*{4em}x_2 x_5+x_2 x_6+x_4 x_6
    \end{aligned}\\
    &H_3 = x_2 x_4 x_6\\
    &H_4 = x_1 x_2 x_3 x_5
\end{align*}\end{minipage} 
\end{tblr}
\caption{(Left) A skip-graph with one skipped vertex. (Right) A skip-graph with two skipped vertices. Note, both skipped vertices have the same parity.}
\label{fig:SkipVertex}
\end{figure}

\subsection{Cloned Family}
Recall that a graph with a cloned vertex has (at least) two vertices that share the same neighbourhood. The graph shown in \cref{fig:image3} is the simplest possible example of a graph that is integrable as a result of cloning. Notice that Vertex 4 is a clone of Vertex 3. 
\begin{figure}[htp!]
    \centering
    \includegraphics[width=0.45\textwidth]{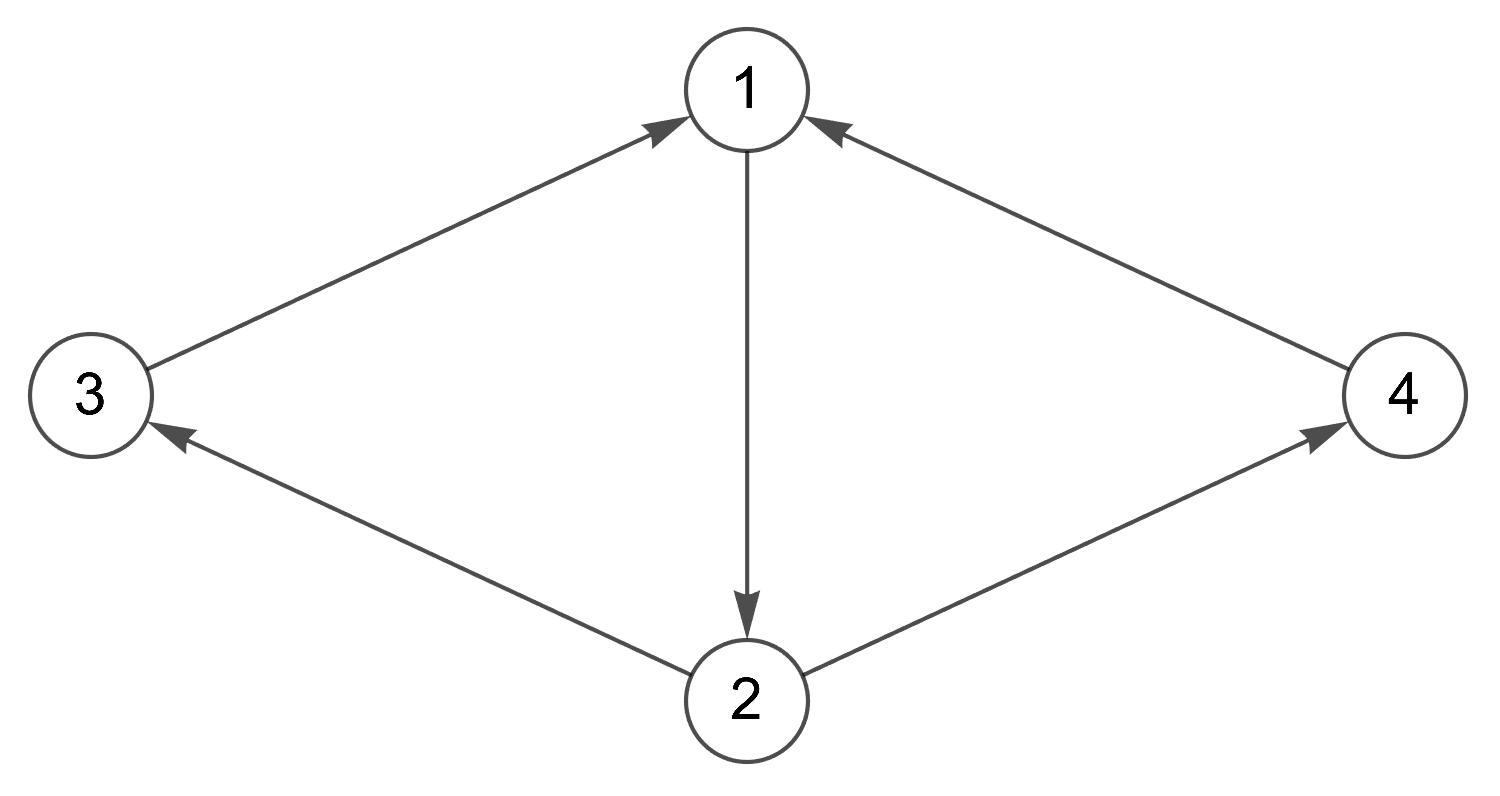}\\
    \begin{align*}
        &H_1 = x_1 + x_2 + x_3 + x_4\\
        &H_2 = x_1x_2x_3\\
        &H_3 = x_1x_2x_4
    \end{align*}
    \caption{The simplest example of cloning to produce an integrable graph. Vertex 3 has the same neighbourhood as Vertex 4. Decloning produces the integrable directed three-cycle.}
    \label{fig:image3}
\end{figure}
Removing Vertex 4, the operation referred to as decloning \cite{EKV22}, reduces the graph to the directed three cycle, which is integrable. It follows from \cref{thm:Declone} that the graph shown in \cref{fig:image3} is integrable. Evripidou  et al. provide a mechanism for constructing conserved quantities using Lax pairs \cite{EKV22}. There are 32 graphs that are determined to be integrable through  decloning. Interestingly, this is the largest subfamily of integrable graphs with six or fewer vertices. Notice this graph admits Itoh-style conserved quantities. 

\subsection{Embedding Family}
There are 12 integrable graphs that are purely derived from embeddings: one graph with five vertices, and eleven with six vertices. We illustrate a directed four-cycle embedded into a directed three-cycle in \cref{fig:image4}, for comparison to the example from \cref{fig:GraphEmbedding}.
\begin{figure}[htp!]
    \centering
    \includegraphics[width=0.45\linewidth]{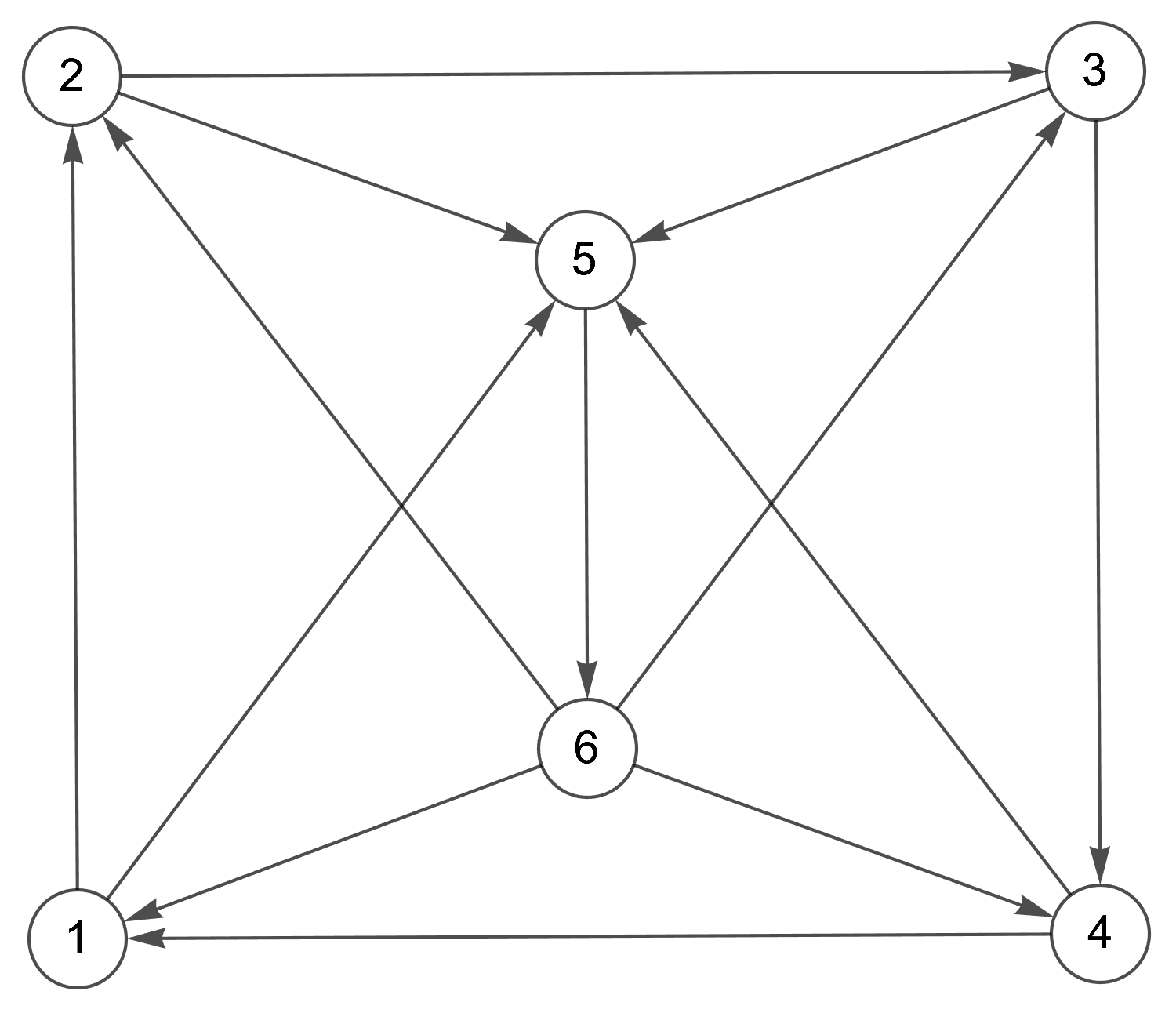}
    \begin{align*}
        &H_1 = x_1 + x_2 + x_3 + x_4 + x_5 + x_6\\
        &H_2 = (x_1 + x_2 + x_3 + x_4)x_5x_6\\
        &H_3 = x_1x_3x_5^2x_6^2\\
        &H_4 = x_2x_4x_5^2x_6^2
    \end{align*}
    \caption{The directed four-cycle embedded into a directed three-cycle}
    \label{fig:image4}
\end{figure}
Interestingly, there is overlap between the embedding family and the cloned family, as illustrated below. In our taxonomy, we count graphs in both the cloned and embedding families as embeddings first. We provide an example using the directed three-cycle embedded in the directed three-cycle, which was shown to be integrable by Paik and Griffin \cite{PG23} and was used by Allesina and Levine to model simple ecological systems \cite{AL11}. Therefore, we refer to this graph as the AL graph. To construct the graph shown in \cref{fig:image5}, one can either clone Vertex 5 to create Vertex 6 from the AL graph (shown in red) or embed a directed three-cycle into Vertex 1 of the graph shown in \cref{fig:image3}.
\begin{figure}[htp!]
    \centering
    \includegraphics[width=0.35\textwidth]{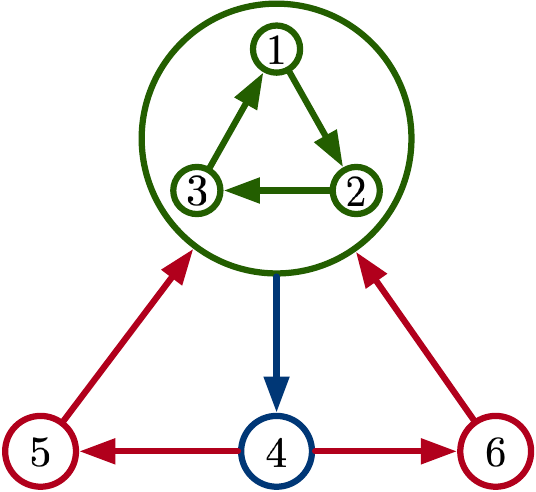}\\
    \begin{align*}
        &H_1 = x_1 + x_2 + x_3 + x_4 + x_5 + x_6\\
        &H_2 = \left(x_1+x_2+x_3\right) x_4 \left(x_5+x_6\right)\\
        &H_3 = x_1 x_2 x_3 x_4^3 x_5^3\\
        &H_4 = x_1 x_2 x_3 x_4^3 x_6^3\\
    \end{align*}
    \caption{Example of a graph that is both an embedding and a cloning. This graph is a cloned version of the AL graph, with Vertices 5 and 6 cloned. It's also an embedding of the 3 cycle into Vertex 1 of the graph shown in  \cref{fig:image3}.}
    \label{fig:image5}
\end{figure}
Notice this also nicely illustrates \cref{thm:Embed} in which we have a graph in the Bogoyavlenskij  family embedded into a graph in a graph that has Itoh-style conserved quantities but is not in the Bogoyavlenskij  family.

\bibliographystyle{iopart-num}
\bibliography{Integrability}
\end{document}